\title{The critical surface fugacity of self-avoiding walks on a rotated honeycomb lattice}
\author{Nicholas R.~Beaton\thanks{nicholas.beaton@lipn.univ-paris13.fr} \\ \ \\ \emph{\small Laboratoire d'Informatique de Paris Nord} \\ \emph{\small Universit\'e Paris 13 (Paris Nord)} \\ \emph{\small 93430 Villetaneuse, France}}
\date{}
\def\x{x_{\rm c}}
\def\e{{\rm e}}
\def\ii{{\rm i}}
\def\y{y_{\rm c}}
\def\SAPPn{{\rm SAPP}_n}
\def\iSAPP{{\rm iSAPP}}
\def\PiSAPP{\mathbb{P}_\iSAPP}
\def\EiSAPP{\mathbb{E}_\iSAPP}
\def\PNiSAPP{\PiSAPP^{\otimes\mathbb{N}}}
\def\PZiSAPP{\PiSAPP^{\otimes\mathbb{Z}}}
\def\bfr{\mathbf{r}}
\def\PPT{P\!P_T}
\def\PPTL{P\!P_{T,L}}
\def\height{\mathsf{H}}
\def\width{\mathsf{W}}
\newtheorem{lem}{Lemma}
\newtheorem{cor}[lem]{Corollary}
\newtheorem{thm}[lem]{Theorem}
\newtheorem{prop}[lem]{Proposition}
\newtheorem{conj}[lem]{Conjecture}
\begin{document}
\maketitle

\begin{abstract}
In a recent paper by Beaton et al, it was proved that a model of self-avoiding walks on the honeycomb lattice, interacting with an impenetrable surface, undergoes an adsorption phase transition when the surface fugacity is $1+\sqrt{2}$. Their proof used a generalisation of an identity obtained by Duminil-Copin and Smirnov, and confirmed a conjecture of Batchelor and Yung. We consider a similar model of self-avoiding walk adsorption on the honeycomb lattice, but with the lattice rotated by $\pi/2$. For this model there also exists a conjecture for the critical surface fugacity, made in 1998 by Batchelor, Bennett-Wood and Owczarek. Using similar methods to Beaton et al, we prove that this is indeed the critical fugacity.
\end{abstract}

\section{Introduction}\label{sec:intro}

Self-avoiding walks (SAWs) have been considered a model of long-chain polymers in solution for a number of decades -- see for example early works by Orr~\cite{Orr1947Statistical} and Flory~\cite{Flory1949Configuration}. In the simplest model one associates a weight (or fugacity) $x$ with each step (or monomer, in the context of polymers) of a walk, and then (for a given lattice) considers the generating function
\[C(x) = \sum_{n\geq0}c_n x^n,\]
where $c_n$ is the number of SAWs starting at a fixed origin and comprising $n$ steps. 

It is straightforward to show (see e.g.~\cite{Madras1993Selfavoiding}) that the limit
\[\mu := \lim_{n\to\infty}c_n^{1/n}\]
exists and is finite. The lattice-dependent value $\mu$ is known as the \emph{growth constant}, and is the reciprocal of the radius of convergence of the generating function $C(x)$. The honeycomb lattice is the only regular lattice in two or more dimensions for which the value of the growth constant is known; its value $\mu=\sqrt{2+\sqrt{2}}$ was conjectured in 1982 by Nienhuis~\cite{Nienhuis1982Exact} and proved by Duminil-Copin and Smirnov in 2012~\cite{DuminilCopin2012Connective}.

The interaction of long-chain polymers with an impenetrable surface can be modelled by restricting SAWs to a half-space, and associating another fugacity $y$ with vertices (or edges) in the boundary of the half-space which are visited by a walk. It is standard practice to place the origin on the boundary. This naturally leads to the definition of a partition function
\[\mathbf C^+_n(y) = \sum_{m\geq0} c^+_n(m) y^m,\]
where $c^+_n(m)$ is the number of $n$-step SAWs starting on the boundary of the half-space and occupying $m$ vertices in the boundary.

The limit
\[\mu(y) := \lim_{n\to\infty}\mathbf C_n^+(y)^{1/n}\]
has been shown to exist for the $d$-dimensional hypercubic lattice for $y>0$ (see e.g.~\cite{Hammersley1982Selfavoiding}). It is a finite, log-convex and non-decreasing function of $y$, and is thus continuous and almost everywhere differentiable. The adaptation of the proof to other regular lattices (in particular, to the honeycomb lattice) is elementary, and will be discussed in Section~\ref{sec:confined_saws}.

It can also be shown that for $0<y\leq1$, 
\[\mu(y) = \mu(1) = \mu,\]
and that $\mu(y)\geq \max\{\mu,y\}$. (This bound is for the hypercubic lattice; as we will see in Section~\ref{sec:confined_saws}, it is slightly different on the honeycomb lattice.) This implies the existence of a \emph{critical fugacity} $\y\geq1$ satisfying
\[\mu(y)\begin{cases}=\mu & \text{if } y\leq\y,\\ >\mu & \text{if }y>\y.\end{cases}\]
This critical fugacity signifies an \emph{adsorption phase transition}, and demarcates the \emph{desorbed} phase $y<\y$ and the adsorbed phase $y>\y$.

Just as the honeycomb lattice is the only regular lattice whose growth constant is known exactly, it is also the only lattice for which an exact value for $\y$ is known. In fact, because there are two different ways to orient the surface (see Figure~\ref{fig:demonstrating_orientations}) for the honeycomb lattice, there are two different values of $\y$. When the surface is oriented so that there are lattice edges perpendicular to the surface (i.e. Figure~\ref{fig:demonstrating_orientations}(a)), the critical fugacity is $\y=1+\sqrt{2}$. This value was conjectured by Batchelor and Yung in 1995~\cite{Batchelor1995Exact}, based on similar Bethe ansatz arguments to those employed by Nienhuis~\cite{Nienhuis1982Exact}, as well as numerical evidence. A proof was discovered by Beaton et al~\cite{Beaton2013Critical}; it used a generalisation of an identity obtained by Duminil-Copin and Smirnov~\cite{DuminilCopin2012Connective}, as well as an adaptation of some results of Duminil-Copin and Hammond~\cite{DuminilCopin2012Selfavoiding}.

\begin{figure}
\centering
\begin{subfigure}{0.4\textwidth}
\includegraphics{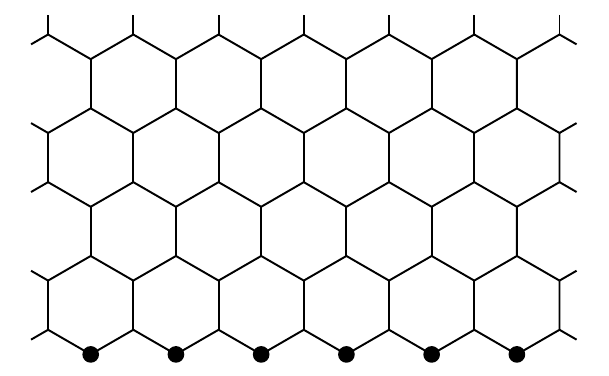}
\caption{}
\end{subfigure}
\hspace{0.5cm}
\begin{subfigure}{0.4\textwidth}
\includegraphics[angle=90]{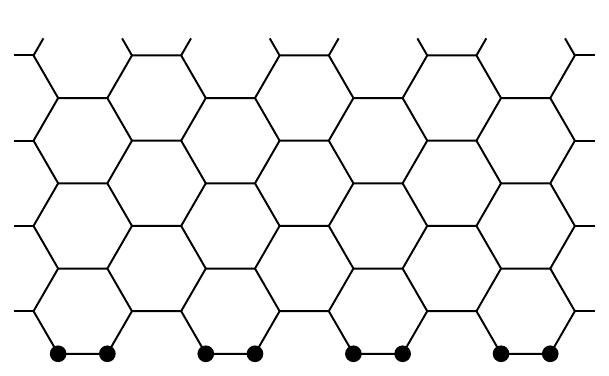}
\caption{}
\end{subfigure}
\caption{The two orientations of an impenetrable surface on the honeycomb lattice, with the surface vertices indicated.}
\label{fig:demonstrating_orientations}
\end{figure}

It is the other orientation of an impenetrable surface on the honeycomb lattice (i.e. Figure~\ref{fig:demonstrating_orientations}(b)) that is the focus of this article. For this model of polymer adsorption there is also a conjecture regarding the critical surface fugacity, due to Batchelor, Bennett-Wood and Owczarek~\cite{Batchelor1998Twodimensional}. In this paper we prove that result:
\begin{thm}\label{thm:main_thm}
For the self-avoiding walk model on the semi-infinite honeycomb lattice with the boundary oriented as per Figure~\ref{fig:demonstrating_orientations}(b), associate a fugacity 
$y$ with occupied vertices on the boundary. Then the model undergoes an adsorption transition at
\[y=y_{\rm c} = \sqrt{\frac{2+\sqrt{2}}{1+\sqrt{2}-\sqrt{2+\sqrt{2}}}} = 2.455\ldots\]
\end{thm}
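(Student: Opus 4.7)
The plan is to adapt the proof strategy of Beaton et al, which in turn rests on the parafermionic observable method of Duminil-Copin and Smirnov, to the rotated orientation of the honeycomb lattice in Figure~\ref{fig:demonstrating_orientations}(b). First I would set up a finite trapezoidal (or rectangular) domain $D_{T,L}$ aligned with the rotated lattice, with one side $\alpha$ of length $2L$ along the adsorbing surface, and three other sides along which walks may end. The starting mid-edge $a$ would be fixed in the centre of $\alpha$, and I would consider self-avoiding walks within $D_{T,L}$ from $a$ to each mid-edge $z$ of the domain.

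Next I would introduce the modified parafermionic observable
\[
F(z) \;=\; \sum_{\gamma: a \to z} \x^{|\gamma|}\, y^{b(\gamma)}\, \e^{-\ii \sigma W(\gamma)},
\]
where $\x = 1/\sqrt{2+\sqrt{2}}$, $\sigma = 5/8$, $W(\gamma)$ is the total turning of $\gamma$, and $b(\gamma)$ counts the vertices of $\gamma$ lying on the surface $\alpha$. The first task is to prove a local identity at each vertex $v$ of $D_{T,L}$: at interior vertices the original Duminil-Copin--Smirnov relation $(p-v)F(p)+(q-v)F(q)+(r-v)F(r)=0$ holds at $\x$, $\sigma=5/8$, while at vertices of $\alpha$ the relation is corrected by factors of $y$ reflecting the surface weight picked up when the walk occupies the boundary vertex. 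Because of the rotated orientation, surface vertices have their ``missing'' (forbidden) neighbour pointing perpendicular to $\alpha$, which is geometrically different from the configuration analysed by Beaton et al and produces a different linear combination of $y$-weighted boundary terms.

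The heart of the argument is then to sum the local identity, weighted by suitable unit vectors, over all vertices of $D_{T,L}$. The interior contributions telescope, so the sum reduces to a contour-type identity over the mid-edges on $\partial D_{T,L}$. This yields a relation of the form
\[
A_{T,L}(\x,y) \;+\; B_{T,L}(\x) \;+\; E_{T,L}(\x) \;=\; 1,
\]
where $A_{T,L}$ is a generating function for walks ending on $\alpha$ (weighted by both $\x$ and $y$), $B_{T,L}$ is a generating function for walks ending on the side of $D_{T,L}$ opposite $\alpha$, and $E_{T,L}$ collects walks ending on the two lateral sides. The value $\y$ would then be chosen precisely so that the complex phases produced by the winding of walks at the start and finish combine to cancel the boundary contributions in the cleanest possible way --- and a direct calculation using the angles $\pi/3$ of the rotated lattice, together with $\sigma = 5/8$, should pin the required value to
\[
\y = \sqrt{\frac{2+\sqrt{2}}{1+\sqrt{2}-\sqrt{2+\sqrt{2}}}}.
\]
From the resulting identity one obtains two-sided bounds on the half-plane bridge generating functions: passing $T \to \infty$ then $L \to \infty$ gives an upper bound for $y \le \y$ showing $\mu(y) \le 1/\x$, while standard lower bounds via concatenation of surface bridges give the matching inequality for $y \ge \y$.

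The main obstacle will be the boundary bookkeeping: in the rotated geometry each surface vertex meets the lattice via a single edge perpendicular to $\alpha$, so the modified local relation at a boundary vertex involves both the boundary weight $y$ and the winding phase $\e^{\pm\ii\pi \sigma}$ in a different combination than in the original orientation, and extracting the precise condition that forces cancellation of the boundary terms is the delicate algebraic step where the value of $\y$ appears. A secondary issue is the rigorous passage from the bridge-type bounds provided by the identity to the statement $\mu(y)=\mu$ for $y\le\y$: for this I would adapt, as in Beaton et al, the half-plane version of the Duminil-Copin--Hammond argument, showing that bridges in a slab of width tending to infinity capture the correct exponential growth rate of $\mathbf C_n^+(y)$, and hence that the critical fugacity of the model coincides with the threshold produced by the parafermionic identity.
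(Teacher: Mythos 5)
Your overall strategy (domain identity from the parafermionic observable, then a bridge/renewal argument in the style of Duminil-Copin--Hammond) is the right family of ideas, but two concrete steps would fail as you have set them up. First, you place the surface fugacity on the boundary $\alpha$ containing the starting mid-edge $a$. The paper deliberately avoids this: with the weights on $\alpha$ some coefficients in the resulting identity are negative, which destroys the positivity and monotonicity arguments (every term nonnegative for $y<y^\dagger$, hence $E_{T,L}$ decreasing in $L$, hence $B_T(\x,y)$ finite) that the whole endgame rests on. The paper instead weights the opposite boundary $\beta$ and justifies the transfer back to the original adsorption problem via the strip symmetry $\mu_T(y,1)=\mu_T(1,y)$ of Proposition~\ref{prop:strip_limits_existence}; that symmetry lemma is an essential ingredient you have not supplied. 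Second, your proposed identity $A_{T,L}+B_{T,L}+E_{T,L}=1$ is missing terms that genuinely appear in this rotated orientation: because $a$ is an interior horizontal mid-edge of the bottom row, the local relation fails at the two vertices $a^\pm$ adjacent to $a$, and summing the observable produces a self-avoiding-polygon generating function $P_{T,L}$ with a nonzero coefficient, as well as a split of the bottom-boundary walks into two families $\alpha^O$ and $\alpha^I$ carrying different phases. These terms cannot be absorbed or dropped, and the critical value $y^\dagger$ arises specifically as the zero of the coefficient $c_B(y)$ multiplying the \emph{top}-boundary generating function $B_{T,L}$, not from a cancellation of phases at the start and finish of the walk.

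The endgame is also not as you describe. The identity by itself yields only the lower bound $\y\geq y^\dagger$ (finiteness of $B_T(\x,y)$ for $y<y^\dagger$ forces $y_T\geq y^\dagger$). The matching upper bound does not come from ``standard lower bounds via concatenation of surface bridges''; it requires (a) the appendix result that the spanning-bridge generating function $B_T(\x,1)$ tends to $0$ as $T\to\infty$, proved via Kesten's relation, renewal theory, diamond points and the StickBreak perturbation, and (b) a decomposition of walks counted by $A^O_{T+1}(\x,y)$ into two bridge factors, which combined with the identity gives $0\leq (c_A^O+c_A^I+c_P)/c_B + c_B(y)/(c_B B_T(\x,1))$ and hence a contradiction with $\y>y^\dagger$ since $c_B(y)<0$ there. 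Your final paragraph gestures at Duminil-Copin--Hammond but attributes to it the wrong conclusion (that bridges capture the growth rate of $\mathbf C^+_n(y)$, a Hammersley--Welsh-type statement) rather than the vanishing of $B_T(\x,1)$, which is what is actually needed.
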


This paper largely follows the same structure as~\cite{Beaton2013Critical}. We first prove an identity relating several different generating functions of SAWs in a finite domain, evaluated at the critical step fugacity $x=\x=\mu^{-1}$. (We will in fact prove an identity relating generating functions of an $O(n)$ loop model -- a generalisation of the SAW model -- before restricting to the case $n=0$.) We then adapt some existing results for the hypercubic lattice to the honeycomb lattice, and show how the critical fugacity relates to an appropriate limiting case of our identity. This relationship enables us to derive a proof of Theorem~\ref{thm:main_thm}, subject to a certain generating function in a restricted geometry (specifically, the generating function of self-avoiding bridges which span a strip of width $T$) disappearing in a limit. In the appendix we prove that result; the proof is very similar to that of the appendix in~\cite{Beaton2013Critical}, which was in turn based on arguments featured in~\cite{DuminilCopin2012Selfavoiding}.

\subsection{The $O(n)$ model}

While the focus of this paper is the modelling of polymer adsorption with SAWs, some key identities we use are valid for a more general model, called the \emph{$O(n)$ loop model} (closely related to the $n$-vector model). We will thus provide some brief definitions before discussing the identities. 

The $n$-vector model, introduced by Stanley in 1968~\cite{Stanley1968Dependence}, is described by the Hamiltonian
\[\mathcal{H}(d,n) = -J\sum_{\langle i,j\rangle}\mathbf{s}_i\cdot\mathbf{s}_j,\]
where $d$ is the dimensionality of the lattice, $i$ and $j$ are adjacent sites on the lattice, and $\mathbf{s}_i$ is an $n$-dimensional vector of magnitude $\sqrt{n}$. When $n=1$ this is exactly the Ising model, and when $n=2$ it is the classical XY model. De Gennes~\cite{deGennes1972Exponents} showed that it is also possible to take the limit $n\to0$, and that in doing so one obtains the partition function of self-avoiding walks on the lattice.

Of importance to this paper is that it has been shown~\cite{Domany1981Duality} that the $O(n)$ model can be represented by a loop model, where a weight $n$ is associated with each closed loop. The partition function of such a model, on a two-dimensional domain of $N^2$ sites, is written as
\[Z_{N^2}(x) = \sum_\gamma x^{|\gamma|}n^{\ell(\gamma)},\]
where $\gamma$ is a configuration of non-intersecting loops, $|\gamma|$ is the number of edges (or, equivalently, vertices) occupied by $\gamma$ and $\ell(\gamma)$ is the number of closed loops. In this paper we consider a slight generalisation of the model, by allowing a loop configuration to contain a single self-avoiding walk component. (Such an adjustment does not affect the critical point.) When the model is framed this way, it is clear that the case $n\to0$ corresponds to SAWs, as the SAW component of a configuration is the only one with non-zero weight.

Nienhuis's conjecture~\cite{Nienhuis1982Exact} for the growth constant of the SAW model on the honeycomb lattice was in fact a specialisation of a more general result regarding the $O(n)$ model. He predicted that, for $n\in[-2,2]$, the critical point $\x(n)$ is given by
\[\x(n) = \frac{1}{\sqrt{2+\sqrt{2-n}}}.\]
Similarly, the conjecture of Batchelor and Yung~\cite{Batchelor1995Exact} regarding the adsorption of SAWs on the honeycomb lattice (in the orientation of Figure~\ref{fig:demonstrating_orientations}(a)) was also a result for the more general loop model. They predicted that for $n\in[-2,2]$ and $x=1/\sqrt{2+\sqrt{2-n}}$, the critical surface fugacity is given by 
\[\y(n) = 1+\frac{2}{\sqrt{2-n}}.\]

For the rotated orientation that we consider in this paper, the only existing conjecture known to the author for the critical surface fugacity is for the $n=0$ (i.e. self-avoiding walk) model~\cite{Batchelor1998Twodimensional}. We provide the following more general conjecture, based on an identity we obtain and similar behaviour observed in~\cite{Beaton2013Critical}:
\begin{conj}\label{conj:crit_surface_generaln}
For the $O(n)$ loop model on the semi-infinite honeycomb lattice with the boundary oriented as per Figure~\ref{fig:demonstrating_orientations}(b) and with $n\in[-2,2]$, associate a fugacity $\x(n)= 1/\sqrt{2+\sqrt{2-n}}$ with occupied vertices and an additional fugacity $y$ with occupied vertices on the boundary. Then the model undergoes a surface transition at
\[y=\y(n) = \sqrt{\frac{2+\sqrt{2-n}}{1+\sqrt{2-n}-\sqrt{2+\sqrt{2-n}}}}.\]
\end{conj}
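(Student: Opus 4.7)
The plan is to mirror, at general $n\in[-2,2]$, the strategy the paper uses for $n=0$. First I would set up Smirnov's parafermionic observable on mid-edges of the rotated honeycomb lattice, choosing the spin parameter $\sigma(n)$ and the step fugacity $\x(n)=1/\sqrt{2+\sqrt{2-n}}$ so that the observable satisfies the standard $O(n)$ ``discrete divergence-free'' relation at every interior vertex. The local relations that underlie this identity are algebraic and insensitive to the lattice orientation, so the same observable works here as in the original Batchelor--Yung orientation. Summing the vertex relations over a finite trapezoidal domain $\Omega$ with one side lying on the rotated boundary gives a contour identity: a $\sigma(n)$-dependent linear combination of boundary sums, one per side of $\Omega$, each weighted by $y$ at every boundary vertex visited.

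The second step is the local calculation that pins down $\y(n)$. Because the rotated boundary has every boundary vertex adjacent to exactly one lattice edge perpendicular to the surface, the way $y$ couples to the mid-edge phases on the bottom side of $\Omega$ differs from the original orientation. Working out the mid-edge phase accounting and demanding that the bottom-side contribution simplify (by collapsing into another term of the identity, or by cancelling it) should single out precisely the conjectured value $\y(n)=\sqrt{(2+\sqrt{2-n})/(1+\sqrt{2-n}-\sqrt{2+\sqrt{2-n}})}$. Next, take $\Omega$ to be a strip of width $T$ and height $L$, send $L\to\infty$ and then $T\to\infty$, and combine the resulting identity with a Hammersley-style subadditivity argument (adapted to the honeycomb half-plane as discussed in Section~\ref{sec:confined_saws}) to conclude $\mu(y)=\mu$ for $y<\y(n)$ and $\mu(y)>\mu$ for $y>\y(n)$.

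The main obstacle is the $T\to\infty$ step. In the $n=0$ proof this reduces to showing that a certain ``bridge'' generating function in a strip of width $T$, evaluated at $x=\x(n)$ and $y=\y(n)$, vanishes as $T\to\infty$; the appendix handles this via Duminil-Copin--Hammond-style probabilistic estimates which rely heavily on the fact that SAWs are single concatenable objects carrying a natural probability measure. For generic $n\in[-2,2]$, $O(n)$ loop configurations have no such structure — weights need not be positive, there is no obvious bridge decomposition of a loop ensemble, and the subadditive/ergodic arguments that control the half-plane susceptibility at criticality break down. I expect this to be the real bottleneck, and it is presumably why the statement for general $n$ is offered only as a conjecture rather than as a theorem.
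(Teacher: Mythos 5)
This statement is offered in the paper only as a conjecture, and the paper contains no proof of it for general $n$; what it does contain is exactly the heuristic you describe. Your outline matches the paper's route: the domain identity is derived for all $n\in[-2,2]$ (Proposition~\ref{prop:identity_general_ny}), the coefficient of $B_{T,L}(\x,y;n)$ in the dilute case vanishes precisely at the conjectured $\y(n)$, and the full argument is then completed only at $n=0$. Your diagnosis of the obstruction is also the correct one and coincides with the paper's implicit reason for stopping at a conjecture: Section~\ref{sec:confined_saws} (monotonicity, log-convexity, the strip quantities $y_T$ decreasing to $\y$) and the appendix (Kesten's relation, the renewal measure $\PiSAPP$, the ergodic and stick-breaking arguments showing $\PPT(\x)\to 0$) all depend on nonnegative weights and on a concatenation structure for single walks, none of which survives for a general loop ensemble with weight $n^{\ell(\gamma)}$. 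One small correction: the orientation with lattice edges perpendicular to the surface is that of Figure~\ref{fig:demonstrating_orientations}(a), not the rotated orientation treated here; in orientation (b) the surface contains horizontal edges lying along it, which is what forces the modified treatment of the $\beta$ boundary (the factor $\Gamma^2 = xy\,\Gamma^1$ and $B_{T,L}=2(xy+x^2y^2)\Gamma^1$) in the derivation of the coefficient $c_B(y)$. This does not change your conclusion, but it is the mechanism by which the rotated geometry produces a different $\y(n)$ from the Batchelor--Yung value.
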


\section{The identities}\label{sec:identities}

\subsection{The local identity}\label{ssec:local_ident}

We consider the semi-infinite honeycomb lattice, oriented as in Figure~\ref{fig:demonstrating_orientations}(b), embedded in the complex plane in such a way that the edges have unit length. We refer to north-east/south-west diagonal edges as \emph{positive} and north-west/south-east edges as \emph{negative}.

We follow the examples of~\cite{Beaton2013Critical,DuminilCopin2012Connective} and consider self-avoiding walks which start and end at the \emph{mid-points} (or \emph{mid-edges}) of edges on the lattice. Note that this means the total length of a walk is the same as the number of vertices it occupies. A \emph{vertex-domain} is a collection of vertices $V$ of the lattice with the property that the induced graph of $V$ on the lattice is connected. A collection of mid-edges $\Omega$ is a \emph{domain} if it comprises all the mid-edges adjacent to vertices in a vertex-domain. Equivalently, $\Omega$ is a domain if (a) the set of vertices $V(\Omega)$ adjacent to three mid-edges of $\Omega$ forms a vertex-domain, and (b) every mid-edge of $\Omega$ is adjacent to at least one vertex of $V(\Omega)$.
We denote by $\partial\Omega$ the set of mid-edges of $\Omega$ adjacent to only one vertex of $V(\Omega)$. The \emph{surface} will be a subset of vertices in $V(\Omega)$ which are each adjacent to one mid-edge in $\partial \Omega$. (In practice these will be vertices on the boundary of the half-plane or a strip.)

A \emph{configuration} $\gamma$ consists of a single self-avoiding walk $w$ and a finite collection of closed loops, which are self-avoiding and do not meet another or $w$. We denote by $|\gamma|$ the number of vertices occupied by $\gamma$, by $c(\gamma)$ the number of contacts with the surface (i.e.~vertices on the surface occupied by $\gamma$), and by $\ell(\gamma)$ the number of loops. 

Now define the following so-called \emph{parafermionic observable}: for $a\in\partial\Omega$ and $z\in\Omega$, set
\[F(\Omega,a,p;x,y,n,\sigma) \equiv F(p) := \sum_{\gamma:a\to p} x^{|\gamma|} y^{c(\gamma)} n^{\ell(\gamma)} \e^{-\ii\sigma W(w)},\]
where the sum is over all configurations $\gamma\subset \Omega$ for which the SAW component $w$ runs from $a$ to $p$, and $W(w)$ is the \emph{winding angle} of $w$, that is, $\pi/3$ times the difference between the number of left turns and right turns. See Figure~\ref{fig:example_config} for an example.

\begin{figure}
\centering
\begin{picture}(200,150)
\put(0,0){\includegraphics[height=200pt,angle=90]{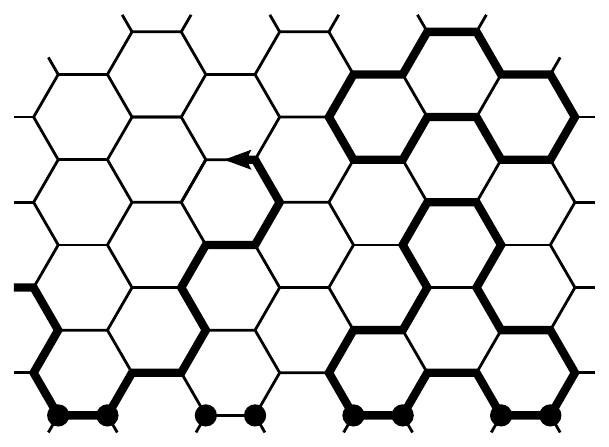}}
\put(-5,50){$a$}
\put(74,100){$z$}
\end{picture}
\caption{A configuration on the honeycomb lattice. The contribution of this configuration to $F(z)$ is $\e^{-\ii\sigma\pi}x^{45}y^6n^2$.}
\label{fig:example_config}
\end{figure}

The following lemma is given as Lemma 3 in~\cite{Beaton2013Critical}; the case $y=1$ is due to Smirnov~\cite{Smirnov2010Discrete}.

\begin{lem}\label{lem:local_ident}
For $n\in[-2,2]$, set $n=2\cos\theta$ with $\theta\in[0,\pi]$. Let
\begin{align}
\sigma&=\frac{\pi-3\theta}{4\pi},\qquad \x^{-1} = 2\cos\left(\frac{\pi+\theta}{4}\right) = \sqrt{2-\sqrt{2-n}},\qquad \text{or} \label{eqn:xcvalue_dense}\\
\sigma&=\frac{\pi+3\theta}{4\pi},\qquad \x^{-1} = 2\cos\left(\frac{\pi-\theta}{4}\right) = \sqrt{2+\sqrt{2-n}}. \label{eqn:xcvalue_dilute}
\end{align}
Then for a vertex $v\in V(\Omega)$ not belonging to the weighted surface, the observable $F$ satisfies
\begin{equation}\label{eqn:local_ident}
(p-v)F(p) + (q-v)F(q) + (r-v)F(r) =0,
\end{equation}
where $p,q,r$ are the three mid-edges adjacent to $v$. If $v$ is in the surface, then let $p$ be the mid-edge adjacent to $v$ in $\partial\Omega$ and take $q,r$ to be the other two mid-edges in counter-clockwise order around $v$. Then
\begin{multline}\label{eqn:local_ident_surface}
  (p-v)F(p) + (q-v)F(q) + (r-v)F(r)= \\
(q-v)(1-y)(\x y\lambda^{\pi/3})^{-1}\sum_{\gamma : a \to q, p} \x ^{|\gamma|} y^{c(
  \gamma)} n^{\ell(\gamma)} \e^{-\ii \sigma W(w)}\\
+(r-v)(1-y) (\x y\lambda^{-\pi/3})^{-1}\sum_{\gamma :  a\to r, p} \x ^{|\gamma|} y^{c(
  \gamma)} n^{\ell(\gamma)} \e^{-\ii \sigma W(w)},
\end{multline}
where 
$\lambda^{\pi/3}=\e^{-\ii\sigma\pi/3}$ is the weight accrued by a walk for
each left turn, and the first
(resp. second) sum runs over configurations $\gamma$ whose SAW component $w$ goes from $a$ to
$p$ via $q$ (resp. via $r$).
\end{lem}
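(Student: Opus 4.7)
Since this statement is Lemma 3 of \cite{Beaton2013Critical}, my plan is to follow the approach developed there (itself extending Smirnov's $y=1$ calculation in \cite{Smirnov2010Discrete}): establish the identity by a purely local computation at $v$. I would partition the configurations contributing to $F(p)$, $F(q)$ and $F(r)$ into equivalence classes on which only the behaviour of $\gamma$ in a small neighbourhood of $v$ differs, and show that each class contributes zero (for a bulk vertex) or exactly the prescribed boundary correction (for a surface vertex).

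Within such a class, configurations are distinguished by how the SAW $w$ and the closed loops interact with $v$: either $v$ is unvisited, or $v$ lies on a closed loop, or $v$ lies on $w$ itself. In the last case I would further split off walks that pass straight through $v$ (using two of the three mid-edges) from walks whose terminal mid-edge lies in $\{p,q,r\}$ immediately after a step through $v$. The resulting subcases group naturally into triples related by permuting which of $p,q,r$ is the exit mid-edge at $v$, and into pairs related by swapping ``walk through $v$'' for ``loop through $v$'' (which trades a factor of $n$ for an extra walk component).

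For each group I would tally the weighted contribution by tracking three ingredients: the step weight $\x$ on each edge adjacent to $v$ used by $\gamma$; the loop weight $n$ gained or lost when a loop is created or destroyed; and the phase factor $\e^{-\ii\sigma W(w)}$, recalling that each left or right turn at $v$ contributes $\lambda^{\pm\pi/3} = \e^{\mp\ii\sigma\pi/3}$. The directional prefactors $(p-v)$, $(q-v)$, $(r-v)$ are unit complex numbers pointing from $v$ to its three neighbours, differing from one another by cube roots of unity. Demanding that every bulk class sums to zero produces two trigonometric equations in $\sigma$ and $\x$ whose solutions are exactly the branches \eqref{eqn:xcvalue_dense} and \eqref{eqn:xcvalue_dilute}.

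For a surface vertex $v$, each visit to $v$ now carries an extra weight $y$ instead of $1$. This breaks precisely those cancellations between paired configurations in which $w$ passes through $v$ via $\{p,q\}$ or via $\{p,r\}$: such pairs would have cancelled in the bulk via the $\sigma$-choice, but now carry unequal weights. Collecting the resulting residuals and simplifying with the effective edge-weight $\x y\lambda^{\pm\pi/3}$ associated to surface half-edges should reproduce exactly the right-hand side of \eqref{eqn:local_ident_surface}. I expect the main obstacle to be the meticulous bookkeeping of winding angles through the various local moves at $v$, in particular ensuring that the phases from left versus right turns combine correctly with the directional prefactors so that the identity holds on the nose, with no stray factor of $x$, $y$ or $\lambda$.
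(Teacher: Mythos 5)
Your plan follows exactly the local pairing/tripling argument of Smirnov and of Lemma~3 in Beaton et al., which is all this paper relies on --- it gives no proof of its own, only the citation. One small quibble: at a surface vertex the cancellation that breaks is that of the \emph{triples} (a configuration ending at $q$ or $r$ without visiting $v$, grouped with its two one-step extensions through $v$), which is governed by the choice of $\x$ rather than the $\sigma$-choice that kills the pairs; but the residual terms you identify --- the configurations reaching $p$ via $q$ or via $r$, weighted by $(1-y)$ with the last step's weight $\x y\lambda^{\pm\pi/3}$ divided out --- are exactly the right ones.
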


The value of $\x$ given in~\eqref{eqn:xcvalue_dense} corresponds to the larger of the two critical values of the step weight $x$ and hence to the dense regime critical point, while that given in~\eqref{eqn:xcvalue_dilute} corresponds to the line of critical points separating the dense and dilute phases. In what follows we refer to~\eqref{eqn:xcvalue_dense} and~\eqref{eqn:xcvalue_dilute} as the dense and dilute regimes respectively.

\subsection{Self-avoiding walks with no interactions: $n=0$ and $y=1$}\label{ssec:noints}

In~\cite{DuminilCopin2012Connective}, Duminil-Copin and Smirnov use Lemma~\ref{lem:local_ident} to prove that the growth constant of self-avoiding walks ($n=0$ in the dilute regime) is $\x^{-1}= \sqrt{2+\sqrt{2}}$. They do so by considering a special trapezoidal domain, and using the local identity to derive a domain identity satisfied by generating functions of SAWs which end on different sides of the domain. In~\cite{Beaton2013Critical}, the authors generalise that identity to one which relates generating functions of the $O(n)$ loop model and takes into account the surface fugacity $y$.

Here, we construct a similar identity to the one used in~\cite{Beaton2013Critical}. There are several extra complications we must contend with here, and thus we split the derivation into three parts. In this subsection, we consider only the case of SAWs without surface interactions (i.e.~$n=0$ and $y=1$ in the dilute regime). In the next subsections, we will generalise to $n\in[-2,2]$ (in both the dense and dilute regimes) and then to arbitrary $y$. We thus take $\sigma$ and $\x$ to be the values which satisfy Lemma~\ref{lem:local_ident}, that is, $\sigma=5/8$ and $\x = 1/\sqrt{2+\sqrt{2}}$.

We work in the special domain $D_{T,L}$ as illustrated in Figure~\ref{fig:domain_noweights}. (To avoid clutter we have not shown the external half-edges to the left and right of $a$, but they do not play a part in our working. The reasons for this will become clear.) Because we cannot nominate a mid-edge in $\partial D_{T,L}$ as the starting point of SAWs and still preserve any kind of reflective symmetry (which is a key ingredient in the derivation of a useful identity), we instead choose the mid-edge $a$ as the starting point of SAWs. This will introduce some complications around $a$. The height of the domain is the length of the shortest walk starting at $a$ and ending at the top boundary; the width is the number of columns of cells.



\begin{figure}
\centering
\begin{picture}(320,150)
\put(0,0){\includegraphics[height=320pt,angle=90]{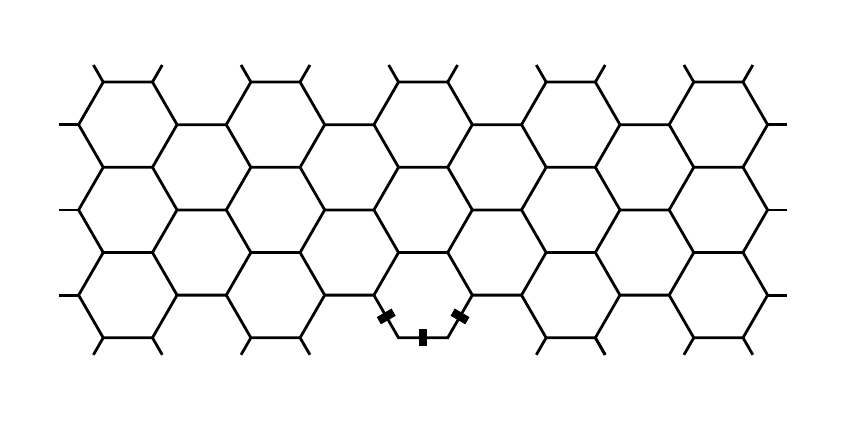}}
\put(300,108){$\epsilon^+$}
\put(300,77){$\epsilon^+$}
\put(300,45){$\epsilon^+$}
\put(5,108){$\epsilon^-$}
\put(5,77){$\epsilon^-$}
\put(5,45){$\epsilon^-$}
\put(217,12){$\alpha^{O+}$}
\put(275,12){$\alpha^{O+}$}
\put(244,12){$\alpha^{I+}$}
\put(78,12){$\alpha^{O-}$}
\put(188,12){$\alpha^{I+}$}
\put(110,12){$\alpha^{I-}$}
\put(54,12){$\alpha^{I-}$}
\put(22,12){$\alpha^{O-}$}
\put(246,138){$\beta^-$}
\put(276,138){$\beta^+$}
\put(221,138){$\beta^+$}
\put(191,138){$\beta^-$}
\put(166,138){$\beta^+$}
\put(136,138){$\beta^-$}
\put(111,138){$\beta^+$}
\put(81,138){$\beta^-$}
\put(56,138){$\beta^+$}
\put(26,138){$\beta^-$}
\put(154,18){$a$}
\put(138,21){$a^-$}
\put(166,21){$a^+$}
\put(176,33){$\zeta^+$}
\put(127,33){$\zeta^-$}
\end{picture}
\caption{The domain $D_{T,L}$ we will use in the proof. Walks start at mid-edge $a$. The labels on the external mid-edges indicate the set containing those mid-edges. This domain has height $T=7$ and width $2L+1=9$.}
\label{fig:domain_noweights}
\end{figure}



Our identity in this geometry is the following. Define
\begin{align*}
A^O_{T,L}(x) &= \sum_{\gamma:a\to\alpha^{O+}\bigcup\alpha^{O-}}x^{|\gamma|}&
A^I_{T,L}(x) &= \sum_{\gamma:a\to\alpha^{I+}\bigcup\alpha^{I-}}x^{|\gamma|}\\
E_{T,L}(x) &= \sum_{\gamma:a\to\epsilon^+\bigcup\epsilon^-}x^{|\gamma|}&
B_{T,L}(x) &= \sum_{\gamma:a\to\beta^+\bigcup\beta^-}x^{|\gamma|}\\
P_{T,L}(x)&= \sum_{\rho\ni a}x^{|\rho|}
\end{align*}
where the last sum is over all \emph{undirected} (non-empty) self-avoiding polygons in the domain which contain the mid-edge $a$.

\begin{prop}\label{prop:unweighted_identity}
The generating functions $A^O_{T,L},A^I_{T,L},E_{T,L},B_{T,L}$ and $P_{T,L}$, evaluated at $x=\x$, satisfy the identity
\begin{multline} \label{eqn:unweighted_identity}
\sqrt{2-\sqrt{2-\sqrt{2}}}A^O_{T,L}(\x) + \sqrt{2-\sqrt{2+\sqrt{2}}}A^I_{T,L}(\x) + \sqrt{2+\sqrt{2-\sqrt{2}}}E_{T,L}(\x) \\ + \sqrt{2+\sqrt{2+\sqrt{2}}}B_{T,L}(\x) + 2\sqrt{4+2\sqrt{2}-\sqrt{2(10+7\sqrt{2})}}P_{T,L}(\x) \\= \sqrt{8-4\sqrt{2}+2\sqrt{2\left(2-\sqrt{2}\right)}}
\end{multline}
\end{prop}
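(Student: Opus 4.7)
The plan is to sum the local identity of Lemma~\ref{lem:local_ident} over every vertex $v\in V(D_{T,L})$, specialised to $n=0$, $y=1$ and the dilute regime, so that $\sigma=5/8$ and $\x=1/\sqrt{2+\sqrt 2}$. With $y=1$ the correction on the right-hand side of (\ref{eqn:local_ident_surface}) vanishes, and every vertex (interior or surface) contributes
\[(p-v)F(p)+(q-v)F(q)+(r-v)F(r)=0.\]
Summing over all $v$, each interior mid-edge $p$ (adjacent to exactly two vertices $v_1,v_2$) contributes $(2p-v_1-v_2)F(p)$, which vanishes because $p$ is the midpoint of $v_1v_2$. Only boundary mid-edges survive, giving the complex identity
\[\sum_{p\in\partial D_{T,L}}(p-v_p)F(p)=0.\]

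Next I would exploit the reflective symmetry of $D_{T,L}$ about the vertical axis through $a$, which pairs each $+$-class boundary mid-edge with its $-$-class partner. Within each symmetric pair, the standard observation that the winding $W(w)$ of any walk from $a$ to $p$ inside the domain is determined modulo $2\pi$ by the orientations of $a$ and $p$ (cf.~\cite{DuminilCopin2012Connective,Beaton2013Critical}) allows $F(p)$ to be written as a fixed complex phase times a real length-only generating function. Combining the $+$ and $-$ contributions of a pair yields $2\,\mathrm{Re}\bigl[(p-v_p)\e^{-\ii\sigma W_p}\bigr]$ times one of the unsigned generating functions $A^O_{T,L},A^I_{T,L},E_{T,L}$ or $B_{T,L}$. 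With $\sigma=5/8$, the six possible mid-edge orientations combine through the elementary identities $2\cos(k\pi/16)=\sqrt{2\pm\sqrt{2\pm\sqrt 2}}$ to produce precisely the four nested-radical coefficients appearing on the left-hand side of (\ref{eqn:unweighted_identity}).

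The remaining contributions come from the cluster of mid-edges $\{a,a^\pm,\zeta^\pm\}$ near the starting point. The mid-edge $a$ itself has $F(a)=1$ (the empty walk); together with the contributions from $a^\pm$, reached only by the unique one-step walks through the central vertex $v_a$, this yields the right-hand-side constant $\sqrt{8-4\sqrt 2+2\sqrt{2(2-\sqrt 2)}}$. The mid-edges $\zeta^\pm$ are reached by walks whose SAW component encircles $v_a$ and, together with the half-edge $a$, naturally forms an undirected self-avoiding polygon through $a$; summing over the two traversal orientations of each such polygon produces the term $2\sqrt{4+2\sqrt 2-\sqrt{2(10+7\sqrt 2)}}\,P_{T,L}(\x)$.

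The principal obstacle is the local analysis around $a$: correctly partitioning the short walks near $v_a$ between the right-hand constant and the polygon generating function, identifying the phase $W_p$ for each symmetric class of boundary mid-edges, and verifying that the coefficients from the real-part projection simplify exactly to the stated nested radicals. The overall framework follows \cite{Beaton2013Critical}, but the rotation of the lattice produces a different arrangement of surface vertices, a correspondingly different family of boundary classes, and a polygon contribution with a genuinely new coefficient.
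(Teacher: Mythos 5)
There is a genuine gap, and it sits precisely at the point your proposal glosses over: in this domain the starting mid-edge $a$ is \emph{not} a boundary mid-edge. It is the horizontal mid-edge lying between the two vertices $a^-$ and $a^+$ (see Figure~\ref{fig:domain_noweights}), so there is no single ``central vertex $v_a$'', and a walk may leave $a$ in either of two directions. Consequently the ``standard observation'' you invoke --- that $W(w)$ is determined by the orientations of $a$ and $p$ --- fails here: two walks from $a$ to the same external mid-edge $p$, one starting leftward and one rightward, have windings differing by an odd multiple of $\pi$, and since $\sigma=5/8$ the factor $\e^{-\ii\sigma\pi}$ is not $\pm1$. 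Hence $F(p)$ cannot be written as a single fixed phase times a real, length-only generating function, the pairing of $\tau^+$ with $\tau^-$ under reflection does not produce a real coefficient times an unsigned count, and the real-part projection you describe does not go through. The paper's fix is to replace $W$ by an adjusted winding angle $W^*=W\pm\pi/2$ according to the initial direction of the walk; this restores both required symmetries, at the price that the local identity for the adjusted observable $F^*$ fails at the two vertices $a^-,a^+$ (because the empty walk still has winding $0$), which must therefore be excluded from the vertex sum.

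That exclusion is also what generates the constant and the polygon term, and your accounting of the ``cluster near $a$'' is inconsistent with your own setup. If you sum the unadjusted local identity~\eqref{eqn:local_ident} over \emph{all} of $V(D_{T,L})$, then $a$, $\zeta^-$ and $\zeta^+$ are each adjacent to two summed vertices and cancel --- so neither the right-hand constant nor $P_{T,L}$ can arise from them --- while the two external half-edges hanging off $a^-$ and $a^+$ (deliberately omitted from Figure~\ref{fig:domain_noweights}) survive and contribute generating functions that appear nowhere in~\eqref{eqn:unweighted_identity}. In the paper's version, with $a^\pm$ removed from the sum, $\zeta^\pm$ become surviving mid-edges: the single-step walks to $\zeta^\pm$ (winding $\pm\pi/6$) produce the constant on the right of~\eqref{eqn:unweighted_identity}, and the walks reaching $\zeta^\pm$ with winding $\mp7\pi/6$ are completed by one extra step into undirected polygons through $a$, producing the $P_{T,L}$ term. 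Your telescoping framework and the treatment of the boundary classes $\alpha^{O\pm},\alpha^{I\pm},\epsilon^\pm,\beta^\pm$ match the paper's, but without the adjusted observable and the excision of $a^\pm$ the argument does not close.
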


\begin{proof}
We would like the winding angle of the reflection (through the vertical axis) of a walk to be the negative of that of the original walk. We also require that all walks ending on a given external mid-edge have the same winding angle. Our current definition of the winding angle $W(\gamma)$ cannot accommodate these two requirements; however, this is easily corrected. We define the \emph{adjusted winding angle} $W^*(\gamma)$ to be
\[W^*(\gamma) := \begin{cases} W(\gamma)+\frac{\pi}{2} &\text{if } \gamma \text{ starts in the left direction,} \\ W(\gamma)-\frac{\pi}{2} &\text{if } \gamma \text{ starts in the right direction,} \\ 0 & \text{if } \gamma \text{ is the empty walk}.\end{cases}
\]
(We will henceforth just say winding angle, but this should be interpreted as adjusted winding angle.)


Then, define the \emph{adjusted} observable $F^*$ analogously to $F$:
\[F^*(\Omega,a,p;x,y,n,\sigma) \equiv F^*(p) := \sum_{\gamma:a\to p} x^{|\gamma|} y^{c(\gamma)} n^{\ell(\gamma)} \e^{-\ii\sigma W^*(w)}.\]
Now $F^*$ satisfies the same identity~\eqref{eqn:local_ident} as $F$ for all vertices $v$ \emph{except the two adjacent to $a$}, by exactly the same proof. Because the winding angle of the empty walk is still 0, the identity fails for those two vertices (which we denote $a^-$ and $a^+$ -- see Figure~\ref{fig:domain_noweights}), and they will be treated separately. Thus, we define $V'(D_{T,L}) = V(D_{T,L}) \backslash \{a^-,a^+\}$.

We compute the sum
\begin{equation}\label{eqn:S_sumlocalident}
S = \sum_{\substack{v\in V'(D_{T,L})\\p,q,r\sim v}}(p-v)F^*(p) + (q-v)F^*(q) + (r-v)F^*(r),
\end{equation} 
where $p,q,r$ are the three mid-edges adjacent to vertex $v$, in two ways. 

Firstly, since $F^*$ satisfies the same identity~\eqref{eqn:local_ident} as $F$ for all vertices $v$ in $V'(D_{T,L})$, it follows immediately that $S=0$.

On the other hand, we can compute $S$ by noting that any mid-edge adjacent to two of the vertices summed over will contribute 0 to $S$. The remaining mid-edges are then the external ones marked as in Figure \ref{fig:domain_noweights}, as well as the two adjacent to $a^-$ and $a^+$, which we will denote by $\zeta^-$ and $\zeta^+$ respectively (see Figure~\ref{fig:domain_noweights}). If we define $j=\exp(2\pi\ii/3)/2$ and $\lambda=\exp(-\ii\sigma)=\exp(-5\ii\pi/24)$, then the coefficients of the walks ending on external mid-edges are
\begin{align*}
\alpha^{O+}&: \; -j\lambda^{-5\pi/6} = \frac{\exp(3\pi\ii/16)}{2}&
\alpha^{O-}&: \; \bar{j}\lambda^{5\pi/6} = \frac{\exp(13\pi\ii/16)}{2}\\
\alpha^{I+}&: \; \bar{j}\lambda^{-7\pi/6} = \frac{\exp(\pi\ii/16)}{2}&
\alpha^{I-}&: \; -j\lambda^{7\pi/6} = \frac{\exp(15\pi\ii/16)}{2}\\
\epsilon^+&: \; \frac{\lambda^{-\pi/2}}{2} = \frac{\exp(5\pi\ii/16)}{2}&
\epsilon^-&: \; \frac{-\lambda^{\pi/2}}{2} = \frac{\exp(11\pi\ii/16)}{2}\\
\beta^+&: \; -\bar{j} \lambda^{-\pi/6} = \frac{\exp(7\pi\ii/16)}{2}&
\beta^-&: \; j\lambda^{\pi/6} = \frac{\exp(9\pi\ii/16)}{2}
\end{align*}
There are two types of walks ending at $\zeta^-$ and $\zeta^+$: those with winding $\pm \pi/6$ and those with winding $\mp 7\pi/6$. The first type comprises only one walk for each of $\zeta^-$ and $\zeta^+$: a single step through $a^-$ or $a^+$. The contribution of these walks is thus
\[-j\lambda^{\pi/6}\x + \bar{j}\lambda^{-\pi/6}\x = \frac{-\ii}{2}\sqrt{2-\sqrt{2}+\sqrt{\frac{1}{2} \left(2-\sqrt{2}\right)}}.\]
The second type of walks loop around almost all the way back to $a$: it seems sensible to then just add a step and be left with self-avoiding \emph{polygons} containing the mid-edge $a$. If $P_{T,L}(x)$ is the generating function for \emph{undirected} polygons containing $a$, then the contribution of these walks is
\[\left(\frac{-j\lambda^{7\pi/6}}{\x} + \frac{\bar{j}\lambda^{-7\pi/6}}{\x}\right)P_{T,L}(\x) = \frac{\ii}{2}\sqrt{4+2 \sqrt{2}-\sqrt{2 \left(10+7 \sqrt{2}\right)}}P_{T,L}(\x).\]

For the walks ending on external mid-edges, note that we can pair walks (via reflection through the vertical axis) ending in $\tau^+$ and $\tau^-$, where $\tau$ is any of $\alpha^O, \alpha^I, \epsilon, \beta$. So the contributions of these walks are
\begin{align*}
\alpha^{O+} \cup \alpha^{O-}&: \left(\frac{-j\lambda^{-5\pi/6}+\bar{j}\lambda^{5\pi/6} }{2}\right)A^O_{T,L}(\x) = \frac{\ii}{4}\sqrt{2-\sqrt{2-\sqrt{2}}}A^O_{T,L}(\x)\\
\alpha^{I+} \cup \alpha^{I-}&: \left(\frac{\bar{j}\lambda^{-7\pi/6}-j\lambda^{7\pi/6}}{2}\right)A^I_{T,L}(\x) =  \frac{\ii}{4}\sqrt{2-\sqrt{2+\sqrt{2}}}A^I_{T,L}(\x)\\
\epsilon^+ \cup \epsilon^-&: \left(\frac{\lambda^{-\pi/2}/2-\lambda^{\pi/2}/2 }{2}\right)E_{T,L}(\x) = \frac{\ii}{4}\sqrt{2+\sqrt{2-\sqrt{2}}}E_{T,L}(\x)\\
\beta^+ \cup \beta^-&: \left(\frac{- \bar{j}\lambda^{-\pi/6} +j\lambda^{\pi/6} }{2}\right)B_{T,L}(\x) = \frac{\ii}{4}\sqrt{2+\sqrt{2+\sqrt{2}}}B_{T,L}(\x)
\end{align*}
Adding all the above contributions, equating with 0 and multiplying by $-4\ii$ gives the proposition.
\end{proof}

\subsection{The $O(n)$ model: general $n$}\label{ssec:general_n}

We now generalise Proposition~\ref{prop:unweighted_identity} to allow for configurations which contain closed loops in addition to a SAW component. We take $n\in[-2,2]$, and set $\theta, \sigma$ and $\x$ to be the values which satisfy Lemma~\ref{lem:local_ident}. As before we use $\lambda = \exp(-\ii\sigma)$. 

The generating functions $A^O_{T,L}, A^I_{T,L}, E_{T,L}, B_{T,L}$ now count configurations which may contain non-intersecting closed loops as well as the SAW component. They thus now have a second argument $n$, which is conjugate to the number of closed loops. The function $P_{T,L}$ now counts configurations with a closed loop containing $a$ and (possibly) other non-intersecting closed loops. However, things become tricky if we associate the weight $n$ with the closed loop containing $a$, and so we define
\[P_{T,L}(x;n) =\sum_{\rho\ni a}x^{|\rho|}n^{\ell(\rho)-1}.\]
We introduce another generating function,
\[G_{T,L}(x;n) = \sum_{\rho\not\ni a}x^{|\rho|} n^{\ell(\rho)},\]
which counts the empty configuration plus configurations of closed loops which \emph{do not} contain $a$.

\begin{prop}\label{prop:identity_generaln}
If $n=2\cos\theta$ with $\theta\in[0,\pi]$ and $\x^{-1} = 2\cos((\pi\pm\theta)/4)$, then
\begin{multline}\label{eqn:ident_generaln}
\cos\left(\frac{5(\pi\pm\theta)}{8}\right)A^O_{T,L}(\x;n) + \cos\left(\frac{9\pi\mp7\theta}{8}\right)A^I_{T,L}(\x;n)\\
+\cos\left(\frac{3(\pi\pm\theta)}{8}\right)E_{T,L}(\x;n) + \frac{2}{\x}\cos\left(\frac{7(\pi\pm\theta)}{8}\right)P_{T,L}(\x;n)\\
+\cos\left(\frac{\pi\pm\theta}{8}\right)B_{T,L}(\x;n) = 2\x\cos\left(\frac{\pi\pm\theta}{8}\right)G_{T,L}(\x;n).
\end{multline}
\end{prop}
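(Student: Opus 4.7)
My plan is to closely follow the proof of Proposition~\ref{prop:unweighted_identity}, the key differences being that configurations may now contain non-intersecting closed loops (weighted by $n$) and that we must track configurations consisting solely of loops. I extend the adjusted observable $F^*$ to the $O(n)$ model by weighting each configuration $\gamma=(w,\mathrm{loops})$ from $a$ to $p$ by $x^{|\gamma|}n^{\ell(\gamma)}\e^{-\ii\sigma W^*(w)}$, with $\sigma,\x$ taken from Lemma~\ref{lem:local_ident}, and form $S = \sum_{v\in V'(D_{T,L})}\sum_{m\sim v}(m-v)F^*(m)$. The local identity holds for $F^*$ at every $v\in V'(D_{T,L})$ regardless of loop content, so $S=0$. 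Expanding $S$ over mid-edges, the internal ones cancel and --- since loop weights are invariant under the reflection used to pair $\tau^+\leftrightarrow\tau^-$ --- each external mid-edge in $\alpha^{O\pm}\cup\alpha^{I\pm}\cup\epsilon^\pm\cup\beta^\pm$ contributes the same winding-based coefficient as in Proposition~\ref{prop:unweighted_identity}, now multiplying the loop-enriched generating functions $A^O_{T,L}(\x;n), A^I_{T,L}(\x;n), E_{T,L}(\x;n), B_{T,L}(\x;n)$.

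The substantive new work is to interpret the contributions at $\zeta^\pm$. Any walk from $a$ to $\zeta^\pm$ is either (a) the single-step walk $a\to a^\pm\to\zeta^\pm$, or (b) a long walk that must leave $a$ via $a^\mp$ (once a walk enters $a^\pm$ it has no continuation besides $\zeta^\pm$) and re-enter $\zeta^\pm$ from its non-$a^\pm$ vertex. For (a), any loop through $a^\pm$ in $D_{T,L}$ is forced to use both of $a^\pm$'s mid-edges $a$ and $\zeta^\pm$, so a loop not using $a$ cannot pass through $a^\pm$ or use $\zeta^\pm$; hence the loops compatible with the single-step walk are exactly the configurations counted by $G_{T,L}(\x;n)$, and the combined contribution at $\zeta^\pm$ takes the form $2\x\cos\bigl((\pi\pm\theta)/8\bigr)G_{T,L}(\x;n)$. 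For (b), appending the missing vertex $a^\pm$ and mid-edge $a$ to a long walk yields a self-avoiding polygon through $a$, and conversely each such polygon opens into exactly two long walks (one ending at $\zeta^+$, one at $\zeta^-$) corresponding to its two orientations; the walk has one fewer vertex than the polygon, so the combined contribution is $\frac{2}{\x}\cos\bigl(7(\pi\pm\theta)/8\bigr)P_{T,L}(\x;n)$, with the normalisation $n^{\ell(\rho)-1}$ in $P_{T,L}$ exactly compensating for the fact that the closing polygon plays the role of the SAW component and so does not acquire a factor of $n$.

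Setting $S=0$ and rearranging produces~\eqref{eqn:ident_generaln}. The main technical obstacle is the trigonometric bookkeeping required to combine the geometric factors $(m-v)$ with the winding factors $\e^{-\ii\sigma W^*}$ at each boundary mid-edge, pair mirror images, and simplify to the uniform $\cos(k(\pi\pm\theta)/8)$ form appearing in the statement. The fact that this form handles both regimes of Lemma~\ref{lem:local_ident} follows because the only $\theta$-dependence of the coefficients enters through $\sigma = (\pi\mp 3\theta)/(4\pi)$, so swapping regimes amounts to flipping the sign of $\theta$, exactly as the $\pm$ in the statement records; the $A^I$ coefficient $\cos((9\pi\mp 7\theta)/8)$ is slightly less symmetric than the others simply because the walks reaching $\alpha^{I\pm}$ accrue an extra constant winding shift relative to those reaching $\alpha^{O\pm}$.
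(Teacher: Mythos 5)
Your proposal is correct and follows essentially the same route as the paper: sum the local identity over $V'(D_{T,L})$, pair boundary mid-edges by reflection, and split the $\zeta^\pm$ contributions into the single-step walks (paired with loop configurations avoiding $a$, giving $G_{T,L}$) and the long walks that close into undirected polygons through $a$ (giving $P_{T,L}$ with the $n^{\ell(\rho)-1}$ convention). The paper simply carries out explicitly the trigonometric evaluation of each paired coefficient that you defer as bookkeeping.
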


\begin{proof}
We proceed in exactly the same way as the SAW case. Take $\theta, \sigma$ and $\x$ to be the values which satisfy Lemma~\ref{lem:local_ident}. We once again consider $S$ (see~\eqref{eqn:S_sumlocalident}), which by Lemma~\ref{lem:local_ident} is still 0.

As with Proposition~\ref{prop:unweighted_identity}, we can also compute $S$ by noting that any mid-edge adjacent to two vertices being summed over will contribute 0. This leaves the mid-edges in the boundary of $D_{T,L}$ as well as $\zeta^-$ and $\zeta^+$. We can use the reflective symmetry of the domain to pair walks ending on the boundary. (Once again we set $j=\exp(2\pi\ii/3)/2$ and $\lambda = \exp(-\ii\sigma)$.)
\begin{align*}
\alpha^{O+} \cup \alpha^{O-}&: \left(\frac{-j\lambda^{-5\pi/6}+\bar{j}\lambda^{5\pi/6} }{2}\right)A^O_{T,L}(\x;n) = \frac{\ii}{2}\cos\left(\frac{5(\pi\pm\theta)}{8}\right)A^O_{T,L}(\x;n)\\
\alpha^{I+} \cup \alpha^{I-}&: \left(\frac{\bar{j}\lambda^{-7\pi/6}-j\lambda^{7\pi/6}}{2}\right)A^I_{T,L}(\x;n) =  \frac{\ii}{2}\cos\left(\frac{9\pi\mp7\theta}{8}\right)A^I_{T,L}(\x;n)\\
\epsilon^+ \cup \epsilon^-&: \left(\frac{\lambda^{-\pi/2}/2-\lambda^{\pi/2}/2 }{2}\right)E_{T,L}(\x;n) = \frac{\ii}{2}\cos\left(\frac{3(\pi\pm\theta)}{8}\right)E_{T,L}(\x;n)\\
\beta^+ \cup \beta^-&: \left(\frac{- \bar{j}\lambda^{-\pi/6} +j\lambda^{\pi/6} }{2}\right)B_{T,L}(\x;n) = \frac{\ii}{2}\cos\left(\frac{\pi\pm\theta}{8}\right)B_{T,L}(\x;n)
\end{align*}

For the configurations whose SAW component ends at $\zeta^-$ or $\zeta^+$, we again consider separately those with winding $\pm\pi/6$ and those with winding $\mp7\pi/6$. For the former, the SAW component of a configuration is still just a single step to $\zeta^-$ or $\zeta^+$. The loop component can thus be any collection of loops which do not contain $a$. The contribution of these configurations is
\[\left(-j\lambda^{\pi/6}\x + \bar{j}\lambda^{-\pi/6}\x\right)G_{T,L}(\x;n) = -\ii\x\cos\left(\frac{\pi\pm\theta}{8}\right)G_{T,L}(\x;n).\]

For the second type of configurations ending at $\zeta^-$ or $\zeta^+$, we can again add a step and view the SAW component as a loop which contains $a$. However, this loop does not naturally contribute a factor of $n$, and so for these configurations $n$ will be conjugate to one less than the number of loops. (Hence the definition of $P_{T,L}(x;n)$.) These configurations thus contribute
\[\left(\frac{-j\lambda^{7\pi/6}}{\x} + \frac{\bar{j}\lambda^{-7\pi/6}}{\x}\right)P_{T,L}(\x;n) = \frac{\ii}{\x}\cos\left(\frac{7(\pi\pm\theta)}{8}\right)P_{T,L}(\x;n).\]
Adding all the above contributions together, equating with 0 and multiplying by $-2\ii$ gives the proposition.
\end{proof}

\subsection{Including surface interactions: general $y$}\label{ssec:general_y}
We now wish to introduce surface weights. As was the case in~\cite{Beaton2013Critical}, we associate the surface fugacity $y$ with vertices on the $\beta$ boundary of $D_{T,L}$. (We could derive an identity with the weights on the $\alpha$ boundary, but some of the coefficients would be negative, and this would prevent us from completing a proof of the critical fugacity.) See Figure~\ref{fig:domain_withweights} for an illustration. With the surface weights, we will require $T+L\equiv 1\,(\text{mod }2)$, as we will need to pair together SAWs which end at $\beta^-$ and $\beta^+$ mid-edges.

\begin{figure}
\centering
\begin{picture}(320,150)
\put(0,0){\includegraphics[height=320pt,angle=90]{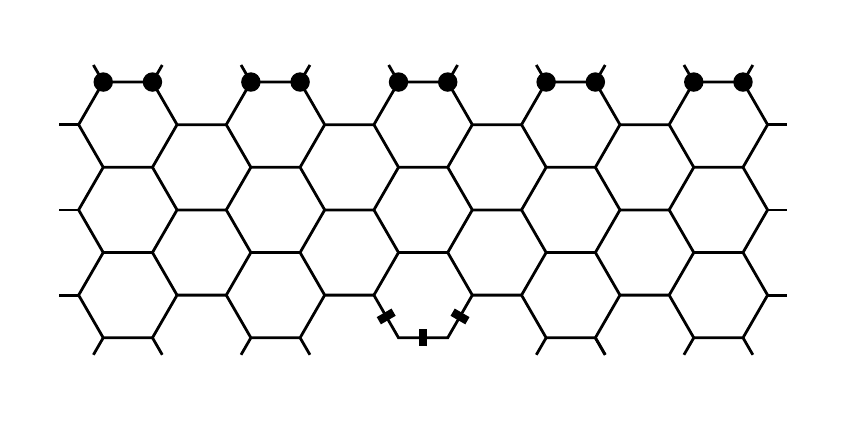}}
\put(300,108){$\epsilon^+$}
\put(300,77){$\epsilon^+$}
\put(300,45){$\epsilon^+$}
\put(5,108){$\epsilon^-$}
\put(5,77){$\epsilon^-$}
\put(5,45){$\epsilon^-$}
\put(217,12){$\alpha^{O+}$}
\put(275,12){$\alpha^{O+}$}
\put(244,12){$\alpha^{I+}$}
\put(78,12){$\alpha^{O-}$}
\put(188,12){$\alpha^{I+}$}
\put(110,12){$\alpha^{I-}$}
\put(54,12){$\alpha^{I-}$}
\put(22,12){$\alpha^{O-}$}
\put(246,138){$\beta^-$}
\put(276,138){$\beta^+$}
\put(221,138){$\beta^+$}
\put(191,138){$\beta^-$}
\put(166,138){$\beta^+$}
\put(136,138){$\beta^-$}
\put(111,138){$\beta^+$}
\put(81,138){$\beta^-$}
\put(56,138){$\beta^+$}
\put(26,138){$\beta^-$}
\put(154,18){$a$}
\put(138,21){$a^-$}
\put(166,21){$a^+$}
\put(176,33){$\zeta^+$}
\put(127,33){$\zeta^-$}
\end{picture}
\caption{The domain $D_{7,4}$, with the weighted vertices on the $\beta$ boundary indicated. For the surface-weighted case, we require $T+L\equiv 1\, (\text{mod } 2)$.}
\label{fig:domain_withweights}
\end{figure}

We use the same generating functions as in Proposition~\ref{prop:identity_generaln}, but now with a variable $y$ keeping track of the number of surface contacts. For example,

\[A^O_{T,L}(x,y;n) = \sum_{\gamma:a\to\alpha^{O+}\bigcup\alpha^{O-}}x^{|\gamma|}n^{\ell(\gamma)}y^{c(\gamma)}.\]



\begin{prop}\label{prop:identity_general_ny}
If $n=2\cos\theta$ with $\theta\in[0,\pi]$, $\x^{-1}=2\cos((\pi\pm\theta)/4)$ and $T+L\equiv1\,(\text{mod }2)$, then
\begin{multline}\label{eqn:ident_general_ny}
\cos\left(\frac{5(\pi\pm\theta)}{8}\right)A^O_{T,L}(\x,y;n) + \cos\left(\frac{9\pi\mp7\theta}{8}\right)A^I_{T,L}(\x,y;n)\\
+\cos\left(\frac{3(\pi\pm\theta)}{8}\right)E_{T,L}(\x,y;n) + \frac{2}{\x}\cos\left(\frac{7(\pi\pm\theta)}{8}\right)P_{T,L}(\x,y;n)\\
+\left[\cos\left(\frac{\pi\pm\theta}{8}\right)-\frac{(1-\x y-\x^2 y^2)\cos\left(\frac{9\pi\pm\theta}{8}\right) + \x^2 y^2\cos\left(\frac{5(\pi\pm\theta)}{8}\right)}{\x y(1+\x y)}\right]B_{T,L}(\x,y;n) \\= 2\x\cos\left(\frac{\pi\pm\theta}{8}\right)G_{T,L}(\x,y;n).
\end{multline}
\end{prop}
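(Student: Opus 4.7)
The plan is to follow the two-way counting template of Propositions~\ref{prop:unweighted_identity} and~\ref{prop:identity_generaln}, modified to absorb the surface corrections from~\eqref{eqn:local_ident_surface}. I would define the adjusted observable $F^*$ with the $y$ weight included, and form the sum $S$ in~\eqref{eqn:S_sumlocalident} on the same domain $D_{T,L}$. Interior vertices of $V'(D_{T,L})$ still contribute $0$ to $S$ by Lemma~\ref{lem:local_ident}, but each surface vertex on the $\beta$ boundary now contributes the right-hand side of~\eqref{eqn:local_ident_surface}; thus $S$ now equals the total surface correction rather than $0$.

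Meanwhile, the mid-edge-by-mid-edge evaluation of $S$ is unchanged from the proof of Proposition~\ref{prop:identity_generaln}: internal mid-edges cancel, and the external mid-edges together with $\zeta^\pm$ produce the same linear combination of $A^O_{T,L}$, $A^I_{T,L}$, $E_{T,L}$, $B_{T,L}$, $P_{T,L}$ and $G_{T,L}$ as in~\eqref{eqn:ident_generaln}, except that all generating functions now carry the $y$ weight. Equating the two expressions for $S$, the proposition reduces to showing that the total surface correction can be rewritten as an explicit scalar multiple of $B_{T,L}(\x,y;n)$, which modifies the coefficient of $B_{T,L}$ to the bracketed expression in~\eqref{eqn:ident_general_ny}.

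To compute the total surface correction, I would observe that each configuration $\gamma$ contributing to $B_{T,L}(\x,y;n)$ has its SAW component ending at a unique boundary mid-edge $p\in\beta^+\cup\beta^-$, adjacent to a unique surface vertex $v$, entered either from the counter-clockwise neighbour $q$ or from the clockwise neighbour $r$. Hence each such $\gamma$ appears in exactly one of the two correction sums in~\eqref{eqn:local_ident_surface}, at exactly one surface vertex. Collecting these contributions over all $v$ on the $\beta$ boundary and invoking the vertical-reflection symmetry of $D_{T,L}$ (enabled by the parity condition $T+L\equiv 1\pmod 2$, so that $\beta^+$ and $\beta^-$ mid-edges pair up), the phases of $\beta^+$- and $\beta^-$-ending configurations combine, and the whole surface correction collapses into a single scalar multiple of $B_{T,L}(\x,y;n)$.

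The main obstacle is the explicit trigonometric computation showing that this scalar, added to the existing $B_{T,L}$ coefficient $\cos\bigl(\frac{\pi\pm\theta}{8}\bigr)$ from the boundary side, yields the bracketed expression in~\eqref{eqn:ident_general_ny}. This requires careful bookkeeping of the geometric factors $(q-v)$ and $(r-v)$, the prefactors $(1-y)(\x y\lambda^{\pm\pi/3})^{-1}$, the phases contributed by $W^*(w)$ for walks entering $v$ from either side, and the effect of the reflection on these phases. The simplifications use standard trigonometric identities at the critical values of Lemma~\ref{lem:local_ident}; as a sanity check, the bracket must collapse to $\cos\bigl(\frac{\pi\pm\theta}{8}\bigr)$ at $y=1$, since the $(1-y)$ prefactor then kills the correction and we must recover Proposition~\ref{prop:identity_generaln}.
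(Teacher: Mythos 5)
Your overall strategy is the paper's: compute $S$ twice, with the mid-edge evaluation unchanged from Proposition~\ref{prop:identity_generaln} and the vertex evaluation now producing a surface correction coming from~\eqref{eqn:local_ident_surface}, which must then be rewritten as a multiple of $B_{T,L}(\x,y;n)$. The gap is in that final rewriting. Reflection through the vertical axis (available because $T+L\equiv1\pmod 2$) only pairs configurations ending at $\beta^+$ mid-edges with those ending at $\beta^-$ mid-edges; it does not relate the two correction sums in~\eqref{eqn:local_ident_surface} attached to a single surface vertex, which carry genuinely different prefactors $(q-v)(1-y)(\x y\lambda^{\pi/3})^{-1}$ and $(r-v)(1-y)(\x y\lambda^{-\pi/3})^{-1}$. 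After the reflection pairing you are therefore left with a combination $c_1\Gamma^1+c_2\Gamma^2$ of two a priori independent generating functions --- $\Gamma^1$ counting configurations that approach their final $\beta^-$ vertex from the south-west and $\Gamma^2$ those that approach it from the east --- and since $c_1\neq c_2$ this cannot be collapsed to a scalar multiple of $B_{T,L}$ without further input. Phases and trigonometric identities alone will not rescue this.

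The missing combinatorial ingredient, which is the heart of the paper's argument, is a reflect-and-extend bijection: reflecting a $\Gamma^1$ configuration through the vertical axis and appending one step along the top boundary produces exactly the $\Gamma^2$ configurations, whence $\Gamma^2_{T,L}(x,y;n)=xy\,\Gamma^1_{T,L}(x,y;n)$ (the factor $xy$ records the added step and the added surface contact), and similarly $B_{T,L}(x,y;n)=2(xy+x^2y^2)\Gamma^1_{T,L}(x,y;n)$ since every $B_{T,L}$ walk extends a (possibly reflected) $\Gamma^1$ walk by one or two steps. These two identities are precisely what reduce the correction to a multiple of $B_{T,L}$ and are the source of the denominator $\x y(1+\x y)$ in the bracketed coefficient of~\eqref{eqn:ident_general_ny}; your outline does not account for them. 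The $y=1$ sanity check you propose is sound but cannot detect this omission, since the entire correction vanishes there.
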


\begin{proof}
We again consider the sum $S$. When $y\neq 1$ the contribution of the weighted $\beta$ vertices will not be 0, but the total can instead be written as a multiple of the $B_{T,L}$ generating function.

A $\beta^-$ configuration must approach its final vertex either from the south-west or from the east. Let $\gamma_1$ be a configuration approaching a $\beta^-$ vertex from the south-west, and say $\gamma_1^l$ and $\gamma_1^r$ are the configurations obtained by appending a left or right turn to $\gamma_1$ respectively. Then the sum of the contributions of $\gamma_1, \gamma_1^l$ and $\gamma_1^r$ is
\[\x^{|\gamma_1|}n^{\ell(\gamma_1)}y^{c(\gamma_1)}(\bar{j} \lambda^{-\pi/6}+\x y j\lambda^{\pi/6}+\x y\lambda^{-\pi/2}).\]
Similarly, let $\gamma_2$ be a configuration approaching a $\beta^-$ vertex from the east, and $\gamma_2^l$ and $\gamma_2^r$ its two extensions. Then the contribution of these three walks is
\[\x^{|\gamma_2|}n^{\ell(\gamma_2)}y^{c(\gamma_2)}(\lambda^{\pi/2}+\x y \bar{j} \lambda^{5\pi/6}+\x y j\lambda^{\pi/6})\]
Now any configuration finishing adjacent to a $\beta^-$ vertex must be described by exactly one of $\gamma_1,\gamma_1^l,\gamma_1^r,\gamma_2,\gamma_2^l,\gamma_2^r$. So if $\Gamma^1_{T,L}(x,y;n)$ and $\Gamma^2_{T,L}(x,y;n)$ are the generating functions for $\gamma_1$ and $\gamma_2$ configurations respectively, the contribution of all $\beta^-$ vertices is
\begin{equation}\label{eqn:betaplus_contr}
(\bar{j} \lambda^{-\pi/6}+\x y j \lambda^{\pi/6}+\x y \lambda^{-\pi/2})\Gamma^1_{T,L}(\x,y;n) + (\lambda^{\pi/2}+\x y \bar{j} \lambda^{5\pi/6}+\x y j \lambda^{\pi/6})\Gamma^2_{T,L}(\x,y;n).
\end{equation}
But now it's easy to see that any reflected (in the vertical axis) $\gamma_1$ walk can be extended to a unique $\gamma_2$ walk, and any $\gamma_2$ walk is an extension of a unique reflected $\gamma_1$ walk. So in fact
\[\Gamma^2_{T,L}(x,y;n) = xy\Gamma^1_{T,L}(x,y;n).\]
So \eqref{eqn:betaplus_contr} becomes
\begin{equation}\label{eqn:betaplus_simplified}
(\bar{j} \lambda^{-\pi/6}+\x y j \lambda^{\pi/6}+\x y \lambda^{-\pi/2}+\x y \lambda^{\pi/2}+\x^2 y^2\bar{j}\lambda^{5\pi/6} + \x^2y^2j\lambda^{\pi/6})\Gamma^1_{T,L}(\x,y;n).
\end{equation}
Since any $\beta^-$ vertex can be reflected through the vertical axis to give a $\beta^+$ vertex, the contribution of $\beta^+$ vertices is
\begin{equation}\label{eqn:betaminus_simplified}
(-j\lambda^{\pi/6} - \x y\bar{j}\lambda^{-\pi/6} - \x y\lambda^{\pi/2}-\x y\lambda^{-\pi/2}-\x^2 y^2j\lambda^{-5\pi/6} - \x^2y^2\bar{j}\lambda^{-\pi/6})\Gamma^1_{T,L}(\x,y;n).
\end{equation}
So the contribution of all $\beta^-$ and $\beta^+$ vertices is
\begin{align}
&\left[(1-\x y-\x^2 y^2)(\bar{j}\lambda^{-\pi/6}-j\lambda^{\pi/6})+\x^2 y^2(\bar{j}\lambda^{5\pi/6}-j\lambda^{-5\pi/6})\right]\Gamma^1_{T,L}(\x,y;n)\notag\\
&= \ii\left[(1-\x y-\x^2 y^2)\cos\left(\frac{9\pi\pm\theta}{8}\right) + \x^2 y^2\cos\left(\frac{5(\pi\pm\theta)}{8}\right)\right]\Gamma^1_{T,L}(\x,y;n).\label{eqn:contribution_B_verts}
\end{align}

Now any walk counted by $B_{T,L}$ can be obtained by extending a unique $\Gamma^1_{T,L}$ walk (or a reflected one) by either a single step or by two steps. Similarly, any $\Gamma^1_{T,L}$ walk (or a reflected one) can be extended by one or two steps to give a $B_{T,L}$ walk. So we have
\begin{equation}\label{eqn:BintermsofG}
B_{T,L}(x,y;n) = 2(xy +x^2y^2)\Gamma^1_{T,L}(x,y;n).
\end{equation}
Combining~\eqref{eqn:contribution_B_verts} and~\eqref{eqn:BintermsofG}, we can write the contribution of all $\beta^-$ and $\beta^+$ vertices in terms of $B_{T,L}(\x,y;n)$. We thus find
\begin{equation}\label{eqn:S_weighted_itoB}
S=\frac{\ii}{2\x y(1+\x y)}\left[(1-\x y-\x^2 y^2)\cos\left(\frac{9\pi\pm\theta}{8}\right) + \x^2 y^2\cos\left(\frac{5(\pi\pm\theta)}{8}\right)\right]B_{T,L}(\x,y;n).
\end{equation}

The second method for calculating $S$, by noting that internal mid-edges contribute 0 to the sum, does not change from the unweighted case, and so we have
\begin{multline}\label{eqn:ident_general_ny_unarranged}
\frac{\ii}{2}\cos\left(\frac{5(\pi\pm\theta)}{8}\right)A^O_{T,L}(\x,y;n) + \frac{\ii}{2}\cos\left(\frac{9\pi\mp7\theta}{8}\right)A^I_{T,L}(\x,y;n)\\
+\frac{\ii}{2}\cos\left(\frac{3(\pi\pm\theta)}{8}\right)E_{T,L}(\x,y;n) + \frac{\ii}{\x}\cos\left(\frac{7(\pi\pm\theta)}{8}\right)P_{T,L}(\x,y;n)\\
+\frac{\ii}{2}\cos\left(\frac{\pi\pm\theta}{8}\right)B_{T,L}(\x,y;n) - \ii\x\cos\left(\frac{\pi\pm\theta}{8}\right)G_{T,L}(\x,y;n)\\
=\frac{\ii}{2\x y(1+\x y)}\left[(1-\x y-\x^2 y^2)\cos\left(\frac{9\pi\pm\theta}{8}\right) + \x^2 y^2\cos\left(\frac{5(\pi\pm\theta)}{8}\right)\right]B_{T,L}(\x,y;n).
\end{multline}
Multiplying by $-2\ii$ and rearranging gives the result of the proposition.
\end{proof}

We mention here that the coefficient of $B_{T,L}(\x,y;n)$ in the dilute version of~\eqref{eqn:ident_general_ny} is 0 when
\begin{align*}
y&=2\sin\left(\frac{\pi+\theta}{4}\right)\sqrt{\frac{\sin\left(\frac{-5\pi+\theta}{8}\right)}{\sin\left(\frac{\pi-5\theta}{8}\right)}}\\
&= \sqrt{\frac{2+\sqrt{2-n}}{1 +\sqrt{2-n}-\sqrt{2+\sqrt{2-n}}}}.
\end{align*}
Since we are able to prove that when $n=0$ this is the critical surface fugacity for SAWs, we conjecture that for $n\in[-2,2]$ this is the critical surface fugacity for the $O(n)$ loop model (stated earlier as Conjecture~\ref{conj:crit_surface_generaln}). We are, however, unable to prove this for general values of $n$.

\section{Confined self-avoiding walks}\label{sec:confined_saws}

We now consider only SAWs rather than the more general $O(n)$ model; that is, we will specialise to $n=0$ in the dilute regime. It is for this model that we are able to derive a proof of the critical surface fugacity $\y$ (Theorem~\ref{thm:main_thm}). The identity~\eqref{eqn:ident_generaln} will form a crucial part of this proof, but we first need to establish some results which relate $\y$ to the generating functions featured in~\eqref{eqn:ident_generaln}. 
By necessity this section is essentially the same as Section 3 of~\cite{Beaton2013Critical}, and thus for several proofs we will refer the reader to that article.

\subsection{Self-avoiding walks in a half-plane}\label{ssec:half_plane}

Recall from Section~\ref{sec:intro} that we define the partition function
\[\mathbf C_n^+(y) = \sum_{m\geq0}c^+_n(m)y^m,\]
where $c^+_n(m)$ is the number of $n$-step SAWs starting on the boundary of the half-space and occupying $m$ vertices in the surface. In keeping with the methodology of the previous section, we will consider SAWs to start and end on mid-edges of the lattice. As we did in the previous section, we will take SAWs to begin on the mid-edge of a horizontal edge lying along the surface. We will take $(\gamma_o,\gamma_1,\ldots,\gamma_n)$ to be the sequence of mid-edges defining a SAW $\gamma$.
\begin{prop}\label{prop:FE_etc_existence}
For $y>0$, 
\[\mu(y):=\lim_{n\to\infty}\mathbf C_n^+(y)^{1/n}\]
exists and is finite. It is a log-convex, non-decreasing function of $\log y$, and therefore continuous and almost everywhere differentiable.

For $0<y\leq 1$,
\[\mu(y) = \mu(1) = \mu,\]
where $\mu=\sqrt{2+\sqrt{2}}$ is the growth constant of SAWs on the honeycomb lattice. Moreover, for any $y>0$, 
\[\mu(y)\geq\max\{\mu,\sqrt{y}\}.\]
This implies the existence of a critical value $\y$, with $1\leq \y\leq\mu^2$, which delineates the transition from the desorbed phase to the adsorbed phase:
\[\mu(y)\begin{cases}=\mu &\text{if }y\leq\y,\\>\mu& \text{if }y>\y.\end{cases}\]
\end{prop}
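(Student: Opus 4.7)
The plan is to adapt the arguments from Section~3 of \cite{Beaton2013Critical} to the rotated orientation of the honeycomb lattice. These arguments are themselves standard adaptations of Hammersley-Welsh-style techniques to adsorbing SAWs, and the adaptation to the present lattice should be largely mechanical.

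The overall strategy is as follows. First, introduce an appropriate class of half-plane walks with generating function $X_n(y) = \sum_m x_n(m)y^m$ amenable to clean concatenation. The natural choice is a class of bridge-type objects: half-space SAWs $(\gamma_0,\ldots,\gamma_n)$ starting at the fixed mid-edge on the surface, with a height condition chosen so that the concatenation of two such walks---obtained by translating the second so that its initial mid-edge coincides with the terminal mid-edge of the first---is again a walk of the same class, with surface contacts correctly tallied (the height separation guaranteeing self-avoidance of the composite). Concatenation then yields an (approximate) supermultiplicativity of the form $X_{n+m}(y) \geq X_n(y) \cdot x_m$ or $X_n(y)\cdot X_m(y)$ up to a polynomial correction, and Fekete's lemma gives existence of $\mu_X(y) := \lim_{n\to\infty} X_n(y)^{1/n}$.

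Second, a Hammersley-Welsh-type decomposition---expressing any half-space SAW as a concatenation of sub-bridges of strictly decreasing span, with $\exp(O(\sqrt{n}))$ such decompositions of total length $n$---gives the bound $\mathbf{C}_n^+(y) \leq \exp(O(\sqrt{n}))\cdot X_n(y)$, and hence $\mu(y) \leq \mu_X(y)$; the reverse inequality is immediate, so $\mu(y) = \mu_X(y)$. From this limit representation, log-convexity of $\mu(y)$ in $\log y$ follows from log-convexity of each polynomial $X_n(y)$ (standard for polynomials with non-negative coefficients) together with the fact that pointwise limits of convex functions are convex. Monotonicity is immediate, and continuity and almost-everywhere differentiability are standard consequences of convexity. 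For the bounds, $\mu(y) \leq \mu$ when $y \leq 1$ follows from $\mathbf{C}_n^+(y) \leq c_n^+ \leq c_n$; the matching lower bound $\mu(y)\geq\mu$ comes from restricting to walks with exactly one surface contact at the origin, giving $\mathbf{C}_n^+(y) \geq y\cdot b_n$ with $b_n$ the unweighted bridge count. Finally, $\mu(y) \geq \sqrt{y}$ follows by exhibiting a family of half-plane walks making $\Omega(n)$ surface contacts in $n$ steps, easily constructed in the rotated orientation from the zigzag structure of the surface.

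The principal obstacle is a careful choice of the class of walks $X_n(y)$ that simultaneously supports clean concatenation, correctly bookkeeps surface contacts (bridges in the strict sense touch the surface only at their starting vertex, so a more refined class is needed to track arbitrary contact counts), and is compatible with the Hammersley-Welsh unfolding. Since \cite{Beaton2013Critical} executes exactly this program for the other orientation of the honeycomb lattice, the present adaptation should amount to transcribing that paper's argument with the rotated lattice picture substituted throughout, taking particular care with the mid-edge convention at the starting and ending points.
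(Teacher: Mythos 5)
Your overall architecture matches the paper's: both proofs run the Hammersley--Torrie--Whittington programme of passing to a concatenable class of walks (the paper's \emph{unfolded walks}, with minimal/maximal $\mathbf x$-coordinate at the two endpoints, i.e.\ unfolding \emph{parallel} to the surface so that contacts survive concatenation), applying Fekete to get the limit for that class, and then bounding $\mathbf C_n^+(y)$ by the unfolded partition function up to $e^{O(\sqrt n)}$. The convexity, monotonicity, $\mu(y)=\mu$ for $y\le 1$, and $\mu(y)\ge\sqrt y$ statements are handled the same way in both (though note your ``$\Omega(n)$ contacts'' only yields $\mu(y)\ge y^{\varepsilon}$; to get $\sqrt y$ you need the walk running along the zigzag surface, which makes a contact exactly every second step).

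The genuine gap is in the sentence claiming the adaptation ``should be largely mechanical'' and amounts to ``transcribing'' the argument of \cite{Beaton2013Critical}. In the rotated orientation the lattice is \emph{not} invariant under reflection through a vertical line passing through a vertex, so each unfolding step (reflecting a prefix or suffix of the walk through such a line) does not produce a lattice path: one must insert an extra horizontal edge at every fold. Since a walk of length $n$ can be folded up to $O(\sqrt n)$ times, the unfolded image of an $n$-step walk has length anywhere in $[n, n+\delta\sqrt n]$, and the standard one-line inequality $e^{-c\sqrt n}\mathbf C_n^+(y)\le(1+1/y)\mathbf U_{n+1}^+(y)$ simply cannot be written down --- the map is not into a single length class. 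This is exactly the point your plan does not address. The paper resolves it by defining a \emph{fixed-length unfolding}: after all folds, the walk is padded with carefully chosen four-step detours (inserted at or near the endpoint, depending on the final edge orientation, and only strictly above the surface so that at most two spurious contacts are created) until its length reaches $n+\lfloor 3\sqrt n\rfloor+i$ with $i\in\{0,1,2,3\}$. Only then does one obtain $\mathbf C_n^+(y)\le 4(1+y^{-2})\max_i e^{c\sqrt{s_i(n)}}\mathbf U^+_{s_i(n)}(y)$ and a clean passage to the limit. Without some such device your step from $\mathbf C_n^+(y)$ to the unfolded class does not go through, so the proposal as written is incomplete at its central step. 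A secondary imprecision: your concatenable class is described via a \emph{height} condition, but a vertically-stacked concatenation confines all surface contacts to the first factor; the correct class for the adsorption problem is unfolded in the horizontal direction, as above.
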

The existence of $\mu(y)$ has been proved by Hammersley, Torrie and Whittington~\cite{Hammersley1982Selfavoiding} in the case of the $d$-dimensional hypercubic lattice. Their proof uses a type of SAW called an \emph{unfolded walk}, which is a SAW whose origin and end-point have minimal and maximal $\mathbf x$-coordinates respectively. The usefulness of unfolded walks arises from the fact that they can be concatenated freely without creating self-intersections. 
If $u^+_n(m)$ and $\mathbf U^+_n(y)$ are defined for unfolded walks analogously to $c^+_n(m)$ and $\mathbf C^+_n(y)$, then it is straightforward to show
\[\lim_{n\to\infty}\mathbf U^+_n(y)^{1/n}\]
exists and satisfies analogous properties to those described in the proposition.

To relate unfolded walks to general SAWs, Hammersley, Torrie and Whittington show that for the hypercubic lattice,
\begin{equation}\label{eqn:unfolded_same_limit}
\lim_{n\to\infty}\mathbf C^+_n(y)^{1/n} = \lim_{n\to\infty}\mathbf U^+_n(y)^{1/n}.
\end{equation}
They use a process called \emph{unfolding} to relate regular and unfolded SAWs. Unfolding consists of reflecting parts of a walk through lines parallel to the $\mathbf y$-axis and passing through vertices of the walk with maximal or minimal $\mathbf x$-coordinates, until the resulting walk is unfolded. The number of SAWs which result in the same unfolded walk can be bounded above by a sub-exponential term, to result in the inequality
\begin{equation}\label{eqn:ineq_unfolding}
e^{-c\sqrt{n}} \mathbf C_n^+(y) \leq (1+1/y)\mathbf U_{n+1}^+(y) \leq (1+1/y)\mathbf C_{n+1}^+(y),\qquad n\geq N_c,
\end{equation}
where $c$ and $N_c$ are positive (finite) constants. (The difference in lengths arises because the authors always add a step at the end when unfolding, to guarantee that the endpoint is strictly to the right of all other points.) Raising all terms to the power of $1/n$ and taking appropriate lim infs and lim sups yields~\eqref{eqn:unfolded_same_limit}.




\begin{proof} 
The concept of an unfolded walk is well-defined on the honeycomb lattice: let $u^+_n(m)$ be the number of $n$-step walks in the upper half-plane which start at a horizontal mid-edge on the surface, visit $m$ surface vertices and whose starting (resp. ending) point has minimal (resp. maximal) $\mathbf x$-coordinate. Then, let 
\[\mathbf U^+_n(y) := \sum_{m\geq 0}u^+_n(m)y^m.\]
For unfolded walks on the honeycomb lattice, the proof of~\cite{Hammersley1982Selfavoiding} can be applied \emph{mutatis mutandis} to show that 
\[\mu(y) = \lim_{n\to\infty} \mathbf U_n^+(y)^{1/n}\]
exists and satisfies the properties given in the proposition. (Note that when concatenating unfolded walks on the honeycomb lattice, the addition of one or two mid-edges at the point of concatenation may be necessary. This does not interfere with the proof.)

The process of unfolding on our honeycomb lattice is made more complicated by the fact that the lattice is not invariant under reflection through a vertical line passing through a vertex. It is thus necessary to insert a horizontal edge into a walk each time we reflect a component. The number of edges added when unfolding a walk of length $n$ is at most $O(\sqrt{n})$, which follows from the computation of the maximum number of pieces in a partition of $n$ when all pieces are distinct. Indeed, the exact value of this maximum is\footnote{This is simply the floor of the inverse of the triangular number function, $T(x) = x(x+1)/2$.}
\[\Bigg\lfloor\frac{-1+\sqrt{1+8n}}{2}\Bigg\rfloor.\] 
Working backwards, one observes that while the \emph{number} of SAWs which result in the same unfolded walk is at most $e^{c\sqrt{n}}$ for some $c>0$, the \emph{lengths} of those walks can range in $[n-\delta\sqrt{n},n]$ for some $\delta>0$.

This is undesirable, as it prevents us from writing a simple relation like~\eqref{eqn:ineq_unfolding}. Instead, we define a process called \emph{fixed-length unfolding}. The procedure, applied to a walk $\gamma$ of length $n$, is simply the following:
\begin{enumerate}
\item If the starting point of $\gamma$ already satisfies $\mathbf x(\gamma_0) < \mathbf x (\gamma_i)$ for $1\leq i\leq n$, skip to step 3. Otherwise, let $\mathcal{L}$ be the set of vertices visited by $\gamma$ satisfying $v\in\mathcal{L} \Rightarrow \mathbf x(v) \leq \mathbf x(\gamma_i)$ for $0\leq i \leq n$, and let $v_0$ be the first of those vertices visited by $\gamma$. Let $p(\gamma)$ be the section of $\gamma$ from $\gamma_0$ to $v_0$, and $s(\gamma)$ be the section of $\gamma$ from $v_0$ to $\gamma_n$. 
\item Take $\gamma \mapsto r(p(\gamma))\circ \mathbbm h \circ s(\gamma)$, where $r$ denotes reflection through the vertical axis, $\mathbbm h$ is a horizontal edge, and $\circ$ denotes concatenation. Return to step 1.
\item If the endpoint already satisfies $\mathbf x(\gamma_n) > \mathbf x(\gamma_i)$ for $0\leq i \leq n-1$, skip to step 5 (a). Otherwise, let $\mathcal{R}$ be the set of vertices visited by $\gamma$ satisfying $v\in\mathcal{R} \Rightarrow \mathbf x(v) \geq \mathbf x(\gamma_i)$ for $0\leq i \leq n$, and let $v_1$ be the last of those vertices visited by $\gamma$. Let $p(\gamma)$ be the section of $\gamma$ from $\gamma_0$ to $v_1$, and $s(\gamma)$ be the section of $\gamma$ from $v_1$ to $\gamma_n$. 
\item Take $\gamma \mapsto p(\gamma)\circ \mathbbm h \circ r(s(\gamma))$. Return to step 3.
\item \begin{enumerate}
  \item If $|\gamma|\geq n+\lfloor 3\sqrt{n}\rfloor$ we are done. Otherwise, take $\gamma \mapsto \gamma \circ \mathbbm w(\gamma)$, where 
\[\mathbbm w(\gamma) = \begin{cases} 
\mathbbm l \circ \mathbbm r \circ \mathbbm r \circ \mathbbm l & \text{if }\gamma \text{ ends on a horizontal edge,}\\
\mathbbm l \circ \mathbbm l \circ \mathbbm r \circ \mathbbm r & \text{if }\gamma \text{ ends on a negative edge,}\\
\mathbbm r \circ \mathbbm r \circ \mathbbm l \circ \mathbbm l & \text{if }\gamma \text{ ends on a positive edge,}
\end{cases}\]
 and $\mathbbm l$ (resp. $\mathbbm r$) is a single left (resp. right) turn. Skip to step 6.
\item If $|\gamma|\geq n+\lfloor 3\sqrt{n}\rfloor$ we are done. Otherwise, define $q_i(\gamma)$ to be the walk resulting from inserting $\mathbbm w = \mathbbm l \circ \mathbbm r \circ \mathbbm r \circ \mathbbm l$ into the $i$-th position of $\gamma$, and take
\[\gamma \mapsto \begin{cases} 
q_{|\gamma|-2}(\gamma) & \text{if }\gamma \text{ ends on a horizontal edge,}\\
q_{|\gamma|-1}(\gamma) & \text{if }\gamma \text{ ends on a negative edge,}\\
q_{|\gamma|-3}(\gamma) & \text{if }\gamma \text{ ends on a positive edge.}\\
\end{cases}\]
  \end{enumerate}
\item If the endpoint of $\gamma$ is adjacent to a surface vertex, go to step 5 (b). Otherwise, return to step 5 (a).
\end{enumerate}
In Figure~\ref{fig:unfolding} we illustrate a walk before and after fixed-length unfolding.

\begin{figure}
\centering
\begin{picture}(450,200)
\put(0,0){\includegraphics[width=450pt]{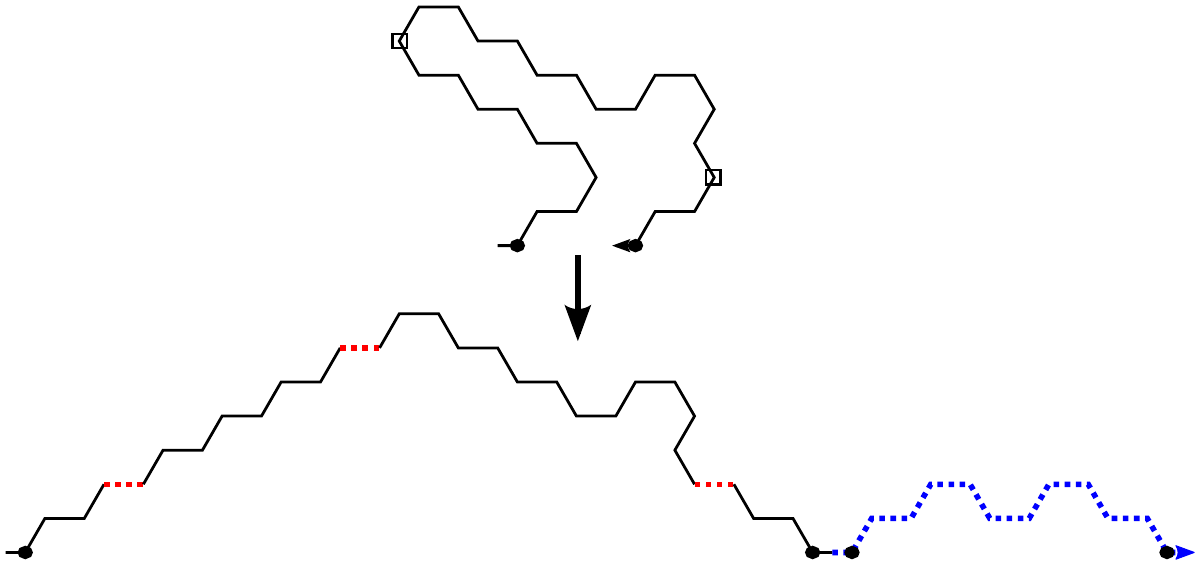}}
\put(134,195){$v_0$}
\put(276,143){$v_1$}
\end{picture}
\caption{The process of fixed-length unfolding applied to a walk of length 27, producing a walk of length $27 + \lfloor 3\sqrt{27}\rfloor =42$. In the upper diagram the initial locations of $v_0$ and $v_1$ are indicated. In the lower diagram, the edges which are added at the unfolding points as well as at the end are dotted. Notice that, as this is a walk which ends adjacent to a surface vertex, in being unfolded it picks up two new surface contacts.}
\label{fig:unfolding}
\end{figure}

We make several observations here. Firstly, since we never have to perform the unfolding operation more than $\lfloor 3\sqrt{n}\rfloor$ times, the process takes a walk of length $n$ and maps it to an unfolded walk of length $n+\lfloor 3\sqrt{n}\rfloor+i$, where $i=0,1,2$ or $3$. Secondly, since the points at which we unfold (called $v_0$ and $v_1$ above) can never lie in the surface, the only time when we can add surface contacts is when we add steps at the end. This can only happen to a walk which ends adjacent to a surface vertex, and moreover, we only ever add two new contacts, since step 5 (b) involves adding steps strictly above the surface. Finally, the height of the endpoint of a walk is preserved by this operation. This is not strictly necessary for the results presented here, but it allows for other results previously shown for the hypercubic lattice in~\cite{Hammersley1982Selfavoiding,vanRensburg2006Selfavoiding} to be applied to the honeycomb lattice, so we include it for completeness.

One then ends up with something similar to the first inequality in~\eqref{eqn:ineq_unfolding}:

\[\mathbf C^+_n(y) \leq \left(1+\frac{1}{y^2}\right)\left(e^{c\sqrt{s_0(n)}}\mathbf U^+_{s_0(n)}(y) + e^{c\sqrt{s_1(n)}}\mathbf U^+_{s_1(n)}(y) + e^{c\sqrt{s_2(n)}} \mathbf U^+_{s_2(n)}(y) + e^{c\sqrt{s_3(n)}}\mathbf U^+_{s_3(n)}(y)\right),\]
where $s_i(n) := n+\lfloor3\sqrt{n}\rfloor + i$ and $c>0$ is a constant. It follows that
\begin{equation}\label{eqn:honeycomb_unfolding_ineq}
\mathbf C^+_n(y)\leq 4 \left(1+\frac{1}{y^2}\right)\max_{0\leq i\leq 3}\Big\{e^{c\sqrt{s_i(n)}} \mathbf U^+_{s_i(n)}(y)\Big\}. 
\end{equation}
To ease notation, define $m_n = s_{i_n}(n)$, where $0\leq i_n \leq 3$ and $i=i_n$ is the value which maximises $ e^{c\sqrt{s_i(n)}}\mathbf U^+_{s_i(n)}(y)$. Then~\eqref{eqn:honeycomb_unfolding_ineq} is
\begin{equation}\label{eqn:unfolding_simplified}
\mathbf C^+_n(y) \leq 4\left(1+\frac{1}{y^2}\right) e^{c\sqrt{m_n}}\mathbf U^+_{m_n}(y).
\end{equation}

Now the sequence $\{\mathbf U^+_{m_n}(y)\}_n$ is almost a subsequence of $\{\mathbf U^+_n(y)\}_n$ -- the terms come from the latter, in the right order, but may repeat up to three times. Similarly, the sequence $\{\mathbf U^+_{m_n}(y)^{1/m_n}\}_n$ is almost a subsequence of $\{\mathbf U^+_n(y)^{1/n}\}_n$. Since this repetition of elements makes no difference to matters of convergence, the convergence of $\{\mathbf U^+_n(y)^{1/n}\}_n$ to $\mu(y)$ ensures that $\{\mathbf U^+_{m_n}(y)^{1/m_n}\}_n$ converges to the same limit. Basic limit theorems and the fact that $n$ and $m_n$ differ by at most $\lfloor3\sqrt{n}\rfloor+3$ can then be used to show that $\{\mathbf U^+_{m_n}(y)^{1/n}\}_n$ also converges to $\mu(y)$.

We can then raise both sides of~\eqref{eqn:unfolding_simplified} to the power of $1/n$, and (noting that the other factors on the RHS will go to 1 in the limit) obtain
\[\limsup_{n\to\infty} \mathbf C^+_n(y)^{1/n} \leq \mu(y).\]
The other bound is far simpler: we obviously have $\mathbf U^+_n(y) \leq \mathbf C^+_n(y)$, and hence
\[\mu(y) \leq \liminf_{n\to\infty} \mathbf C^+_n(y)^{1/n}.\]
This completes the first part of the proposition.

The other results are elementary, and follow from a paper of Whittington~\cite{Whittington1975Selfavoiding}. In particular, the lower bound $\mu(y)\geq \sqrt{y}$ is obtained by considering walks which step along the surface.
\end{proof}

A quantity of much interest is the mean density of vertices in the surface, given by
\[\frac{1}{n}\frac{\sum_m mc^+_n(m)y^m}{\sum_m c_n^+(m)y^m} = \frac{y}{n}\frac{\partial\log \mathbf C^+_n(y)}{\partial y}.\]
In the limit of infinitely long walks, this density tends to\footnote{The exchange of the limit and the derivative is possible thanks to the convexity of $\log \mu(y)$, see for instance~\cite[Thm.~B7]{vanRensburg2000Statistical}.}
\[y\frac{\partial \log\mu(y)}{\partial y}.\]
From the behaviour of $\mu(y)$ given in Proposition~\ref{prop:FE_etc_existence}, it can be seen that the density of vertices in the surface is 0 for $y<\y$ and is positive for $y>\y$.

\subsection{Self-avoiding walks in a strip}\label{ssec:strip}

We now consider SAWs in a horizontal strip of the honeycomb lattice. In this geometry there are effectively two impenetrable surfaces with which walks can interact; we thus introduce a second surface fugacity $z$ associated with visits to vertices lying on the upper surface. As in the previous subsection, we take walks to start and end on mid-edges of the lattice, and set the starting point to be a horizontal mid-edge between two lower surface vertices. To make symmetry arguments more straightforward, we will remove the mid-edges protruding from the top and bottom of the strip.

We define an \emph{arch} to be a SAW which starts and ends on mid-edges at the bottom of the strip, and a \emph{bridge} to be a SAW which starts at the bottom and finishes at the top. Let $\hat{c}_{T,n}(l,m)$ be the number of $n$-step SAWs in a strip of height $T$ which visit $l$ vertices in the bottom surface and $m$ vertices in the top. Similarly, define $\hat{a}_{T,n}(l,m)$ and $\hat{b}_{T,n}(l,m)$ for arches and bridges respectively. (We use $\hat{a}$, $\hat{b}$ and $\hat{c}$ to distinguish these walks from those which end on protruding half-edges, which will be discussed in the next section.) See Figure~\ref{fig:strip_walks}. The partition function associated with SAWs in a strip is then
\[\hat{\mathbf C}_{T,n}(y,z) = \sum_{l,m}\hat{c}_{T,n}(l,m)y^l z^m,\]
and we similarly have $\hat {\mathbf A}_{T,n}(y,z)$ and $\hat{\mathbf B}_{T,n}(y,z)$ for arches and bridges.

\begin{figure}
\centering
\begin{picture}(400,120)
\put(0,0){\includegraphics[angle=90,width=400pt]{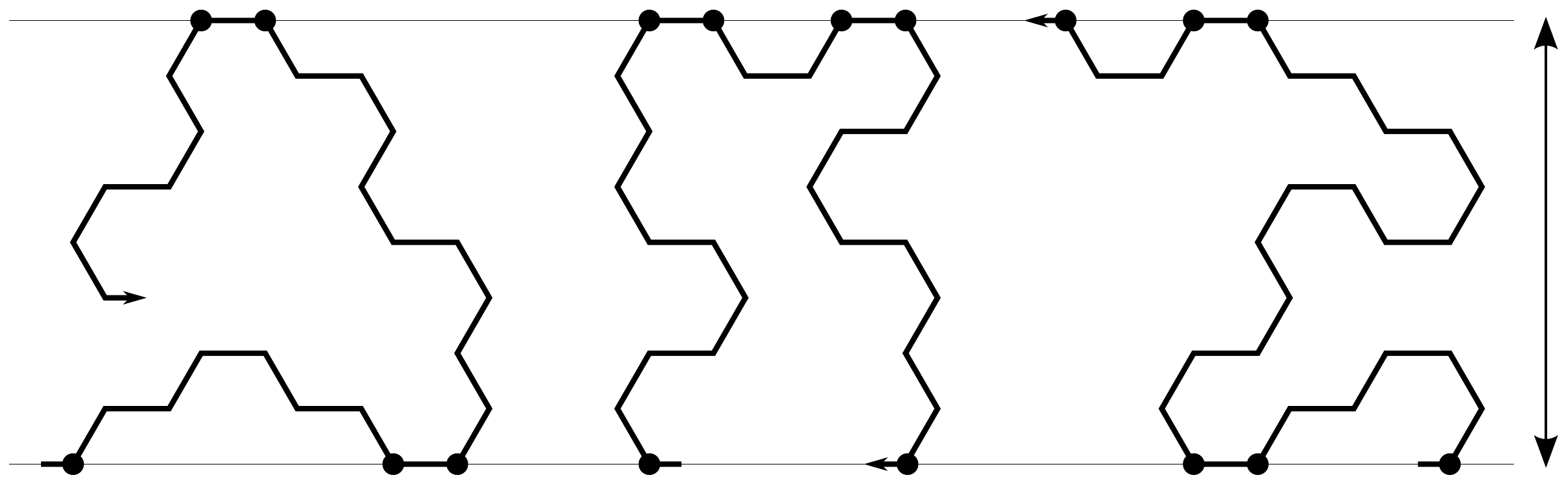}}
\put(400,58){$T$}
\end{picture}
\caption{Walks confined to a strip of width $T=9$ with weights attached to vertices along the top and bottom of the strip: a general walk, an arch, and a bridge.}
\label{fig:strip_walks}
\end{figure}

\begin{prop}\label{prop:strip_limits_existence}
For $y,z>0$, one has
\[\lim_{n\to\infty}\hat{\mathbf A}_{T,n}(y,z)^{1/n} = \lim_{n\to\infty}\hat{\mathbf B}_{T,n}(y,z)^{1/n} = \lim_{n\to\infty}\hat{\mathbf C}_{T,n}(y,z)^{1/n} := \mu_T(y,z),\]
where $\mu_T(y,z)$ is finite and non-decreasing in $y$ and $z$. By the symmetry of bridges,
\[\mu_T(y,z) = \mu_T(z,y),\]
and so in particular $\mu_T(y,1) = \mu_T(1,y)$. Finally, $\mu_T(1,y)$ is log-convex and thus is a continuous function of $\log y$.
\end{prop}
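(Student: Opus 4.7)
The plan is to adapt the proof of the analogous statement in Section 3.2 of~\cite{Beaton2013Critical}, which breaks into four ingredients. Finiteness is easy: since the honeycomb lattice has degree $3$, an $n$-step self-avoiding walk has at most $2^n$ possible step sequences, and the weight is bounded by $\max(1,y)^n \max(1,z)^n$, so that $\hat{\mathbf C}_{T,n}(y,z) \leq 2^n \max(1,y)^n \max(1,z)^n$ (and similarly for $\hat{\mathbf A}$ and $\hat{\mathbf B}$). All three $\limsup$s are therefore finite.

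Existence of $\mu_T^A := \lim_n \hat{\mathbf A}_{T,n}(y,z)^{1/n}$ follows from Fekete's lemma: two arches of lengths $n$ and $m$ can be placed side-by-side along the bottom boundary (after translating the second past the rightmost vertex of the first) and joined by a short connecting segment of bounded length $k=k(T)$, yielding an arch of length $n+m+k$. This gives $\hat{\mathbf A}_{T,n+m+k}(y,z) \geq c(y,z,T)\,\hat{\mathbf A}_{T,n}(y,z)\hat{\mathbf A}_{T,m}(y,z)$. Similar constructions --- reversing a bridge to form a top-to-bottom walk and then concatenating two bridges into an arch, or extending an arch by a bridge to form a bridge --- give cross-inequalities
\[\hat{\mathbf A}_{T,n+m+k}(y,z) \geq c'\,\hat{\mathbf B}_{T,n}(y,z)\hat{\mathbf B}_{T,m}(y,z),\qquad \hat{\mathbf B}_{T,n+m+k}(y,z) \geq c''\,\hat{\mathbf A}_{T,n}(y,z)\hat{\mathbf B}_{T,m}(y,z),\]
which together force $\lim_n \hat{\mathbf B}_{T,n}(y,z)^{1/n}$ to exist and equal $\mu_T^A =: \mu_T(y,z)$. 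For $\hat{\mathbf C}$, the lower bounds $\hat{\mathbf A}_{T,n},\hat{\mathbf B}_{T,n} \leq \hat{\mathbf C}_{T,n}$ are trivial; for the upper bound, any walk ending in the interior can be extended by a path of length at most $T$ to either boundary, giving $\hat{\mathbf C}_{T,n}(y,z) \leq C_T(y,z)\bigl(\hat{\mathbf A}_{T,n+O(T)}(y,z) + \hat{\mathbf B}_{T,n+O(T)}(y,z)\bigr)$, which pins the $\hat{\mathbf C}$ limit to $\mu_T(y,z)$.

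The remaining properties follow quickly. Monotonicity of $\mu_T(y,z)$ in $y$ and $z$ is immediate from the non-negativity of the coefficients $\hat c_{T,n}(l,m)$. The symmetry $\mu_T(y,z)=\mu_T(z,y)$ comes from reflecting the strip about its horizontal midline: this induces a length-preserving bijection between bridges with $(l,m)$ surface contacts and bridges with $(m,l)$ contacts, so $\hat{\mathbf B}_{T,n}(y,z)=\hat{\mathbf B}_{T,n}(z,y)$. Log-convexity of $\log\mu_T(1,y)$ in $\log y$ follows from $(d^2/dt^2)\log\hat{\mathbf C}_{T,n}(1,e^t)\geq 0$, a straightforward Cauchy--Schwarz computation applied to $\hat{\mathbf C}_{T,n}(1,e^t) = \sum_m a_m e^{mt}$; this convexity is preserved under pointwise limits, and convex functions on the real line are automatically continuous.

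The main technical obstacle lies in the combinatorial bookkeeping of the concatenation arguments on the honeycomb lattice: one must carefully construct the connecting segments so that self-avoidance is preserved across the joins, handle parity-dependent translation issues (the lattice is not invariant under arbitrary horizontal translations), and control the multiplicative constants independently of $n$. These are exactly the issues resolved in the corresponding section of~\cite{Beaton2013Critical}, and the arguments transfer to the rotated orientation with only cosmetic modifications.
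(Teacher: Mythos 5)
Your route is genuinely different from the paper's: the paper proves this proposition by invoking van Rensburg's hypercubic-lattice results for walks, arches and bridges in a slab and transporting them to the honeycomb strip via the \emph{unfolding} machinery already developed for Proposition~\ref{prop:FE_etc_existence} (the fixed-length unfolding procedure), whereas you attempt direct supermultiplicativity with bounded-length connectors. Unfortunately the direct route, as written, has a genuine gap precisely at the point you dismiss as bookkeeping.

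The problem is the claim that two arches can be "placed side-by-side along the bottom boundary \ldots and joined by a short connecting segment of bounded length $k=k(T)$". An arch in the strip is not unfolded: its endpoint need not be its rightmost vertex and its starting point need not be its leftmost. To guarantee the translated second arch cannot intersect the first, you must push it past the first arch's rightmost vertex, and the connector must then run from the first arch's endpoint (possibly $\Theta(n)$ columns to the left of that rightmost vertex) to the second arch's starting point (possibly $\Theta(m)$ columns to the right of its own leftmost vertex). That connector has unbounded length, and there is no free row for it to occupy: a path along the bottom boundary collides with the first arch's own returns to the bottom surface. The same objection defeats the bridge--bridge and arch--bridge cross-inequalities (joining at the top boundary, where bridges also have many contacts), and the upper bound $\hat{\mathbf C}_{T,n}(y,z)\leq C_T\bigl(\hat{\mathbf A}_{T,n+O(T)}+\hat{\mathbf B}_{T,n+O(T)}\bigr)$ fails too, since a self-avoiding walk can trap its own endpoint so that no length-$O(T)$ extension to either boundary exists. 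All of these steps are exactly what unfolded walks are for: when the endpoint is horizontally extremal the adjacent column is empty and bounded connectors do exist, and general walks are then related to unfolded ones by the unfolding map, whose overcounting is controlled by an $e^{c\sqrt{n}}$ factor. On this lattice that transfer is itself nontrivial -- the lattice is not invariant under reflection in a vertical line, which forces extra edges to be inserted at each fold and is the entire reason the paper constructs fixed-length unfolding in Proposition~\ref{prop:FE_etc_existence} before reusing it here. The finiteness, monotonicity, top--bottom reflection symmetry and log-convexity parts of your argument are fine.
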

\begin{proof} As with Proposition~\ref{prop:FE_etc_existence}, the equivalent results for the $d$-dimensional hypercubic lattice have been previously proven~\cite{vanRensburg2006Selfavoiding}, also using unfolded walks and the process of unfolding. The same arguments we used for Proposition~\ref{prop:FE_etc_existence} apply here. The log-convexity result is easily adapted from~\cite[Thm.~6.3]{vanRensburg2006Selfavoiding}
\end{proof}
The utility of the result $\mu_T(y,1) = \mu_T(1,y)$ becomes immediately apparent when considering the geometry used in Subsection~\ref{ssec:general_y} (see also Figure~\ref{fig:domain_withweights}): if we only apply surface weights to one side of the strip, then it does not matter which side they go on. As discussed in Section~\ref{sec:identities}, it is convenient to place surface weights on the vertices of top boundary, rather than the bottom.

The next proposition concerns the behaviour of $\mu_T(1,y)$ as $T$ changes.
\begin{prop}\label{prop:muT_behaviour}
For $y>0$,
\[\mu_T(1,y) < \mu_{T+1}(1,y).\]
Moreover, as $T\to\infty$,
\[\mu_T(1,y) \to \mu(y),\]
where $\mu(y)$ is as defined in Proposition~\ref{prop:FE_etc_existence}.
\end{prop}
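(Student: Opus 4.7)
The plan is to handle the two assertions separately, following the template of the analogous Proposition in~\cite{Beaton2013Critical} (which in turn adapts the bridge-based arguments of~\cite{DuminilCopin2012Selfavoiding} from the hypercubic setting). Both parts reduce to understanding bridges, which by Proposition~\ref{prop:strip_limits_existence} have the same exponential growth rate as all strip walks.

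For the monotonicity $\mu_T(1,y) < \mu_{T+1}(1,y)$, I would first obtain the weak inequality by embedding: any bridge spanning a strip of width $T$ embeds canonically into a strip of width $T+1$ while preserving all top-surface contacts (of which it has none), so $\hat{\mathbf B}_{T,n}(1,y) \leq y\cdot\hat{\mathbf B}_{T+1,n}(1,y)$ (the extra $y$ absorbs the contact picked up when the embedded bridge's top mid-edge becomes a proper top vertex in the wider strip), giving $\mu_T(1,y) \leq \mu_{T+1}(1,y)$. For the strict inequality, the plan is to exhibit a concatenation that produces genuinely new length-$n$ bridges in width $T+1$. Take a width-$T+1$ bridge of length $n$ and append a fixed ``bump'' of constant length $k$ at its top end-point that necessarily visits the top row of the wider strip. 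These modified bridges are in bijection with unmodified bridges and contribute a factor bounded below by $y^{O(1)}$ to the generating function; iterating the construction $\Theta(n)$ times (at admissible insertion sites along the bridge, ensuring self-avoidance) yields a factor growing faster than any exponential $\mu_T(1,y)^n$, forcing $\mu_{T+1}(1,y) > \mu_T(1,y)$.

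For the limit, I would combine the trivial upper bound $\mu_T(1,y) \leq \mu(y)$ (strip bridges are a subset of half-plane SAWs counted by $\mathbf C_n^+(y)$, with surface contacts on the bottom only) with a matching lower bound via a Hammersley--Welsh--Kesten bridge decomposition adapted to the honeycomb lattice and the weight $y$: any half-plane SAW decomposes uniquely as a concatenation of half-plane bridges of strictly decreasing heights, and the number of such decompositions of a length-$n$ walk is subexponential in $n$. Using the unfolding machinery developed in Proposition~\ref{prop:FE_etc_existence} to pass from half-plane SAWs to half-plane bridges, and noting that every half-plane bridge of height at most $T$ is a strip bridge of width $T$, this yields
\[
\mathbf C_n^+(y) \leq e^{o(n)} \sum_{T \leq n} \hat{\mathbf B}_{T,n}(1,y).
\]
Taking $n$-th roots and then the limsup in $T$ gives $\mu(y) \leq \limsup_T \mu_T(1,y)$. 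Combined with the monotonicity just proved, the sequence $\mu_T(1,y)$ converges and its limit equals $\mu(y)$.

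The main obstacle will be the strict-inequality step: the inserted bump must be designed so that (a) self-avoidance is automatically preserved regardless of the underlying bridge, (b) the number of top-surface contacts it contributes is a fixed constant independent of the bridge (so the weight $y$ factors out cleanly), and (c) enough admissible insertion sites exist on every bridge. For the hypercubic lattice the analogous construction is standard; for the honeycomb in the rotated orientation of Figure~\ref{fig:demonstrating_orientations}(b), one must work slightly harder because the lattice lacks reflection symmetry through a vertical line through a vertex, so the bump must respect the local parity of edge types, as in the fixed-length unfolding argument earlier in the section.
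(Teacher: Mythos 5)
The paper offers no self-contained argument for this proposition: it simply defers to Proposition~7 of \cite{Beaton2013Critical}, noting that the only change required is the shape of the special ``prime'' arch. That proof first moves the fugacity to the bottom boundary via the symmetry $\mu_T(1,y)=\mu_T(y,1)$ of Proposition~\ref{prop:strip_limits_existence}, and then works with (unfolded) arches attached to that boundary: the strict inequality comes from optionally inserting a single explicit full-height arch at the $\Theta(n)$ junctions of a concatenation of short unfolded arches, the insertions being recoverable because the prime arch is the only excursion reaching row $T+1$. Your plan is recognisably in the same family, but two steps fail as written.

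First, the weak inequality: a bridge of the width-$T$ strip, embedded in the width-$(T+1)$ strip, no longer ends on the top boundary, so it is not counted by $\hat{\mathbf B}_{T+1,n}$, and $\hat{\mathbf B}_{T,n}(1,y)\le y\,\hat{\mathbf B}_{T+1,n}(1,y)$ does not follow from embedding (also, a bridge weighted on the top boundary has at least one top contact, not none). The correct easy route is $\mu_T(1,y)=\mu_T(y,1)$ followed by $\hat{\mathbf C}_{T,n}(y,1)\le\hat{\mathbf C}_{T+1,n}(y,1)$, since widening the strip at the \emph{unweighted} side leaves both the walk and its contact count untouched. Second, and more seriously, the strict inequality: a bump that ``necessarily visits the top row of the wider strip'' can only be inserted at a point of the walk adjacent to row $T$, and a width-$T$ bridge need not come near the top row more than $O(1)$ times --- it can meander near the bottom for essentially all of its length and climb only at the end. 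So the $\Theta(n)$ admissible insertion sites that your binomial factor requires need not exist; restricting to bridges that do offer them would require a pattern-theorem-type statement you have not established. (The gain from $\binom{\Theta(n)}{j}$ is also exponential, not ``faster than any exponential''; an exponential gain suffices, but the phrasing suggests the accounting has not actually been carried out, and the joint injectivity of the multi-site insertion --- which you correctly flag as the main obstacle --- is left unresolved.) The prime-arch construction exists precisely to manufacture $\Theta(n)$ guaranteed, identifiable insertion points along the weighted bottom boundary. Finally, for the limit, the Hammersley--Welsh decomposition reflects portions of the walk through \emph{horizontal} lines, which does not preserve the number of surface contacts, so the weighted inequality $\mathbf C^+_n(y)\le e^{o(n)}\sum_{T\le n}\hat{\mathbf B}_{T,n}(1,y)$ needs justification; it is simpler to observe that a length-$n$ half-plane walk already lies in the strip of width $n$, so that $\mathbf C^+_n(y)=\hat{\mathbf C}_{n,n}(y,1)$, and to combine this with the monotonicity in $T$ and the unfolding bounds of Proposition~\ref{prop:FE_etc_existence}.
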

The proof is virtually identical to that of Proposition 7 in~\cite{Beaton2013Critical}, and we direct interested readers to that article. (The only difference is that the special ``prime'' arch used in that proof will necessarily be modified so as to fit on our lattice.)

The final result of this section concerns the properties of $\rho_T(y):=\mu_T(1,y)^{-1}$, which is the radius of convergence of the generating function
\[\hat{C}_T(x,y) := \sum_{n\geq0}\hat{\mathbf C}_{T,n}(1,y)x^n,\]
and of the similarly-defined functions $\hat{A}_T(x,y)$ and $\hat{B}_T(x,y)$.

\begin{cor}\label{cor:RCs_yT}
For $y>0$, the generating functions $\hat{A}_T(x,y)$, $\hat{B}_T(x,y)$ and $\hat{C}_T(x,y)$ all have the radius of convergence,
\[\rho_T(y) = \mu_T(1,y)^{-1}.\]
Moreover, $\rho_T(y)$ decreases to $\rho(y) := \mu(y)^{-1}$ as $T\to\infty$. In particular, $\rho_T(y)$ decreases to $\rho:=\mu^{-1}$ for $y\leq\y$. 

There exists a unique $y_T>0$ such that $\rho_T(y_T) = \x :=\mu^{-1}$. The series (in $y$) $\hat{A}_T(\x,y)$, $\hat{B}_T(\x,y)$ and $\hat{C}_T(\x,y)$ have radius of convergence $y_T$, and $y_T$ decreases to the critical fugacity $\y$ as $T\to\infty$.
\end{cor}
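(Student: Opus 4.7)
The equality of the three radii at $\rho_T(y) = \mu_T(1,y)^{-1}$ is immediate from Proposition~\ref{prop:strip_limits_existence} via the Cauchy--Hadamard formula, since $\hat{\mathbf A}_{T,n}(1,y)$, $\hat{\mathbf B}_{T,n}(1,y)$ and $\hat{\mathbf C}_{T,n}(1,y)$ all have common $1/n$-th root limit $\mu_T(1,y)$. The strict monotonicity $\rho_T(y) > \rho_{T+1}(y)$ and the limit $\rho_T(y) \to \rho(y)$ are direct rephrasings of Proposition~\ref{prop:muT_behaviour}; on the range $y \leq \y$ one has $\mu(y) = \mu$ by Proposition~\ref{prop:FE_etc_existence}, so the limit value is $\x$.

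For existence of $y_T$, the idea is to combine continuity of $\mu_T(1, \cdot)$ (which comes from its log-convexity in $\log y$, Proposition~\ref{prop:strip_limits_existence}) with control of the endpoints: $\mu_T(1,1) < \mu$ by Proposition~\ref{prop:muT_behaviour}, so $\rho_T(1) > \x$; and $\mu_T(1,y) \to \infty$ as $y \to \infty$, established by a walk that climbs to the top of the strip and zig-zags along it so as to collect $\Theta(n)$ top contacts. The intermediate value theorem then supplies $y_T \in (1,\infty)$. The hard part is uniqueness, and my argument is as follows: if $1 < y_T^{(1)} < y_T^{(2)}$ both satisfied $\mu_T(1, y_T^{(i)}) = \mu$, then writing $\log y_T^{(1)} = \lambda \log 1 + (1-\lambda)\log y_T^{(2)}$ for some $\lambda \in (0,1)$ and using log-convexity gives
\[\log\mu = \log\mu_T(1, y_T^{(1)}) \leq \lambda \log\mu_T(1,1) + (1-\lambda)\log \mu,\]
whence $\log\mu \leq \log\mu_T(1,1)$, contradicting $\mu_T(1,1) < \mu$.

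To pin down the radius of convergence in $y$ (at fixed $x = \x$), I would adopt the double-series viewpoint $\hat C_T(x,y) = \sum_{n,m} \tilde c_{T,n,m}\, x^n y^m$ with non-negative coefficients: for $y_0 < y_T$, $\mu_T(1,y_0) < \x^{-1}$, so $\sum_n \x^n \hat{\mathbf C}_{T,n}(1,y_0)$ converges and by rearrangement of non-negative terms so does $\sum_m \bigl(\sum_n \tilde c_{T,n,m}\x^n\bigr) y_0^m$; for $y_0 > y_T$ the same rearrangement gives divergence. Thus the $y$-radius at $\x$ is exactly $y_T$, and the arches/bridges cases are handled identically.

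Finally, for $y_T \downarrow \y$: the sequence $(y_T)$ is non-increasing since $\mu_{T+1}(1,\cdot) > \mu_T(1,\cdot)$ and $\mu_T(1,\cdot)$ is non-decreasing, so $y_\infty := \lim_T y_T$ exists. Any $y < \y$ satisfies $\mu_T(1,y) < \mu(y) = \mu$ for every $T$ (Propositions~\ref{prop:muT_behaviour} and~\ref{prop:FE_etc_existence}), forcing $y < y_T$ for every $T$, hence $y_\infty \geq \y$; any $y > \y$ satisfies $\mu_T(1,y) \to \mu(y) > \mu$, so $y > y_T$ for all sufficiently large $T$, giving $y_\infty \leq \y$. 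Combining yields $y_T \to \y$.
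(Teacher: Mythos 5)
Your proof is correct and supplies, in full, the standard argument that the paper itself only cites (it defers to Corollary 8 of Beaton et al.): Cauchy--Hadamard plus Propositions~\ref{prop:strip_limits_existence} and~\ref{prop:muT_behaviour} for the radii and their limit, the intermediate value theorem with log-convexity for existence and uniqueness of $y_T$, non-negative double-series rearrangement for the $y$-radius at $x=\x$, and a squeeze via $\mu_T(1,y)\uparrow\mu(y)$ for $y_T\downarrow\y$. All steps check out, including the strictness of $\mu_T(1,y_0)<\mu$ for $y_0<y_T$ (which follows from your uniqueness argument) needed for the rearrangement step.
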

The proof is identical to that of Corollary 8 in~\cite{Beaton2013Critical}.

\section{Proof of the critical surface fugacity for SAWs}\label{sec:proof}


We now return to the identity~\eqref{eqn:ident_general_ny}, which we specialise to $n=0$ (note that $G_{T,L}(x,y;0) =1$):
\begin{multline}\label{eqn:identity_atn=0}
\cos\left(\frac{5\pi(2\pm1)}{16}\right)A^O_{T,L}(\x,y;0) + \cos\left(\frac{\pi(18\mp7)}{16}\right)A^I_{T,L}(\x,y;0)\\
+\cos\left(\frac{3\pi(2\mp1)}{16}\right)E_{T,L}(\x,y;0) + \frac{2}{\x}\cos\left(\frac{7(\pi\pm\theta)}{8}\right)P_{T,L}(\x,y;0)\\
+\left[\cos\left(\frac{\pi(2\pm1)}{16}\right)-\frac{(1-\x y-\x^2 y^2)\cos\left(\frac{\pi(14\mp1)}{16}\right)+\x^2 y^2\cos\left(\frac{5\pi(2\pm1)}{16}\right)}{\x y(1+\x y)}\right]B_{T,L}(\x,y;0)\\
=2\x \cos\left(\frac{\pi(2\pm1)}{16}\right).
\end{multline}
The identity of interest for SAWs is the second of this pair of equations. For brevity, we use the following shorthand:
\begin{align*}
c_A^O &:= 2\cos\left(\frac{5\pi}{16}\right) = \sqrt{2-\sqrt{2-\sqrt{2}}}, \qquad \qquad
c_A^I := 2\cos\left(\frac{7\pi}{16}\right) = \sqrt{2-\sqrt{2+\sqrt{2}}},\\
c_E &:= 2\cos\left(\frac{3\pi}{16}\right) = \sqrt{2+\sqrt{2-\sqrt{2}}}, \\
c_P &:= \frac{4}{\x}\cos\left(\frac{7\pi}{16}\right) = 2\sqrt{4+2\sqrt{2}-\sqrt{2\left(10+7\sqrt{2}\right)}},\\
c_G &:= 4\x\cos\left(\frac{\pi}{16}\right) = \sqrt{2 \left(4-2 \sqrt{2}+\sqrt{2 \left(2-\sqrt{2}\right)}\right)}, \text{ and}\\
c_B(y) &:= 2\cos\left(\frac{\pi}{16}\right)-\frac{2(1-\x y-\x^2y^2) \cos\left(\frac{15\pi}{16}\right)+2\x^2y^2\cos\left(\frac{5\pi}{16}\right)}{\x y(1+\x y)}\\
&= \frac{c_B}{\x y(1+\x y)} - \frac{\x y c_A^O}{1+\x y}, \qquad \text{ where } c_B := c_B(1) = 2\cos\left(\frac{\pi}{16}\right) = \sqrt{2+\sqrt{2+\sqrt{2}}}.
\end{align*}
For the rest of this section we will omit the $n=0$ argument from the generating functions. So~\eqref{eqn:identity_atn=0} can be written as
\begin{equation}\label{eqn:identity_shorthand}
c_A^OA_{T,L}^O(\x,y) + c_A^IA_{T,L}^I(\x,y) + c_EE_{T,L}(\x,y) + c_PP_{T,L}(\x,y) + c_B(y)B_{T,L}(\x,y) = c_G.
\end{equation}

We note here that $c_B(y)$ is a continuous and monotone decreasing function of $y$ for $y>0$, and that $c_B(y^\dagger)=0$ where
\[y^\dagger = \sqrt{\frac{2+\sqrt{2}}{1+\sqrt{2}-\sqrt{2+\sqrt{2}}}}.\]


For $0<y<y^\dagger$, every term in \eqref{eqn:identity_shorthand} is non-negative. Observe that $A^O_{T,L}$, $A^I_{T,L}$, $B_{T,L}$ and $P_{T,L}$ are increasing with $L$. (As $L$ increases these generating functions just count more and more objects.) We then see that for those values of $L$ satisfing $T+L\equiv1\,(\text{mod }2)$, $E_{T,L}$ must decrease as $L$ increases. It is thus valid to take the limit $L\to\infty$ of~\eqref{eqn:identity_shorthand} over the values of $L$ with $T+L\equiv1\,(\text{mod }2)$. But now $A^O_{T,L}$, $A^I_{T,L}$, $B_{T,L}$ and $P_{T,L}$ actually increase with $L$ regardless of whether $T+L\equiv1\,(\text{mod }2)$ or not, and so they have the same limits as $L\to\infty$ over any subsequence of $L$ values. Hence, we can in fact take the limit $L\to\infty$ of~\eqref{eqn:identity_shorthand} over all values of $L$. If we define
\[A^O_T(\x,y) := \lim_{L\to\infty} A^O_{T,L}(\x,y),\]
and similar limits for the other generating functions, then we obtain
\begin{equation}\label{eqn:ident_limit}
c_A^OA_{T}^O(\x,y) + c_A^IA_{T}^I(\x,y) + c_EE_{T}(\x,y) + c_PP_{T}(\x,y) + c_B(y)B_{T}(\x,y) = c_G.
\end{equation}
%
%

In this rest of this section, we will prove the following:
\begin{prop}\label{prop:critical_fugacity}
If it can be shown that 
\[B(\x,1):=\lim_{T\to\infty} B_T(\x,1) = 0\]
then $\y=y^\dagger$.
\end{prop}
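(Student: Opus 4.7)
The plan is to sandwich $\y$ between $y^\dagger$ from above and below, exploiting the fact that the coefficient $c_B(y)$ in \eqref{eqn:ident_limit} changes sign exactly at $y=y^\dagger$. This makes the two sides $y<y^\dagger$ and $y>y^\dagger$ qualitatively different and lets us use the identity in two different ways.

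For the lower bound $\y\ge y^\dagger$, I would use the identity on the positive-coefficient side. For $y\in(0,y^\dagger)$, every coefficient in \eqref{eqn:ident_limit} is strictly positive and the right-hand side is the constant $c_G$, so each term is bounded by $c_G$. In particular $B_T(\x,y)\le c_G/c_B(y)$ uniformly in $T$. As a power series in $y$ with non-negative coefficients, $B_T(\x,\cdot)$ therefore has radius of convergence at least $y^\dagger$. By Corollary~\ref{cor:RCs_yT} (after identifying $B_T$ with $\hat{B}_T$ up to the benign combinatorial factors relating walks that start at $a$ to walks that start at a generic bottom mid-edge), this radius is $y_T$. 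Hence $y_T\ge y^\dagger$ for every $T$, and $\y=\lim_{T\to\infty}y_T\ge y^\dagger$. This direction does not require the hypothesis $B(\x,1)=0$.

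For the upper bound $\y\le y^\dagger$, I would set
$$F_T(y):=c_A^O A_T^O(\x,y)+c_A^I A_T^I(\x,y)+c_E E_T(\x,y)+c_P P_T(\x,y),$$
so that the identity becomes $F_T(y)+c_B(y)B_T(\x,y)=c_G$. Each summand of $F_T$ is a power series in $y$ with non-negative coefficients, hence $F_T$ is continuous and non-decreasing on $[0,y_T)$. Two special values are crucial: $F_T(y^\dagger)=c_G$ for every $T$ (because $c_B(y^\dagger)=0$), and $F_T(1)\to c_G$ (because $B_T(\x,1)\to 0$ by hypothesis, used via the identity at $y=1$). Monotonicity in $y$ then squeezes $F_T(y)\to c_G$ uniformly on $[1,y^\dagger]$, and the rearrangement $B_T(\x,y)=(c_G-F_T(y))/c_B(y)$ yields $B_T(\x,y)\to 0$ pointwise on $[1,y^\dagger)$.

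To close the argument, I would suppose for contradiction that $\y>y^\dagger$, so that $y_T>y^\dagger$ strictly for every $T$. Since $B_T(\x,1)\to 0$, writing $B_T(\x,y)=\sum_m b_T(m)y^m$ and using non-negativity of the $b_T(m)$ gives $b_T(m)\le B_T(\x,1)\to 0$ for every fixed $m$. Combining this coefficient-wise vanishing with the strip-to-half-plane rate relations of Propositions~\ref{prop:strip_limits_existence} and~\ref{prop:muT_behaviour} and with Corollary~\ref{cor:RCs_yT} should force $y_T\to y^\dagger$, contradicting $\y>y^\dagger$. The main obstacle is precisely this last implication --- converting pointwise vanishing of the strip-bridge generating function on $[1,y^\dagger)$ into the convergence of its radius of convergence $y_T$ --- which requires genuine combinatorial control of the strip geometry, in the spirit of the appendix arguments that establish $B(\x,1)=0$ in the first place.
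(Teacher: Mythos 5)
Your lower bound $\y\ge y^\dagger$ is exactly the paper's Lemma~\ref{lem:p:yc_geq_ydagger}, and your intermediate deductions (that $F_T(y)\to c_G$ and hence $B_T(\x,y)\to0$ pointwise on $[1,y^\dagger)$) are correct. But the upper bound $\y\le y^\dagger$ is where the whole content of the proposition lies, and there your argument has a genuine gap that you yourself flag: everything you derive lives in the region $y<y^\dagger$, where $c_B(y)>0$, and no amount of pointwise or coefficient-wise vanishing there controls the radius of convergence $y_T$. Indeed $y_T$ is governed by the growth of the coefficients $b_T(m)$ as $m\to\infty$ for \emph{fixed} $T$, whereas $b_T(m)\le B_T(\x,1)\to0$ only gives decay as $T\to\infty$ for fixed $m$; these are logically independent, so the final implication ``coefficient-wise vanishing $\Rightarrow y_T\to y^\dagger$'' does not follow from the ingredients you list.

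The missing idea in the paper is a combinatorial factorisation that transfers information \emph{across} the sign change of $c_B(y)$. Cutting a walk counted by $A^O_{T+1}(\x,y)$ at the mid-edge immediately after its last contact with the top boundary splits it into a bridge counted by $B_{T+1}(\x,y)$ and a reversed bridge counted by $(1+\x)B_T(\x,1)/2$, giving
\[
A^O_{T+1}(\x,y)-A^O_T(\x,1)\le B_{T+1}(\x,y)\,B_T(\x,1),
\]
valid for all $y<y_{T+1}$ (and similarly for $A^I$ and $P$). Substituting the identity~\eqref{eqn:ident_withoutE} at $(T+1,y)$ and at $(T,1)$ into the sum of these inequalities eliminates the $A^O,A^I,P$ terms and yields
\[
0\le \frac{1}{B_{T+1}(\x,y)}\le \frac{c_A^O+c_A^I+c_P}{c_B}+\frac{c_B(y)}{c_B\,B_T(\x,1)}.
\]
If $\y>y^\dagger$, one may choose $y\in(y^\dagger,\y)$, for which the inequality still holds for every $T$ but $c_B(y)<0$; then $B_T(\x,1)\to0$ drives the right-hand side to $-\infty$, a contradiction. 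This decomposition step is precisely the ``genuine combinatorial control of the strip geometry'' you identify as the obstacle, so your proposal as written does not constitute a proof.
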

\noindent The result that $B(\x,1)=0$ (Corollary~\ref{cor:Bto0}) is proved in the appendix, and in combination with Proposition~\ref{prop:critical_fugacity} completes the proof of Theorem~\ref{thm:main_thm}.

We begin by establishing a lower bound on $\y$.

\begin{lem}\label{lem:p:yc_geq_ydagger}
The critical surface fugacity $\y$ satisfies
\[\y\geq y^\dagger.\]
\end{lem}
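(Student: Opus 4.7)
The plan is to exploit identity (\ref{eqn:ident_limit}) at values of $y$ just below $y^\dagger$ -- where $c_B(y)>0$ while every other coefficient is positive and every generating function has non-negative coefficients in $y$ -- to derive a uniform (in $T$) bound on $B_T(\x,y)$, and then convert this into a lower bound on $y_T$ via Corollary~\ref{cor:RCs_yT}.

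Concretely, I would fix $y\in(0,y^\dagger)$. Because $c_B(y)$ is continuous and monotone decreasing in $y$ with a unique zero at $y=y^\dagger$, one has $c_B(y)>0$. Since $c_A^O,c_A^I,c_E,c_P>0$ and each of $A^O_T(\x,y),A^I_T(\x,y),E_T(\x,y),P_T(\x,y),B_T(\x,y)$ is a power series in $y$ with non-negative coefficients, dropping every term in (\ref{eqn:ident_limit}) except the $B_T$ contribution gives
\[
c_B(y)\,B_T(\x,y)\;\leq\;c_G,
\]
so that $B_T(\x,y)\leq c_G/c_B(y)<\infty$ for every $T\geq 1$. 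Viewing $B_T(\x,\cdot)$ as a power series in $y$ with non-negative coefficients that is finite throughout $(0,y^\dagger)$, its radius of convergence in $y$ is at least $y^\dagger$. Identifying $B_T(\x,y)$ with the bridge generating function $\hat B_T(\x,y)$ of Section~\ref{ssec:strip}, Corollary~\ref{cor:RCs_yT} names this radius $y_T$, whence $y_T\geq y^\dagger$ for every $T$. Letting $T\to\infty$ and invoking $y_T\searrow \y$ from the same corollary then yields $\y\geq y^\dagger$.

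The only non-routine step is the identification of $B_T(\x,y)$ -- defined here as the $L\to\infty$ limit (over $T+L$ odd, extended to all $L$ by monotonicity of $B_{T,L}$ in $L$) of configurations in $D_{T,L}$ starting at the mid-edge $a$ and ending on the $\beta$ boundary -- with the strip bridge generating function $\hat B_T(\x,y)$ of Section~\ref{ssec:strip}, so that Corollary~\ref{cor:RCs_yT} applies directly. Since $a$ is a canonical horizontal mid-edge on the bottom of the strip, the two objects agree up to at worst a trivial combinatorial factor that does not affect the radius of convergence in $y$. Everything else is just positivity and monotonicity of the terms appearing in the identity.
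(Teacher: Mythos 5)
Your proposal is correct and follows essentially the same route as the paper: use the positivity of all coefficients and generating functions in~\eqref{eqn:ident_limit} for $0<y<y^\dagger$ to bound $B_T(\x,y)\leq c_G/c_B(y)$, deduce that the radius of convergence in $y$ is at least $y^\dagger$, and transfer this to $y_T$ and then $\y$ via Corollary~\ref{cor:RCs_yT}. The one step you flag as ``non-routine'' is handled in the paper by the explicit bijective relation $B_T(x,y)=(1+xy)\hat B_T(x,y)$ (reflect or extend the last step of a $\hat B_T$ walk), which makes precise your claim that the two generating functions have the same radius of convergence in $y$.
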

\begin{proof}
Corollary~\ref{cor:RCs_yT} establishes the relationship between the generating functions $\hat{A}_T(\x,y)$, $\hat{B}_T(\x,y)$ and $\hat{C}_T(\x,y)$ and the critical fugacity $\y$. None of these generating functions feature in the identity~\eqref{eqn:ident_limit}. (Recall that $\hat{A}_T$ and $\hat{B}_T$ walks end on edges running along the bottom and top surfaces respectively, rather than on protruding mid-edges.) However, observe that there is a simple correspondence between $B_T$ walks and $\hat{B}_T$ walks: every $B_T$ walk can be obtained by reflecting the last step of a $\hat{B}_T$ walk, or by adding another step to the end of a $\hat{B}_T$ walk. Thus we have
\[B_T(x,y) = (1+xy)\hat{B}_T(x,y),\]
and so the generating functions $B_T(\x,y)$ and $\hat{B}_T(\x,y)$, viewed as series in $y$, have the same radius of convergence (namely $y_T$).

Now for $y<y^\dagger$ the identity \eqref{eqn:ident_limit} establishes the finiteness of $B_T(\x,y)$, and thus we see $y_T\geq y^\dagger$. By Corollary~\ref{cor:RCs_yT} it then follows that $\y\geq y^\dagger$.
%
%
\end{proof}

We now show that one of the generating functions in~\eqref{eqn:ident_limit} has disappeared in the limit $L\to\infty$.
\begin{cor}\label{cor:OO_Eto0}
For $0\leq y < y^\dagger$,
\[E_T(\x,y) := \lim_{L\to\infty} E_{T,L}(\x,y) =0,\]
and hence
\begin{equation}\label{eqn:ident_withoutE}
c_A^O A_T^O(\x,y) + c_A^I A_T^I(\x,y) + c_P P_T(\x,y) + c_B(y) B_T(\x,y) = c_G.
\end{equation}
\end{cor}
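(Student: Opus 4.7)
My plan is to bound $E_{T,L}(\x, y)$ by the tail of a strip partition function, and then invoke Corollary~\ref{cor:RCs_yT} to show that this tail vanishes. The existence of the limit $E_T(\x,y) := \lim_{L\to\infty} E_{T,L}(\x, y)$ is already guaranteed by the monotonicity discussion preceding the corollary (five nonnegative terms summing to the constant $c_G$, four of which increase in $L$), so the only remaining task is to show the limit equals zero.

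First I would observe that every SAW contributing to $E_{T,L}$ starts at the mid-edge $a$ near the centre of the bottom of the trapezoidal domain $D_{T,L}$ and ends on one of the lateral mid-edges $\epsilon^+$ or $\epsilon^-$. Since these target mid-edges lie at horizontal graph-distance of order $L$ from $a$, and each step alters the horizontal coordinate by at most one lattice unit, every such SAW has length at least $cL$ for some lattice-dependent constant $c>0$. Moreover, $D_{T,L}$ sits inside a horizontal strip of height $T$ in the sense of Section~\ref{ssec:strip}, so there is a constant $K = K(T)$ (independent of $L$) with
\[
E_{T,L}(\x, y) \;\leq\; K \sum_{n \geq cL} \hat{\mathbf C}_{T,n}(1, y)\,\x^n,
\]
the factor $K$ absorbing any discrepancy between our convention of starting at $a$ and the bottom-starting convention used to define $\hat{\mathbf C}_{T,n}(1,y)$.

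Next I would combine Lemma~\ref{lem:p:yc_geq_ydagger} with Corollary~\ref{cor:RCs_yT}: the former gives $\y \geq y^\dagger$, while the latter gives $y_T \geq \y$, so for $0 \leq y < y^\dagger$ we have $y < y_T$ strictly. Consequently $\hat{C}_T(\x, y) = \sum_n \hat{\mathbf C}_{T,n}(1, y)\x^n$ converges, and its tail from $n = \lceil cL \rceil$ tends to zero as $L\to\infty$. Together with the bound above this forces $E_T(\x, y) = 0$, and substituting into the limit identity~\eqref{eqn:ident_limit} immediately yields~\eqref{eqn:ident_withoutE}.

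The only mildly delicate point is the geometric claim — verifying from the explicit embedding of $D_{T,L}$ in Figure~\ref{fig:domain_noweights} that reaching $\epsilon^\pm$ really does require $\Omega(L)$ steps, and that the trapezoid embeds into the strip of height $T$ so that $\hat{\mathbf C}_{T,n}(1,y)$ provides a legitimate upper bound. Both are routine and require no new ideas beyond those already established in Section~\ref{sec:confined_saws}; the substantive content of the proof lies entirely in the convergence of $\hat{C}_T(\x, y)$, which was itself the purpose of introducing $y_T$ in Corollary~\ref{cor:RCs_yT}.
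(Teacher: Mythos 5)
Your proposal is correct and follows essentially the same route as the paper: both arguments rest on the convergence of $\hat{C}_T(\x,y)$ for $0\leq y<y^\dagger$, obtained by combining Lemma~\ref{lem:p:yc_geq_ydagger} ($\y\geq y^\dagger$) with Corollary~\ref{cor:RCs_yT} ($y_T\geq\y$). The only difference is cosmetic: the paper notes that walks counted by $E_{T,L}$ for distinct $L$ are distinct walks in the strip, so $\sum_L E_{T,L}(\x,y)\leq \hat{C}_T(\x,y)<\infty$ and the terms of a convergent series vanish, whereas you bound each $E_{T,L}(\x,y)$ by a tail of the same convergent series via the (correct, but not strictly needed) geometric observation that reaching $\epsilon^\pm$ requires $\Omega(L)$ steps.
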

\begin{proof}
By Corollary~\ref{cor:RCs_yT}, $y_T$ is the radius of convergence of $\hat{C}_T(\x,y)$. Since $y_T\geq \y\geq y^\dagger$ (Lemma~\ref{lem:p:yc_geq_ydagger}), it follows that $\hat{C}_T(\x,y)$ is convergent for $0\leq y<y^\dagger$. Now
\[\sum_{L}E_{T,L}(\x,y) \leq \hat{C}_T(\x,y)<\infty,\]
as each walk counted by $E_{T,L}$, for every value of $L$, will also be counted by $\hat{C}_T$. The corollary follows immediately.
\end{proof}

We note here that $A_T^O(\x,y) \leq \x \hat{C}_T(\x,y)$ (since any walk counted by $A_T^O$ can be obtained by attaching a step to a unique walk counted by $\hat{C}_T$), and likewise for $A_T^O$ and $P_T$. Hence all the generating functions featured in~\eqref{eqn:ident_withoutE} have radius of convergence at least $y_T$.


Now consider the $y=1$ case of~\eqref{eqn:ident_withoutE}:
\[c_A^O A_T^O(\x,1) + c_A^I A_T^I(\x,1) + c_P P_T(\x,1) + c_B B_T(\x,1) = c_G.\]
Since $A_T^O(\x,1)$, $A_T^I(\x,1)$ and $P_T(\x,1)$ all increase with $T$ (as $T$ increases these generating functions count more and more objects), and since they are all bounded by this identity, it follows that they all have limits as $T\to\infty$. Then $B_T(\x,1)$ must decrease as $T$ increases, and it too has a limit as $T\to\infty$. As indicated in Proposition~\ref{prop:critical_fugacity}, we denote this limit
\[B(\x,1) := \lim_{T\to\infty}B_T(\x,1).\]

\begin{proof}[Proof of Proposition~\ref{prop:critical_fugacity}] Assume now that $B(\x,1)=0$. 
Any walk counted by $A_{T+1}^O(\x,y)$ which has contacts with the top boundary can be factored into two pieces by cutting it at the mid-edge immediately following its last surface contact. (See Figure~\ref{fig:arch_decomp}.) The first piece, after reflecting the last step, is an object counted by $B_{T+1}(\x,y)$, while the second piece (with its direction reversed) will be counted by $(1+\x)B_T(\x,1)/2$. Thus we obtain
\begin{align*}
A_{T+1}^O(\x,y) - A_T^O(\x,1) &\leq \frac{1+\x}{2}\cdot B_{T+1}(\x,y)B_T(\x,1)\\
&\leq B_{T+1}(\x,y)B_T(\x,1)
\end{align*}
This inequality is valid in the domain of convergence of the series it involves, that is, for $y<y_{T+1}$. Using similar arguments we can obtain the equivalent inequality for $A_{T+1}^I(\x,y)$ and $P_{T+1}(\x,y)$.

\begin{figure}
\centering
\begin{picture}(360,120)
\put(0,0){\includegraphics[angle=90,width=360pt]{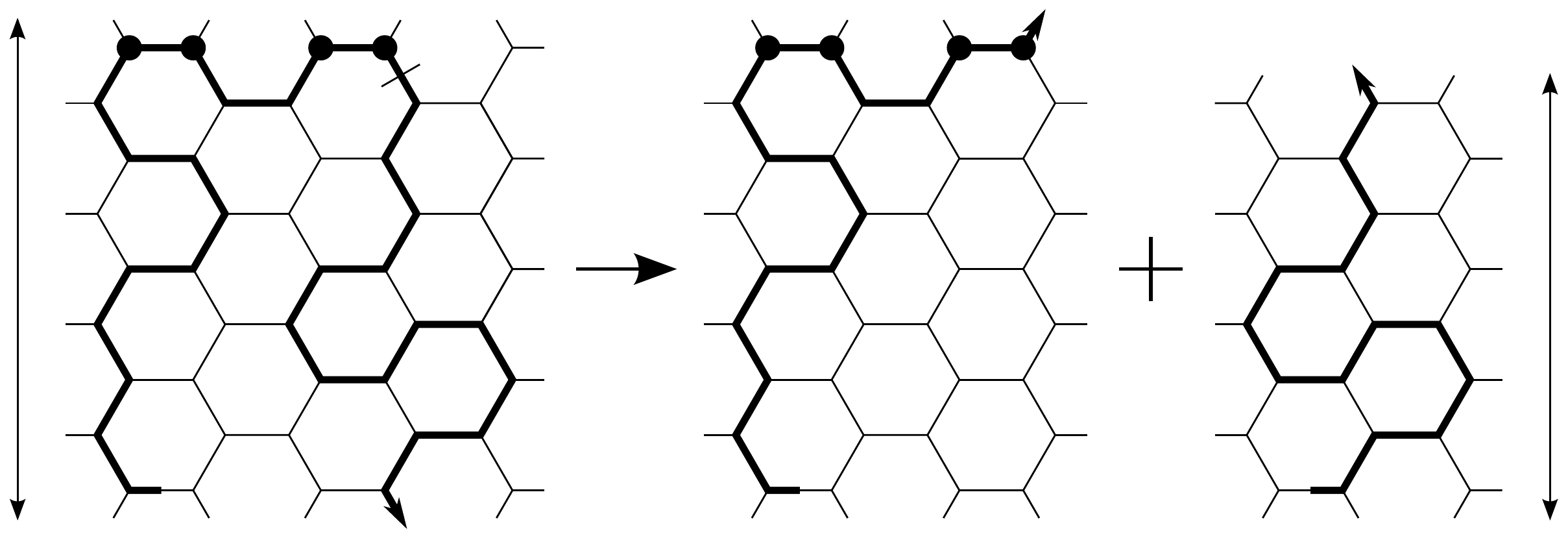}}
\put(-28,58){$T+1$}
\put(365,52){$T$}
\end{picture}
\caption{Factorisation of a walk counted by $A^O_{T+1}$ into two bridges.}
\label{fig:arch_decomp}
\end{figure}

Combining this decomposition for $A^O_{T+1},A^I_{T+1}$ and $P_{T+1}$, we find for $0\leq y < y_{T+1}$,
\begin{multline}\label{eqn:p:OO_decomp_combined}
c_A^O[A_{T+1}^O(\x,y) - A_T^O(\x,1)] + c_A^I[A_{T+1}^I(\x,y) - A_T^I(\x,1)] + c_P[P_{T+1}(\x,y) - P_T(\x,1)] \\
\leq (c_A^O + c_A^I + c_P) B_{T+1}(\x,y)B_T(\x,1).
\end{multline}
Using~\eqref{eqn:ident_withoutE} to eliminate the $A^O,A^I$ and $P$ terms, we obtain
\begin{equation*}
c_BB_T(\x,1) - c_B(y)B_{T+1}(\x,y) \leq (c_A^O + c_A^I + c_P) B_{T+1}(\x,y)B_T(\x,1),
\end{equation*}
and hence
\begin{equation}\label{eqn:p:OO_decomp_rearranged}
0 \leq \frac{1}{B_{T+1}(\x,y)} \leq \frac{(c_A^O + c_A^I + c_P)}{c_B} + \frac{c_B(y)}{c_BB_T(\x,1)}.
\end{equation}
In particular, for $0\leq y <\y = \lim_{T\to\infty} y_T$ and for any $T$,
\begin{equation}\label{eqn:p:OO_final_usefulbound}
0 \leq \frac{\x(c_A^O + c_A^I + c_P)}{c_B} + \frac{c_B(y)}{c_BB_T(\x,1)}.
\end{equation}
Now suppose that $\y > y^\dagger$, and consider what happens as we take $T\to\infty$. By assumption, $B_T(\x,1)\to 0$. For any $y^\dagger<y<\y$, the RHS of~\eqref{eqn:p:OO_final_usefulbound} must go to $-\infty$, because $c_B(y)<0$ for $y>y^\dagger$. This is clearly a contradiction, and we are forced to conclude $\y\leq y^\dagger$, and hence $\y=y^\dagger$.
%
\end{proof}

\section*{Acknowledgments}

I thank Murray Batchelor for suggesting this problem, and Tony Guttmann for helpful conversations. I received support from the ARC Centre of Excellence for Mathematics and Statistics of Complex Systems (MASCOS), as well as the Australian Mathematical Society (AustMS) in the form of a Lift-Off Fellowship. Part of this work was carried out while I was a guest of the Mathematical Sciences Research Institute (MSRI) in Berkeley, CA, during the Spring 2012 Random Spatial Processes Program, and I thank the Institute for its hospitality and the NSF (grant DMS-0932078) for its financial support.

\section*{Appendix}

We begin this appendix with some new definitions and notation, before stating its main theorem. Its structure is almost identical to the appendix of~\cite{Beaton2013Critical}, and thus we will omit a number of details which can be found in that article, and instead focus mainly on the minor changes which need to be made.

As usual, we orient the honeycomb lattice so that it contains horizontal edges, and scale it so that edges have unit length. The set of mid-edges of the lattice is denoted by $\mathbb{H}$. The edges of the lattice are oriented in three different directions; recall that we refer to a south-west/north-east (resp.~north-west/south-east) oriented edge as \emph{positive} (resp.~\emph{negative}), and likewise a positive or negative mid-edge is the mid-edge of a positive or negative edge. The lattice has an origin in $\mathbb{H}$, which we will fix to lie on a positive mid-edge $a_0$. We denote by $(\mathbf x(v),\mathbf y(v))$ the coordinates of a point $v\in\mathbb{C}$. We consider self-avoiding walks to start and end at mid-edges. A self-avoiding walk $\gamma$ of length $n$ is denoted by the sequence $(\gamma_0,\ldots,\gamma_n)$ of its mid-edges. As usual we denote by $|\gamma|$ the length of a walk. To lighten notation, we often omit floor symbols, especially in indices: for instance, $\gamma_t$ should be understood as $\gamma_{\lfloor t\rfloor}$. When referencing generating functions used in earlier sections, we will omit $y$ and $n$ arguments, which will always be taken to be 1 and 0 respectively.

We have so far referred to bridges in several contexts (specifically, we have referred both to the objects counted by $B_T$ and $\hat{B}_T$ as bridges of height $T$). In this appendix we will work with a new class of SAWs: we define a $\emph{PP-bridge}$ $\gamma$ to be a SAW which starts and ends on positive mid-edges and satisfies $\mathbf y(\gamma_0) < \mathbf y(\gamma_i) < \mathbf y(\gamma_n)$ for $0<i<n$. (The PP stands for \emph{positive-positive} -- we will later introduce some other types of walks using a similar naming convention.) The set of PP-bridges of length $n$ is denoted by $\SAPPn$. The \emph{height} $\height(\gamma)$ of a PP-bridge $\gamma$ is the length of the shortest PP-bridge $\gamma'$ satisfying $\mathbf y(\gamma_0) = \mathbf y(\gamma'_0)$ and $\mathbf y(\gamma_n) = \mathbf y(\gamma'_n)$.

The central result of this appendix is the following theorem.
\begin{thm}\label{thm:appendix_thm}
Let $\PPT(x)$ to be the generating function of PP-bridges of height $T$, that is,
\[\PPT(x) := \sum_{n\geq0}\sum_{\substack{\gamma\in\SAPPn\\H(\gamma)=T}} x^n.\]
Then
\[\lim_{T\to\infty} \PPT(\x) = 0,\]
where $\x=1/\sqrt{2+\sqrt{2}}$.
\end{thm}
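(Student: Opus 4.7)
The plan is to follow the renewal-theoretic strategy of Duminil-Copin and Hammond used in the appendix of~\cite{Beaton2013Critical}. First, I would define an iSAPP as a PP-bridge $\gamma = (\gamma_0, \dots, \gamma_n)$ admitting no non-trivial bridge decomposition; that is, no intermediate positive mid-edge $\gamma_i$ with $\mathbf y(\gamma_0) < \mathbf y(\gamma_j) < \mathbf y(\gamma_i)$ for $0 < j < i$ and $\mathbf y(\gamma_i) < \mathbf y(\gamma_j) < \mathbf y(\gamma_n)$ for $i < j < n$. Since two PP-bridges can be concatenated freely at a shared positive mid-edge endpoint without introducing self-intersections, every PP-bridge factors uniquely as an ordered concatenation of iSAPPs, with both $|\cdot|$ and $\height(\cdot)$ additive under the decomposition. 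Setting $I(\x) := \sum_{\gamma \in \iSAPP} \x^{|\gamma|}$, the factorisation yields $\sum_{T \geq 1} \PPT(\x) = I(\x)/(1-I(\x))$ whenever $I(\x) < 1$; in this case summability forces $\PPT(\x) \to 0$ and we are done.

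The real work is therefore to rule out the boundary case $I(\x) \geq 1$. Assuming $I(\x) < \infty$, I would normalise to obtain a probability measure $\PiSAPP(\gamma) := \x^{|\gamma|}/I(\x)$ on $\iSAPP$, with iid extension $\PNiSAPP$; concatenating a sequence sampled from $\PNiSAPP$ produces a random semi-infinite SAW whose cumulative heights form a renewal process with increment distribution given by $\height$ under $\PiSAPP$. The central quantitative input is the infinite-mean statement
\[\EiSAPP[\height(\gamma)] \;=\; +\infty,\]
which I would establish by combining the identity~\eqref{eqn:ident_withoutE} (at $y = 1$) with the monotonicity and boundedness of $A^O_T(\x, 1)$, $A^I_T(\x, 1)$ and $P_T(\x, 1)$ in $T$: a Fatou-type passage to the limit $T \to \infty$, together with the strict positivity $c_B(1) > 0$, produces an upper bound on the truncated first moment of $\height$ that cannot be uniform in $T$, precluding a finite mean. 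Erickson's renewal theorem for heavy-tailed increments then forces the renewal mass function at height $T$ to vanish, which is precisely $\PPT(\x) \to 0$.

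The principal obstacle will be implementing this infinite-mean argument carefully. Compared to~\cite{Beaton2013Critical}, the rotated geometry introduces two $\alpha$-terms ($A^O_T$ and $A^I_T$) and the additional polygon term $P_T$, and each must be shown to admit a finite $T \to \infty$ limit (equivalently, that the corresponding $O(n)$-like series at $\x$ summed over the infinite strip are convergent) in order to isolate the $B_T$ contribution and apply Fatou's lemma cleanly. A secondary, more bookkeeping-style technicality is ensuring that $I(\x) < \infty$ can be assumed from the outset; I expect this to follow from the uniform bound on $B_T(\x, 1)$ provided by~\eqref{eqn:ident_withoutE} via a first-step decomposition of iSAPPs, by the same route as in~\cite{Beaton2013Critical}.
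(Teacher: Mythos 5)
Your renewal-theoretic reduction is the right frame and matches the paper's: Kesten's relation makes $\PiSAPP(\gamma)=\x^{|\gamma|}$ a probability measure on irreducible PP-bridges, the renewal theorem gives $\PPT(\x)\to 1/\EiSAPP(\height(\gamma))$, and the theorem becomes equivalent to $\EiSAPP(\height(\gamma))=\infty$. But your proposed mechanism for that last, central step does not work, and it is where essentially all of the content of the appendix lives. The identity~\eqref{eqn:ident_withoutE} at $y=1$ only tells you that $A^O_T(\x,1)$, $A^I_T(\x,1)$ and $P_T(\x,1)$ increase to finite limits and hence that $B_T(\x,1)$ decreases to some limit $B(\x,1)\geq 0$; no Fatou-type passage to the limit extracts a ``truncated first moment of $\height$'' from it, and the identity is perfectly consistent a priori with $B(\x,1)>0$ (equivalently with $\EiSAPP(\height(\gamma))<\infty$). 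There is no route from the domain identity to the tail of $\height$ except back through the renewal theorem itself, which is circular.

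The paper instead argues by contradiction, importing the Duminil-Copin--Hammond sub-ballisticity machinery: assuming $\EiSAPP(\height(\gamma))<\infty$, one first proves $\EiSAPP(\width(\gamma))<\infty$ (Proposition~\ref{prop:H_W_finite}, which needs a coupled limit $T_k,L_k\to\infty$ and a Cauchy--Schwarz bound relating $E_{T,L}$ to arch generating functions, not merely the $L\to\infty$ limit at fixed $T$); then one shows a positive density of renewal points are ``diamond points'' (Proposition~\ref{prop:diamond_positive_density}), which in this rotated geometry requires the genuinely new Lemma~\ref{lem:expected_x_0} that $\EiSAPP(\mathbf x(\gamma_{|\gamma|}))=0$ --- nontrivial precisely because PP-bridges are not invariant under reflection here, forcing a further decomposition into x-irreducible pieces; finally the $\mathsf{StickBreak}$ surgery at pairs of diamond points produces bridges that are too wide, contradicting the finiteness of $\EiSAPP(\width(\gamma))$. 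None of these ingredients appears in your proposal, so the gap is not a matter of careful implementation but of a missing argument.
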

Before proceeding with the proof, we present a corollary which relates this result to those of the previous sections.
\begin{cor}\label{cor:Bto0}
\[B(\x) := \lim_{T\to\infty}B_T(\x) = 0.\]
\end{cor}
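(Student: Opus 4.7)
The plan is to deduce Corollary~\ref{cor:Bto0} from Theorem~\ref{thm:appendix_thm} by bounding $B_T(\x)$ in terms of PP-bridge generating functions of height close to $T$. Since $P\!P_T(\x)\to 0$ as $T\to\infty$ by the theorem, any such bound will immediately yield the corollary. Note that monotonicity of $B_T(\x)$ in $T$ has already been observed in the main text, so the limit $B(\x)$ exists; the issue is to identify it as zero.

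The first step is to reconcile the types of the endpoints. A walk counted by $B_T$ begins at the horizontal mid-edge $a$ at the bottom of the strip and ends on a $\beta^\pm$ mid-edge at the top, whereas a PP-bridge starts and ends on positive mid-edges. Pre- and appending a bounded number of steps (or, equivalently, a local reflection of the first and last few edges) puts any $B_T$-walk in at-most-$O(1)$-to-one correspondence with a SAW between two positive mid-edges whose $\mathbf y$-coordinates differ by $T\pm O(1)$; this costs at most a constant multiplicative factor $\x^{-O(1)}$ in the generating function and at worst an $O(1)$ change in height.

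The key remaining step is that the resulting walk need not be a PP-bridge, because its intermediate $\mathbf y$-coordinates may exceed $\mathbf y$ of the end or fall below $\mathbf y$ of the start. To handle this, I would apply a Hammersley--Welsh-style decomposition in the vertical direction, analogous to the fixed-length unfolding of Proposition~\ref{prop:FE_etc_existence}: one reflects successive pieces of the walk through horizontal lines passing through extremal-$\mathbf y$ vertices, inserting a bounded number of stabilising edges, and thereby expresses the walk as (or unfolds it into) a PP-bridge of height at least $T-O(1)$. The main obstacle is that this procedure introduces a sub-exponential $\e^{O(\sqrt n)}$ overhead at walk length $n$, which is not a priori absorbed at $x=\x$; however, exactly as in the appendix of~\cite{Beaton2013Critical}, the overhead is tamed by the fact that we are confined to a strip, where (cf.~Corollary~\ref{cor:RCs_yT}) the relevant generating functions have radius of convergence $\rho_T(1)>\x$, so the $\e^{c\sqrt n}$ factor is swallowed by a geometric series. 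The net outcome is a bound of the form $B_T(\x)\leq C\cdot P\!P_{T-O(1)}(\x)$ (or a bounded sum of such terms over heights near $T$), which combined with Theorem~\ref{thm:appendix_thm} gives $B_T(\x)\to 0$. The argument is cosmetically identical to the one in the appendix of~\cite{Beaton2013Critical}, adapted only to reflect the rotated orientation of the lattice.
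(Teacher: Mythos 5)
Your overall strategy---reduce $B_T(\x)\to0$ to Theorem~\ref{thm:appendix_thm} by putting $B_T$-walks in correspondence with PP-bridges of height $T$ up to bounded local modifications---is the right one, and your endpoint-reconciliation step is essentially what the paper does. But the Hammersley--Welsh vertical unfolding you insert in the middle is solving a problem that does not exist. A walk counted by $B_T$ lives in the domain $D_{T,L}$ (or, after $L\to\infty$, in the strip of height $T$) and terminates on a protruding $\beta^\pm$ mid-edge, which is the highest mid-edge of the domain and, being a boundary mid-edge, can only be visited as an endpoint; likewise, once the start is moved from the horizontal mid-edge $a$ to the external positive mid-edge below it, every other mid-edge of the walk lies strictly above the starting one. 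So the intermediate $\mathbf y$-coordinates are automatically strictly between those of the endpoints and the walk is already a PP-bridge: no unfolding is required. The paper exploits this to get the exact identity $\PPT(\x)=\overleftarrow{B}^+_T(\x)+\x\,\overrightarrow{B}^+_T(\x)$ (splitting $B_T$-walks by whether they leave $a$ through $a^-$ or $a^+$, and either reflecting the first step or prepending one step), whence the clean two-sided bound $2\PPT(\x)\le B_T(\x)\le \frac{2}{\x}\PPT(\x)$ and the corollary.

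The unfolding step is not merely superfluous; as you execute it, it leaves a genuine gap. The margin you invoke to absorb the $\e^{c\sqrt n}$ multiplicity, namely $\rho_T(1)>\x$, shrinks to zero as $T\to\infty$ (Corollary~\ref{cor:RCs_yT} gives $\rho_T(1)\downarrow\x$), so the constant in your claimed bound $B_T(\x)\leq C\cdot P\!P_{T-O(1)}(\x)$ is really a $C_T$ that diverges with $T$. A bound with a $T$-dependent, unbounded constant does not combine with $\PPT(\x)\to0$ to yield $B_T(\x)\to0$ without quantitative control of the rates, which you do not supply. Dropping the unfolding entirely, as above, removes both problems at once.
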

\begin{proof}
We split the walks counted by $B_T(\x)$ in two ways (refer to Figure~\ref{fig:domain_noweights}). Let $\overleftarrow{B}^+_T(\x)$ count those walks which pass through $a^-$ and finish at a mid-edge in $\beta^+$, and similarly $\overleftarrow{B}^-_T(\x)$ counts those which pass through $a^-$ and finish at a mid-edge in $\beta^-$. Alternatively, $\overrightarrow{B}^+_T(\x)$ and $\overrightarrow{B}^-_T(\x)$ count those walks which pass through $a^+$ and finish at a mid-edge in $\beta^+$ or $\beta^-$ respectively. Of course, we have 
\begin{align*}
B_T(\x) &= \overleftarrow{B}^+_T(\x) + \overleftarrow{B}^-_T(\x)+\overrightarrow{B}^+_T(\x) + \overrightarrow{B}^-_T(\x),\\
\intertext{and then by reflective symmetry,}
&= 2\left(\overleftarrow{B}^+_T(\x) + \overrightarrow{B}^+_T(\x)\right).
\end{align*}
Now $\PPT(\x) = \overleftarrow{B}^+_T(\x) + \x \overrightarrow{B}^+_T(\x)$ (if an $\PPT$ walk starts with a left turn, it is a $\overleftarrow{B}^+_T$ walk with the first step reflected; if it starts with a right turn, it is a $\overrightarrow{B}^+_T$ walk with an extra step attached to the start), and so
\begin{equation}\label{eqn:BTRT_first}
2\PPT(\x) = 2\left(\overleftarrow{B}^+_T(\x) + \x \overrightarrow{B}^+_T(\x)\right) \leq 2\left(\overleftarrow{B}^+_T(\x) +  \overrightarrow{B}^+_T(\x)\right) = B_T(\x).
\end{equation}
On the other hand,
\begin{equation}\label{eqn:BTRT_second}
\frac{\x}{2}B_T(\x) = \x\left(\overleftarrow{B}^+_T(\x) +  \overrightarrow{B}^+_T(\x)\right) \leq \overleftarrow{B}^+_T(\x) + \x \overrightarrow{B}^+_T(\x) = \PPT(\x).
\end{equation}
Combining~\eqref{eqn:BTRT_first} and~\eqref{eqn:BTRT_second}, we have
\[2\PPT(\x) \leq B_T(\x) \leq \frac{2}{\x}\PPT(\x).\]
Applying Theorem~\ref{thm:appendix_thm} then shows that $B_T(\x)\to 0$ as $T\to\infty$.
\end{proof}
\noindent Combining Corollary~\ref{cor:Bto0} with Proposition~\ref{prop:critical_fugacity} will complete the proof of Theorem~\ref{thm:main_thm}.

We present now some further definitions. The set $\mathbf R_\gamma$ of \emph{renewal points} of $\gamma\in\SAPPn$ is $\{\gamma_0,\gamma_n\}$, together with the set of points of the form $\gamma_i$ with $0< i< n$, for which $\gamma_{[0,i]} := (\gamma_0,\ldots,\gamma_i)$ and $\gamma_{[i,n]} := (\gamma_i,\ldots,\gamma_n)$ are PP-bridges. We denote by $\mathbf{r}_0(\gamma)$, $\mathbf{r}_1(\gamma),\ldots$ the indices of the renewal points. That is, $\mathbf{r}_0(\gamma) = 0$ and $\mathbf{r}_{k+1}(\gamma) = \inf\{j>\mathbf{r}_k(\gamma):\gamma_j\in \mathbf R_\gamma\}$. When no confusion is possible, we often denote $\mathbf{r}_k(\gamma)$ by just $\mathbf{r}_k$.

An PP-bridge $\gamma\in\SAPPn$ is \emph{irreducible} if its only renewal points are $\gamma_0$ and $\gamma_n$. Let iSAPP be the set of irreducible PP-bridges of arbitrary length starting at $a$. Every PP-bridge $\gamma$ is the concatenation of a finite number of irreducible PP-bridges, the decomposition is unique and the set $\mathbf{R}_\gamma$ is the union of the initial and terminal points of the PP-bridges that comprise this decomposition.

Kesten's relation for irreducible bridges~\cite{Kesten1963Number} on the hypercubic lattice can be adapted to our lattice without difficulty. It gives
\[\sum_{\gamma\in\iSAPP}\x^{|\gamma|} = 1.\]
This enables us to define a probability measure $\PiSAPP$ on iSAPP by setting $\PiSAPP(\gamma) = \x^{|\gamma|}$. Let $\PNiSAPP$ denote the law on semi-infinite walks $\gamma:\mathbb{N}\to\mathbb{H}$ formed by the concatenation of infinitely many samples $\gamma^{[1]},\gamma^{[2]},\ldots$ of $\PiSAPP$. We refer to~\cite[Section 8.3]{Madras1993Selfavoiding} for details of related measures in the case of $\mathbb{Z}^d$. The definition of $\mathbf{R}_\gamma$ and the indexation of renewal points extend to this context (we obtain an infinite sequence $(\mathbf{r}_k)_{k\in\mathbb{N}}$).

Observe that a PP-bridge $\gamma$ of length $n$ has height $\height(\gamma) = \frac{2}{\sqrt{3}}\mathbf y(\gamma_n)$. We define the \emph{width} of $\gamma$ to be
\[\width(\gamma) = \frac{2}{3}\max\{\mathbf x(\gamma_k)-\mathbf x(\gamma_{k'}),0\leq k,k'\leq n\}.\]
Intuitively, the width of a PP-bridge is the total number of columns of cells it spans.

The next result, equivalent to Lemma 11 of~\cite{Beaton2013Critical}, relates the limiting value of $\PPT(\x)$ to the average height of irreducible PP-bridges.
\begin{lem}\label{lem:RT_expectedheight}
As $T\to\infty$,
\[\PPT(\x) \to \frac{1}{\mathbb{E}_{\rm iSAPP}(\height(\gamma))}.\]
\end{lem}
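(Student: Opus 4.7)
The plan is to realise $\PPT(\x)$ as a classical renewal probability and invoke the renewal theorem. Set $p_h := \sum_{\gamma\in\iSAPP,\ \height(\gamma)=h}\x^{|\gamma|}$; Kesten's relation (stated just above the lemma) gives $\sum_{h\geq 1} p_h = \sum_{\gamma\in\iSAPP}\x^{|\gamma|} = 1$, so $(p_h)_{h\geq 1}$ is a probability distribution on the positive integers. Under $\PNiSAPP$ the irreducible components $\gamma^{[1]},\gamma^{[2]},\ldots$ are i.i.d.\ with law $\PiSAPP$, so the heights $H_i:=\height(\gamma^{[i]})$ form an i.i.d.\ sequence with distribution $(p_h)$.

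Next I would translate $\PPT(\x)$ into a probability for this i.i.d.\ sum. Because every PP-bridge decomposes uniquely into irreducible PP-bridges and because height is additive at renewal points, the PP-bridges of height $T$ are in bijection with finite tuples $(\gamma^{[1]},\ldots,\gamma^{[k]})$ of iSAPP elements whose heights sum to $T$; the inverse map is simply concatenation, which remains a PP-bridge of the prescribed height because the interior $y$-coordinates of each factor lie strictly inside the corresponding horizontal slab and the slabs meet only at the renewal mid-edges. Grouping by the number of factors and using independence gives
\[\PPT(\x) \;=\; \sum_{k\geq 1}\PNiSAPP(H_1+\cdots+H_k=T) \;=\; \PNiSAPP\!\left(T\in\{S_k : k\geq 1\}\right),\]
where $S_k:=H_1+\cdots+H_k$. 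Thus $\PPT(\x)$ is precisely the probability that the renewal process with inter-arrival law $(p_h)$ has a renewal at time $T$.

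The Erd\H{o}s--Feller--Pollard renewal theorem then yields
\[\PPT(\x)\;\longrightarrow\; \frac{1}{\mathbb{E}[H_1]} \;=\; \frac{1}{\EiSAPP(\height(\gamma))},\]
with the usual convention that the right-hand side equals $0$ when $\mathbb{E}[H_1]=+\infty$ (in which case the renewal theorem delivers $0$ directly, so no separate treatment is required). The one hypothesis of the renewal theorem not built into the construction is aperiodicity, namely $\gcd\{h : p_h > 0\} = 1$; this is a concrete combinatorial assertion about which heights are realisable by irreducible PP-bridges on the rotated honeycomb lattice, verified by exhibiting two irreducible PP-bridges whose heights are coprime (a minimal-height example together with any slightly longer irreducible one suffices).

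The principal obstacle is therefore the lattice-specific aperiodicity check, which is a short but genuinely combinatorial step that has to be carried out for this particular orientation; the renewal bookkeeping itself is essentially a transcription of the analogous argument in the appendix of~\cite{Beaton2013Critical}. A useful side-benefit of routing through the renewal theorem in this form is that finiteness of $\EiSAPP(\height(\gamma))$ does not need to be established separately, since the conclusion of the lemma is consistent with the infinite-mean regime.
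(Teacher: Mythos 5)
Your proposal is correct and follows essentially the same route as the paper: the paper's proof likewise applies the standard renewal theorem (citing~\cite[Thm.~4.2.2(b)]{Madras1993Selfavoiding}) to the sequence $f_T=\sum_{\gamma\in\iSAPP,\,\height(\gamma)=T}\x^{|\gamma|}$, using Kesten's relation to normalise it and identifying $\PPT(\x)$ as the associated renewal sequence. Your explicit flagging of the aperiodicity hypothesis is a fair point of care that the paper leaves implicit, though like the paper you do not actually exhibit the coprime heights.
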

\begin{proof}
The result follows from standard renewal theory. We can for instance apply~\cite[Theorem 4.2.2(b)]{Madras1993Selfavoiding} to the sequence
\[f_T := \sum_{\substack{\gamma\in\iSAPP\\H(\gamma)=T}} \x^{|\gamma|}.\]
Indeed, with the notation of this theorem, $v_T= \PPT(\x)$ and $\sum_k kf_k = \EiSAPP(\height(\gamma))$.
\end{proof}
Thus Theorem~\ref{thm:appendix_thm} is equivalent to
\[\EiSAPP(H(\gamma)) = \infty.\]
We prove this by contradiction, in the same way as Theorem 10 of~\cite{Beaton2013Critical}. Assuming $\EiSAPP(\height(\gamma))$ is finite, we first show that $\EiSAPP(\width(\gamma))$ is also finite. Then, we show that under these two conditions, an infinite PP-bridge is very narrow. The last step of the proof involves demonstrating that this leads to a contradiction. The argument uses a ``stickbreak'' operation, which perturbs a PP-bridge by selecting a subpath and rotating it clockwise by $\frac{\pi}{3}$. The new path is a self-avoiding PP-bridge for an adequately chosen subpath, but its width is relatively large, contradicting the fact that PP-bridges are narrow.

The proof of Theorem 10 of~\cite{Beaton2013Critical} was greatly inspired by a recent paper of Duminil-Copin and Hammond~\cite{DuminilCopin2012Selfavoiding}, where self-avoiding walks are proved to be sub-ballistic. We will refer the reader to~\cite{Beaton2013Critical} where appropriate, as many components of the proof in that article require no modification in order to apply here. For the next result (equivalent to Proposition 12 and Lemmas 13 and 14 of~\cite{Beaton2013Critical}), however, several minor changes are required, and so we present the proof in full.

\begin{prop}\label{prop:H_W_finite}
If $\EiSAPP(\height(\gamma))<\infty$, then $\EiSAPP(\width(\gamma))<\infty$.
\end{prop}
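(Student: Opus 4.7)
The plan is to follow closely the strategy of Proposition~12 of~\cite{Beaton2013Critical}, which itself mirrors the surgery argument of Duminil-Copin and Hammond~\cite{DuminilCopin2012Selfavoiding}. I argue by contradiction: assume $\EiSAPP(\height(\gamma))<\infty$ but $\EiSAPP(\width(\gamma))=\infty$. The aim is then to construct, from wide but short irreducible PP-bridges, too many distinct SAWs of a given length to be compatible with $c_n \leq \mu^{n+o(n)}$ where $\mu = \x^{-1}$.

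First, I would split the width into its left and right parts: for $\gamma \in \iSAPP$, write $\width_R(\gamma) = \frac{2}{3}(\max_k \mathbf{x}(\gamma_k) - \mathbf{x}(\gamma_0))$ and $\width_L(\gamma) = \frac{2}{3}(\mathbf{x}(\gamma_0) - \min_k \mathbf{x}(\gamma_k))$. By the reflective symmetry of $\iSAPP$ across the vertical axis through $\gamma_0$, the two have the same distribution under $\PiSAPP$, so it suffices to prove $\EiSAPP(\width_R)<\infty$. Suppose instead that $\EiSAPP(\width_R)=\infty$; combining this with the finite-height assumption and a Markov estimate, one can produce a sequence $R_j \to \infty$ and a function $h(R)=o(R)$ for which $\PiSAPP\{\width_R(\gamma)\geq R_j,\,\height(\gamma) \leq h(R_j)\}$ admits a non-summable lower bound in terms of $R_j$. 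These bridges are the wide-but-short pieces that will drive the contradiction.

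The next step is the \emph{stickbreak} operation. Given such a wide, short irreducible PP-bridge $\gamma$, I select a long subpath $\gamma_{[i,j]}$ lying in a thin horizontal band and rotate it clockwise by $\pi/3$ about its initial mid-edge. A careful choice of index interval $[i,j]$, following~\cite{DuminilCopin2012Selfavoiding,Beaton2013Critical}, guarantees that the rotated subpath, re-glued to the unchanged portions of $\gamma$, stays self-avoiding and can be completed to an irreducible PP-bridge of length close to $|\gamma|$ but of strictly larger height. Ranging over the many admissible choices of $[i,j]$ produces many distinct bridges from a single $\gamma$, with multiplicity growing with $\width_R(\gamma)$.

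Aggregating over the lower-bounded family of wide, short $\gamma$ and over valid choices of $[i,j]$, I would then show that the total number of distinct PP-bridges of a given length $n$ obtained in this way outnumbers $\mu^n$, contradicting the definition of $\mu = \x^{-1}$ and thus the assumption $\EiSAPP(\width_R) = \infty$. The main obstacle will be adapting the stickbreak geometry to the rotated honeycomb lattice: since positive mid-edges here play the role that horizontal mid-edges played in~\cite{Beaton2013Critical}, I must verify that a $\pi/3$ clockwise rotation of a subpath starting at a positive mid-edge still lands on the lattice in a compatible way, and that the re-gluing preserves self-avoidance and the PP-bridge property (in particular the renewal structure encoded by $\mathbf{R}_\gamma$). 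The combinatorial bookkeeping along the glue-seams is the most delicate ingredient, but is expected to go through with the same logic as in~\cite{Beaton2013Critical}.
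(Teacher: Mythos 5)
Your proposal misidentifies both the tool and the target. In the paper (and in Proposition~12 of~\cite{Beaton2013Critical}, which you cite as your model but which is not a surgery argument), the finiteness of $\EiSAPP(\width(\gamma))$ is \emph{not} obtained by stickbreak surgery. The stickbreak operation appears only at the very end of the appendix, to derive the final contradiction with the \emph{narrowness} of bridges --- and narrowness is itself a consequence of the width-finiteness you are trying to prove here. So your plan inverts the logical order of the argument. The essential ingredient you are missing is the parafermionic identity~\eqref{eqn:unweighted_identity} in the domain $D_{T,L}$: it yields the finiteness of the half-plane arch generating function $\hat{A}(\x)=\sum_L\mathbf{a}_L(\x)$, hence the decay of $\mathbf{a}_{2L}(\x)$, which the paper plays off (via a Cauchy--Schwarz bound on $E_{T,L}$ and a renewal-theoretic upper bound on $\PPTL(\x)$) against the assumed divergence of $\sum_L\PiSAPP(\width(\gamma)>2L+1)$. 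Choosing a coupling $T_k,L_k$ then forces $B_{T_k,L_k}(\x)$ to be simultaneously bounded away from $0$ and tending to $0$. Without some such input from the observable, no counting argument can get off the ground.

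Quantitatively, your intended contradiction with $c_n\leq\mu^{n+o(n)}$ cannot be reached. The hypothesis $\EiSAPP(\width(\gamma))=\infty$ says only that $\sum_n \x^n\sum_{|\gamma|=n}\width(\gamma)$ diverges; producing $\width(\gamma)$ distinct images from each $\gamma$ therefore yields at best a polynomial-in-$n$ excess over $\mu^n$, never the exponential excess needed to contradict $c_n=\mu^{n+o(n)}$. One would instead need the images to land in a family whose generating function at $\x$ is provably finite (e.g.\ irreducible PP-bridges, via Kesten's relation), and the stickbreak operation does not preserve irreducibility. A secondary but real problem: your symmetry claim that $\width_L$ and $\width_R$ are identically distributed under $\PiSAPP$ fails on this rotated lattice, since reflection through a vertical axis maps positive mid-edges to negative ones and so does not preserve $\iSAPP$; this is exactly the asymmetry that forces the paper to prove $\EiSAPP(\mathbf{x}(\gamma_{|\gamma|}))=0$ by the separate xSAP decomposition of Lemma~\ref{lem:expected_x_0} rather than by symmetry.
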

\begin{proof}
We return to the special domain $D_{T,L}$ as defined in Section~\ref{sec:identities} (see Figure~\ref{fig:domain_noweights}), without any surface weights. Recall the identity~\eqref{eqn:unweighted_identity} for this domain:
\[c_A^OA_{T,L}^O(\x) + c_A^IA_{T,L}^I(\x) + c_EE_{T,L}(\x) + c_BB_{T,L}(\x) + c_PP_{T,L}(\x) = c_G.\]
As in Section~\ref{sec:proof}, we would like $E_{T,L}(\x)$ to tend to 0 as the size of the domain increases. We previously showed this is the case as $L$ increases for fixed $T$, but now we wish to let both $T$ and $L$ increase. Recall that we defined an \emph{arch} to be a SAW in a strip which starts and ends at horizontal mid-edges on the bottom of the strip. Such a definition obviously generalises to walks in the upper half-plane. For even $L\in\mathbb{N}$, let $\mathbf{a}_L(x)$ be the generating function of arches in the upper half-plane which end $L$ columns to the right of their starting point. 

We note here that the generating function of all arches,
\[\hat{A}(x) = \sum_{L\geq0}\mathbf{a}_L(x),\]
satisfies
\[\hat{A}(\x) \leq \frac{1}{1+\x}\left(A^O(\x) + A^I(\x)\right)+1,\]
where $A^O(\x) = \lim_{T\to\infty}A^O_T(\x)$ and $A^I(\x) = \lim_{T\to\infty} A^I_T(\x)$. (To see this, observe that we can reflect the last half-edge of an arch to produce a walk counted by $A^O$ (or $A^I$, depending on the direction of the last step), or we could add a step to produce a walk counted by $A^I$ (or $A^O$). The empty arch is an exception so it is treated separately.) As discussed in Section~\ref{sec:proof}, $A^O(\x)$ and $A^I(\x)$ are finite, and so we see $\hat{A}(\x)<\infty$.

For $m\in\mathbb{N}$, let $\mathbf{e}^+_m(x)$ be the generating function of walks in $D_{T,L}$ which start at $a$ and end on the $m^{\text{th}}$ row of $\epsilon^+$, so that $E_{T,L}(\x) = 2\sum_{m\leq\lfloor\frac{T}{2}\rfloor}\mathbf{e}_m^+(\x)$. Using a reflection argument and the Cauchy-Schwarz inequality, we find
\begin{equation}\label{eqn:boundE_witharches}
\left(E_{T,L}(\x)\right)^2 \leq 4\Big\lfloor\frac{T}{2}\Big\rfloor\sum_{m\leq\lfloor\frac{T}{2}\rfloor}\left(\mathbf{e}_m^+(\x)\right)^2 \leq 4\Big\lfloor\frac{T}{2}\Big\rfloor\mathbf{a}_{2L+2}(\x).
\end{equation}
(The second inequality comes from the fact that we can concatenate two walks counted by $\mathbf{e}_m^+$ (after reflecting the second one) to produce an arch.)

Assume that we couple $T\equiv T_k$ and $L\equiv L_k$ so that both tend to infinity as $k$ grows, and $T\mathbf{a}_{2L+2}(\x)\to0$. Then $E_{T,L}(\x)$ tends to 0. Moreover, $A^O_{T,L}(\x)$, $A^I_{T,L}(\x)$ and $P_{T,L}(\x)$ increase with $T$ and $L$, and converge respectively to $A^O(\x)$, $A^I(\x)$ and $P(\x)$. Then $B_{T,L}(\x)$ also converges, and its limit must be
\begin{align}
\lim_{k\to\infty}B_{T_k,L_k}(\x)  = B(\x) &= \lim_{T\to\infty}B_T(\x)\notag\\
&\geq \lim_{T\to\infty} 2\PPT(\x)\notag\\
&>0,\label{eqn:BTk>0}
\end{align}
where the last two inequalities follow from~\eqref{eqn:BTRT_first} and by assumption, respectively.

We now return to random infinite bridges and use them to give an upper bound on $B_{T,L}(\x)$. We consider again the domain $D_{T,L}$, and denote by $a^*$ the external mid-edge adjacent to the vertex $a^-$ (not shown in Figures~\ref{fig:domain_noweights} and~\ref{fig:domain_withweights}). We then define $\PPTL(x)$, in the obvious way, to be the generating function of PP-bridges of height $T$ in $D_{T,L}$ which begin at $a^*$ and end at the top of the rectangle. By the same arguments used to obtain~\eqref{eqn:BTRT_second},
\begin{equation}\label{eqn:bound_BTL_RTL}
B_{T,L}(\x) \leq \frac{2}{\x}\PPTL(\x).
\end{equation}
Let $0<\delta<1/\EiSAPP(\mathsf H(\gamma))$. We have
\begin{align*}
\PPTL(\x) &= \sum_{\gamma:a^*\to\beta^+\bigcup\beta^-} \x^{|\gamma|}\\
&\leq \PNiSAPP(\exists n\in\mathbb{N}:\height(\gamma_{[0,\mathbf{r}_n]}) = T \text{ and } \width(\gamma_{[0,\mathbf{r}_n]}\leq 2L+1)\\
&\leq \PNiSAPP(\height(\gamma_{[0,\bfr_{\delta T}]})\geq T) + \PNiSAPP(\exists n\geq\delta T:\height(\gamma_{[0,\bfr_n]})=T \text{ and }\width(\gamma_{[0,\bfr_n]})\leq 2L+1).
\end{align*}
Let $\gamma^{[i]}$ be the $i^{\text{th}}$ irreducible PP-bridge of $\gamma$. Since the $\gamma^{[i]}$s are independent, we obtain
\begin{align*}
\PPTL(\x) &\leq \PNiSAPP(\height(\gamma_{[0,\bfr_{\delta T}]})\geq T) + \PNiSAPP(\forall i\leq \delta T, \width(\gamma^{[i]})\leq 2L+1)\\
&= \PNiSAPP(\height(\gamma_{[0,\bfr_{\delta T}]})\geq T) + \PiSAPP(\width(\gamma)\leq 2L+1)^{\delta T}\\
&\leq \PNiSAPP(\height(\gamma_{[0,\bfr_{\delta T}]})\geq T) + \exp(-\delta T\PiSAPP(\width(\gamma)>2L+1)).
\end{align*}

Note that
\[\height(\gamma_{[0,\bfr_{\delta T}]}) = \sum_{i=1}^{\delta T}\height(\gamma^{[i]}).\]
Hence we can use the law of large numbers, together with the fact that $\delta\EiSAPP(\height(\gamma))<1$, to see that $\PNiSAPP(\height(\gamma_{[0,\bfr_{\delta T}]})\geq T)$ tends to 0 as $T\to\infty$. So if we can couple $T\equiv T_k$ and $L\equiv L_k$ in such a way that $T\PiSAPP(\width(\gamma)>2L+1)$ tends to infinity, then $\PPTL(\x)$ tends to 0, and then $B_{T,L}(\x)$ tends to 0 by~\eqref{eqn:bound_BTL_RTL}.

We now argue \emph{ad absurdum}. Assume that $\EiSAPP(\width(\gamma))=\infty$. Then
\[\limsup_{L\to\infty}\frac{\PiSAPP(\width(\gamma)>2L+1)}{\mathbf{a}_{2L+2}(\x)}=\infty,\]
since $\mathbf{a}_{2L}(\x)$ is the term of a converging series (namely the generating function $\hat{A}(\x)$ of arches) and $\PiSAPP(\width(\gamma)>L)$ is non-increasing in $L$ and is the term of a diverging series (in particular, it sums to $\EiSAPP(\width(\gamma))=\infty$.) Let $(L_k)_k$ be a sequence such that
\[\lim_{k\to\infty}\frac{\PiSAPP(\width(\gamma)>2L_k+1)}{\mathbf{a}_{2L_k+2}(\x)}=\infty,\]
and take
\[T_k = \Bigg\lfloor\frac{1}{\sqrt{\mathbf{a}_{2L_k+2}(\x)\PiSAPP(\width(\gamma)>2L_k+1)}}\Bigg\rfloor.\]
Then 
\[T_k\PiSAPP(\width(\gamma)>2L_k+1) \to \infty \quad \text{and}\quad T_k\mathbf{a}_{2L_k+2}(\x)\to 0.\]
Now by~\eqref{eqn:BTk>0} we have $\lim_{k\to\infty}B_{T_k,L_k}(\x)>0$, but we also have
\[\lim_{k\to\infty} B_{T_k,L_k}(\x) \leq \lim_{k\to\infty}\frac{2}{\x}P\!P_{T_k,L_k}(\x) = 0.\]
We thus have a contradiction, and conclude that $\EiSAPP(\width(\gamma))<\infty$.
\end{proof}

Let $\Omega$ be the set of bi-infinite walks $\gamma:\mathbb{Z}\to\mathbb{H}$ such that $\gamma_0=a_0$. Let $(\gamma^{[i]},i\in\mathbb{Z})$ be a bi-infinite sequence of irreducible PP-bridges sampled independently according to $\PiSAPP$. Let $\PZiSAPP$ denote the law on $\Omega$ formed by concatenating the PP-bridges $\gamma^{[i]}$ in such a way that $\gamma^{[1]}$ starts at $a_0$. Let $\mathcal{F}$ be the $\sigma$-algebra generated by events depending on a finite number of vertices of the walk.

We extend the indexation of renewal points to these bi-infinite PP-bridges (we obtain a bi-infinite sequence $(\bfr_n(\gamma))_{n\in\mathbb{Z}}$ such that $\bfr_0(\gamma)=0$). Let $\tau:\Omega\to\Omega$ be the \emph{shift} defined by $\tau(\gamma)_i = \gamma_{i+\bfr_1(\gamma)} - \gamma_{\bfr_1(\gamma)}$ for every $i\in\mathbb{Z}$. (This is only defined if $\bfr_1$ exists, but this is the case with probability 1 under $\PZiSAPP$.) The shift translates the walk so that $\bfr_1(\gamma)$ is now at the origin $a_0$ of the lattice. Note that $\bfr_i(\tau(\gamma)) = \bfr_{i+1}(\gamma)-\bfr_1(\gamma)$. Let $\sigma$ denote the \emph{rotation} through angle $\pi$ about the origin $a_0$.

The following proposition is equivalent to Proposition 15 of~\cite{Beaton2013Critical}. The only difference is the fact that here $\sigma$ is a \emph{rotation}, whereas in~\cite{Beaton2013Critical} it is a \emph{reflection}. This does not affect the proof at all, and thus we direct the reader to that article for further details.

\begin{prop}\label{prop:ergodic_etc}
The measure $\PZiSAPP$ satisfies the following properties.
\begin{itemize}
\item[$({\rm P}_1)$] It is invariant under the shift $\tau$.
\item[$({\rm P}_2)$] The shift $\tau$ is ergodic for $(\Omega,\mathcal{F},\PZiSAPP)$.
\item[$({\rm P}_3)$] Under $\PZiSAPP$, the random variables $(\sigma\gamma_n)_{n\leq0}$ and $(\gamma_n)_{n\leq0}$ are independent and identically distributed.
\end{itemize}
\end{prop}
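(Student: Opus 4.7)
The plan is to prove each of (P1), (P2), (P3) by transferring the question to the product space $(\iSAPP)^{\mathbb{Z}}$ equipped with the iid product measure $\PiSAPP^{\otimes\mathbb{Z}}$, via the concatenation map $\Phi:(\gamma^{[i]})_{i\in\mathbb{Z}}\mapsto\gamma$ defined so that $\gamma^{[1]}$ starts at $a_0$. As the excerpt points out, every step of the proof of Proposition 15 in~\cite{Beaton2013Critical} uses only that $\sigma$ is an isometry of $\mathbb{H}$ fixing $a_0$ and mapping positive mid-edges to positive mid-edges; a rotation by $\pi$ about $a_0$ satisfies both requirements, so no real modification is needed.

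For (P1), observe that by construction $\tau$ is the image under $\Phi$ of the coordinate shift $(\gamma^{[i]})_{i\in\mathbb{Z}}\mapsto(\gamma^{[i+1]})_{i\in\mathbb{Z}}$ on the product space. The product measure $\PiSAPP^{\otimes\mathbb{Z}}$ is trivially invariant under this coordinate shift, and pushing forward through $\Phi$ gives $\tau$-invariance of $\PZiSAPP$. For (P2), the same coordinate shift on an iid product measure is mixing, hence ergodic, by the standard argument (approximate $L^2$-functions by cylinder functions and use independence). Ergodicity then transfers through $\Phi$: any $\tau$-invariant event $E\in\mathcal{F}$ pulls back under $\Phi$ to a shift-invariant event in the product $\sigma$-algebra, which must have $\PiSAPP^{\otimes\mathbb{Z}}$-probability $0$ or $1$.

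For (P3), the iid construction gives independence for free: $(\gamma_n)_{n\geq0}$ is a measurable function of $(\gamma^{[i]})_{i\geq1}$, while $(\gamma_n)_{n\leq0}$ is a measurable function of $(\gamma^{[i]})_{i\leq0}$, and the two collections of irreducible bridges are independent. To obtain that $(\sigma\gamma_n)_{n\leq0}$ has the same law as $(\gamma_n)_{n\geq0}$, one verifies that the composition of rotation $\sigma$ with orientation-reversal induces a measure-preserving involution on $(\iSAPP,\PiSAPP)$: $\sigma$ is an isometry of the lattice fixing $a_0$, it sends positive mid-edges to positive mid-edges, it reverses the vertical order of endpoints so that reading the rotated walk backwards restores the bridge condition $\mathbf{y}(\gamma_0)<\mathbf{y}(\gamma_i)<\mathbf{y}(\gamma_n)$, it preserves irreducibility since it preserves the set of renewal points, and it preserves $\PiSAPP$-weight because $\x^{|\gamma|}$ depends only on length. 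Once this bijection is in hand, the equidistribution of the rotated backward half and the forward half is immediate.

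The main (and only) technical obstacle is the lattice-combinatorial check that rotation by $\pi$ about $a_0$ really does map $\iSAPP$ bijectively onto itself after orientation-reversal; this is precisely where the fact that $a_0$ is a \emph{positive} mid-edge is used, since under $\sigma$ positive mid-edges are sent to positive mid-edges. All remaining ingredients are standard renewal/ergodic-theory arguments that carry over verbatim from~\cite[Prop.~15]{Beaton2013Critical}, so beyond this lattice check the proof reduces to citing that reference.
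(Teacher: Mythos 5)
Your proposal is correct and follows essentially the same route as the paper, which gives no independent argument but defers entirely to Proposition 15 of~\cite{Beaton2013Critical}, noting that the only change is that $\sigma$ is now a rotation rather than a reflection. The standard ingredients you supply — shift-invariance and ergodicity of the iid product measure pushed forward through the concatenation map, independence of the two halves of the bi-infinite sequence, and the lattice check that rotation by $\pi$ about the positive mid-edge $a_0$ composed with orientation-reversal is a $\PiSAPP$-preserving involution on $\iSAPP$ — are exactly what that reference uses, with your final paragraph spelling out the one point the paper itself singles out as the only difference.
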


Renewal points separate a walk into two pieces, located above and below the point. We now introduce a more restrictive notion, illustrated in Figure~\ref{fig:stickbreak}. A mid-edge $\gamma_k$ of a walk $\gamma$ is said to be a \emph{diamond point} if
\begin{itemize}
\item it is a renewal point of $\gamma$, and
\item the walk is contained in the cone
\[\left(\left(\gamma_k-\textstyle{\frac{1}{4}}-\frac{\sqrt{3}}{4}\ii\right) + \mathbb{R}_+\e^{\ii\pi/3} + \mathbb{R}_+ \e^{2\ii\pi/3}\right) \bigcup \left(\left(\textstyle\gamma_k+\frac{1}{4}+\frac{\sqrt{3}}{4}\ii\right) - \mathbb{R}_+ \e^{\ii\pi/3} - \mathbb{R}_+\e^{2\ii\pi/3}\right).\]
\end{itemize}
We denote the set of diamond points of $\gamma$ by $\mathbf{D}_\gamma$. 

\begin{figure}
\centering
\includegraphics[angle=90,width=320pt]{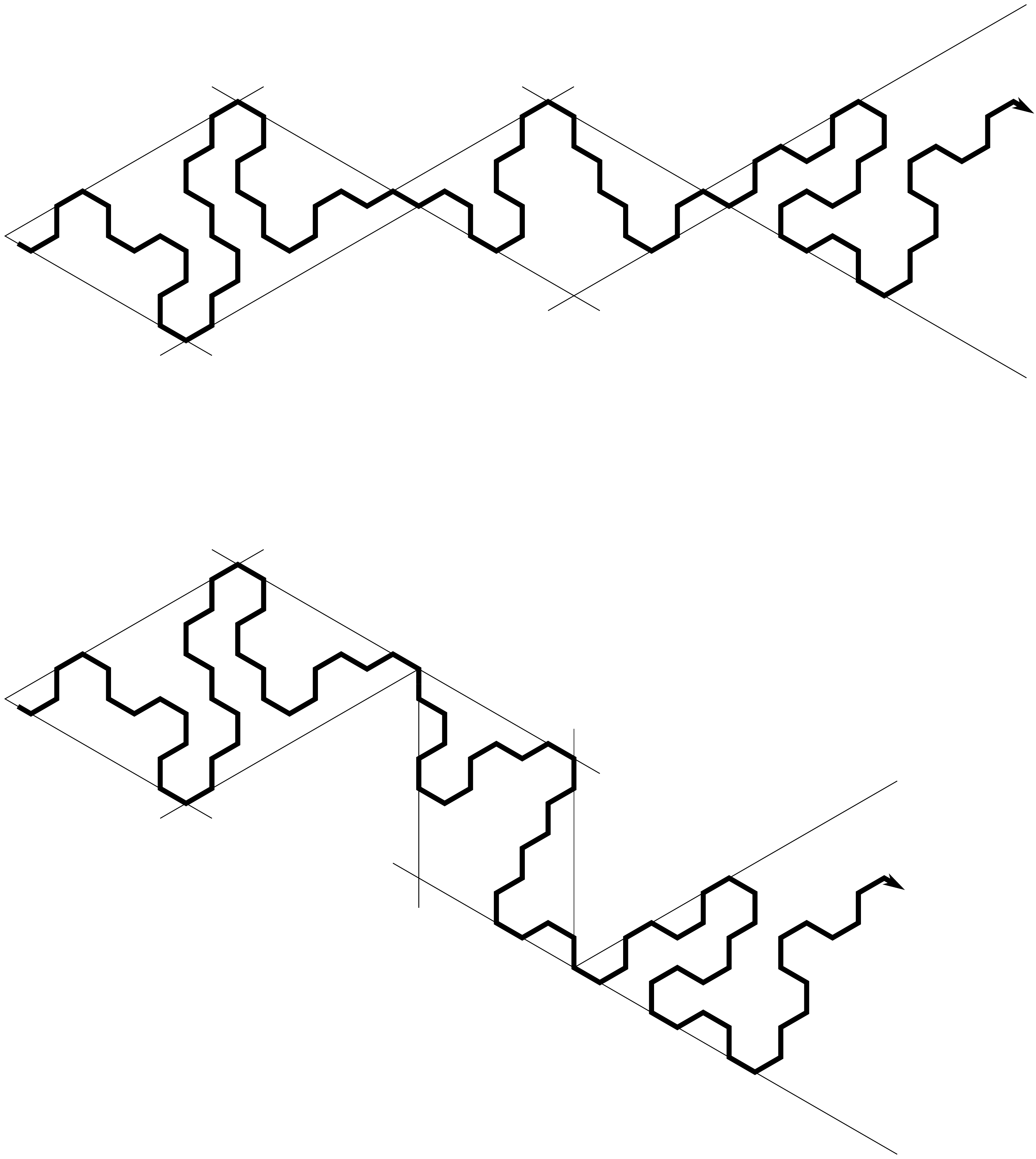}
\caption{An PP-bridge having three diamond points, and the same bridge after application of the $\mathsf{StickBreak}$ operation.}
\label{fig:stickbreak}
\end{figure}

The following proposition, equivalent to Proposition 16 of~\cite{Beaton2013Critical}, tells us that under our assumption $\EiSAPP(\height(\gamma))<\infty$, a positive fraction of renewal points are diamond points. As usual, the proof is very similar to~\cite{Beaton2013Critical}, but there are a sufficient number of differences that we will present the whole proof.

\begin{prop}\label{prop:diamond_positive_density}
If $\EiSAPP(\height(\gamma))<\infty$, then there exists $\delta>0$ such that
\[\PNiSAPP\left(\liminf_{n\to\infty}\frac{|{\mathbf D}_\gamma\cap\{0,\ldots,\bfr_n\}|}{n}\geq\delta\right)=1.\]
\end{prop}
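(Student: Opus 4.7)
The plan is to reduce the claim, via Birkhoff's ergodic theorem applied to $\PZiSAPP$, to showing that a semi-infinite concatenation of irreducible PP-bridges stays inside a fixed upward-opening $60^\circ$ cone with positive probability.

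\emph{Step 1 (ergodic reduction).} Let $f:\Omega\to\{0,1\}$ be $f(\gamma)=\mathbf{1}[\gamma_0\in\mathbf{D}_\gamma]$, a bounded $\mathcal{F}$-measurable function. By property $(\mathrm P_2)$ of Proposition~\ref{prop:ergodic_etc}, Birkhoff's theorem gives, $\PZiSAPP$-almost surely,
\[\frac{|\mathbf{D}_\gamma\cap\{\bfr_0,\ldots,\bfr_{n-1}\}|}{n} = \frac{1}{n}\sum_{i=0}^{n-1}f(\tau^i\gamma) \longrightarrow \delta := \PZiSAPP(\gamma_0\in\mathbf{D}_\gamma).\]
The future marginal of $\PZiSAPP$ is $\PNiSAPP$, and the $\PZiSAPP$-diamond condition at a non-negative renewal index implies its $\PNiSAPP$-analogue (truncating the past can only weaken the cone constraint). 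Hence the almost sure statement of the proposition follows as soon as $\delta>0$.

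\emph{Step 2 (factorisation via $(\mathrm P_3)$).} The event $\{\gamma_0\in\mathbf{D}_\gamma\}$ is the intersection of two events: the future $(\gamma_n)_{n\geq 0}$ lies in the upward cone $C$ with apex $a_0-\tfrac14-\tfrac{\sqrt 3}{4}\ii$, and the past $(\gamma_n)_{n\leq 0}$ lies in the rotated cone $\sigma(C)$. By $(\mathrm P_3)$, after the rotation $\sigma$ the past becomes independent of the future and has the same distribution, so
\[\delta = p^2, \qquad \text{where } p := \PNiSAPP\bigl(\gamma_n\in C \text{ for all } n\geq 0\bigr),\]
and it suffices to show $p>0$.

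\emph{Step 3 (positivity of $p$).} Writing $X_k=\mathbf{x}(\gamma_{\bfr_k})$ and $Y_k=\mathbf{y}(\gamma_{\bfr_k})$, the horizontal reflection symmetry of $\PiSAPP$ yields $\EiSAPP(X_1)=0$, while Proposition~\ref{prop:H_W_finite} gives $\EiSAPP(|X_1|)\leq\EiSAPP(\width(\gamma))<\infty$ and $\EiSAPP(Y_1)=\EiSAPP(\height(\gamma))>0$. The strong law of large numbers then yields $X_k/k\to 0$ and $Y_k/k\to\EiSAPP(\height(\gamma))$ almost surely, and the Borel--Cantelli estimate $\sum_k\PiSAPP(\width(\gamma)>\epsilon k)\leq \epsilon^{-1}\EiSAPP(\width(\gamma))<\infty$ gives $\width(\gamma^{[k]})/k\to 0$ almost surely, so both the renewal-point trajectory and the intra-bridge excursions are eventually confined to an arbitrarily narrow wedge about the vertical axis, hence in $C$ beyond some random time $\bfr_N$. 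A finite-conditioning step then fixes a short deterministic initial walk which fits inside $C$, occurs with positive probability as a cylinder event on the first few independent $\PiSAPP$-samples, and leaves the walk at a point from which, by the Markov property, the preceding tail estimate still applies with positive probability. Combining these ingredients gives $p>0$, hence $\delta>0$.

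\emph{Main obstacle.} The technical heart is Step~3: bridging the asymptotic control, which is valid only after a random time $\bfr_N$, with the requirement that the walk lie in $C$ for \emph{every} $n\geq 0$. The two pieces are not independent, and the standard resolution is to prescribe a specific deterministic initial walk and then restart the SLLN control afresh via the renewal decomposition. This is the same technical point addressed in Proposition~16 of~\cite{Beaton2013Critical} (itself an adaptation of~\cite{DuminilCopin2012Selfavoiding}); only minor adjustments for our rotated lattice and our choice of cone are required.
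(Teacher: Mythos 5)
Your overall architecture (Birkhoff reduction via $(\mathrm P_2)$, factorisation of $\{\gamma_0\in\mathbf D_\gamma\}$ into independent past and future cone events via $(\mathrm P_3)$, law of large numbers plus Borel--Cantelli to confine the walk in a wedge, and a finite-conditioning/prepending step to handle the initial segment) is exactly the route the paper takes. However, there is one genuine gap in Step~3: you assert that $\EiSAPP(\mathbf x(\gamma_{\bfr_1}))=0$ follows from ``the horizontal reflection symmetry of $\PiSAPP$''. On this rotated lattice that symmetry does not exist. A PP-bridge starts and ends on \emph{positive} (north-east/south-west) mid-edges, and reflection through a vertical axis turns positive mid-edges into negative ones, so the reflection of a PP-bridge is an NN-bridge rather than a PP-bridge; the measure $\PiSAPP$ is therefore not reflection-invariant and the horizontal drift of an irreducible PP-bridge is not zero by symmetry. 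This is precisely the point at which the present setting diverges from Proposition~16 of~\cite{Beaton2013Critical}, and the paper isolates it as a separate statement (Lemma~\ref{lem:expected_x_0}).

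Establishing $\EiSAPP(\mathbf x(\gamma_{|\gamma|}))=0$ requires real work: the paper introduces a finer decomposition of irreducible PP-bridges into \emph{x-irreducible} P- and N-bridges, uses the fact that the set of PN-bridges of fixed length \emph{is} symmetric (under reflection through the real axis) to get zero expected displacement for the x-irreducible PN-pieces, adapts Kesten's relation to define a probability measure on x-irreducible P-bridges, and then computes $\EiSAPP(\mathbf x(\gamma_{|\gamma|}))$ by conditioning on the number of x-irreducible pieces and summing a geometric series, obtaining exact cancellation. Without this (or some substitute argument for zero drift), your law-of-large-numbers step gives only $X_k/k\to\EiSAPP(X_1)$, which could be a nonzero constant, in which case the trajectory would exit any fixed upward cone and $p$ would be $0$. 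Everything else in your proposal matches the paper's proof and is sound.
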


Before proving this proposition we present a useful lemma:
\begin{lem}\label{lem:expected_x_0}
If $\EiSAPP(\height(\gamma))<\infty$, then $\EiSAPP(\mathbf x(\gamma_{|\gamma|}))=0$.
\end{lem}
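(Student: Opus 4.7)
My plan is to establish the identity by lifting the question to the bi-infinite walk under $\PZiSAPP$ and exploiting the rotational symmetry of Proposition~\ref{prop:ergodic_etc}.

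First I would observe that $\mu_x := \EiSAPP(\mathbf{x}(\gamma_{|\gamma|}))$ is finite. Since each irreducible PP-bridge $\gamma$ satisfies $|\mathbf{x}(\gamma_{|\gamma|})| \le \tfrac{3}{2}\width(\gamma)$, and Proposition~\ref{prop:H_W_finite} guarantees $\EiSAPP(\width(\gamma)) < \infty$ under the standing assumption $\EiSAPP(\height(\gamma)) < \infty$, the mean $\mu_x$ exists in $\mathbb{R}$.

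Second, I would work with the bi-infinite walk $\gamma$ under $\PZiSAPP$. By (P$_1$) the increments $\gamma_{\bfr_{k+1}} - \gamma_{\bfr_k}$ are iid and distributed as the $\PiSAPP$-endpoint, with common $\mathbf{x}$-mean $\mu_x$. By (P$_3$), the past and future are independent, and the rotated past has the same law as the future. Since $\sigma$ is the rotation by $\pi$ about $a_0 = 0$, applying $\sigma$ to any position negates both coordinates. The distributional identity $\sigma\gamma_{\bfr_{-1}} \stackrel{d}{=} \gamma_{\bfr_1}$ thus reads $-\gamma_{\bfr_{-1}} \stackrel{d}{=} \gamma_{\bfr_1}$.

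Finally, I would combine this symmetry with the independence of past and future to extract $\mu_x = 0$. The main obstacle is that the marginal identity $\mathbb{E}[-\mathbf{x}(\gamma_{\bfr_{-1}})] = \mathbb{E}[\mathbf{x}(\gamma_{\bfr_1})]$ is merely a restatement of the iid property of the increments and does not by itself force $\mu_x = 0$. The stronger conclusion requires applying (P$_3$) to joint functionals of past and future increments, using the independence to convert the rotation-invariance into a genuine sign-reversal identity for the $\mathbf{x}$-drift. This step is more delicate than the analogous one in~\cite{Beaton2013Critical}, where the lattice admitted a direct reflection symmetry through $a_0$ that is absent in the rotated orientation considered here; the $\pi$-rotation replacing that reflection forces one to exploit the full joint structure of (P$_1$)--(P$_3$) rather than a pointwise geometric involution on iSAPP.
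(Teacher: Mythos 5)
Your proposal does not close the gap you yourself identify, and that gap is fatal to the approach. The composition of time-reversal with the $\pi$-rotation $\sigma$ acts as the \emph{identity} on the law of a renewal increment: reversing $\gamma^{[0]}$ negates its displacement, and rotating by $\pi$ negates it again, so $(\mathrm P_3)$ only ever returns the tautology $\mu_x=\mu_x$. This is not a deficiency of looking at first-moment marginals that could be repaired by ``joint functionals of past and future increments'': properties $({\rm P}_1)$--$({\rm P}_3)$ hold for \emph{any} process built by concatenating i.i.d.\ pieces from a class closed under the reverse-and-rotate involution, including ones with a deterministic increment of nonzero horizontal component (reversal plus $\pi$-rotation fixes such an increment). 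Since the hypotheses you propose to use are consistent with $\mu_x\neq 0$, no argument from them alone can prove $\mu_x=0$. The only correct parts of the proposal are the integrability observation $|\mathbf x(\gamma_{|\gamma|})|\leq\tfrac32\width(\gamma)$ together with Proposition~\ref{prop:H_W_finite}, and the (accurate) diagnosis that the vertical-reflection symmetry available in~\cite{Beaton2013Critical} is absent here.

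The missing idea is a symmetry that genuinely reverses the sign of the horizontal displacement, and the paper finds one at a finer level of decomposition. One introduces P- and N-bridges according to the orientation of the starting mid-edge, together with x-renewal points and x-irreducible pieces. A PN-bridge (or NP-bridge) admits an involution --- reflect through a horizontal line and reverse the walk --- which maps the class to itself and negates $\mathbf x(\gamma_n)-\mathbf x(\gamma_0)$; hence x-irreducible PN- and NP-bridges have zero mean horizontal displacement. An irreducible PP-bridge then decomposes uniquely into x-irreducible pieces which are alternately reflected through the vertical axis upon concatenation, the number of pieces being essentially geometric with parameter $\mathbb{P}_{\rm R}$. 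Summing the conditional expectations over the number of pieces (each reflected interior PP-piece contributing $-\mathbb{E}_{\width}$, the boundary PN-pieces contributing $0$) yields $\EiSAPP(\mathbf x(\gamma_{|\gamma|}))=\mathbb{P}_{\rm R}\mathbb{E}_{\width}-(1-\mathbb{P}_{\rm R})^2\mathbb{E}_{\width}\sum_{k\geq1}k\,\mathbb{P}_{\rm R}^{k}/\mathbb{P}_{\rm R}=0$. None of this combinatorial structure appears in your proposal, and it cannot be substituted by the ergodic-theoretic properties of $\PZiSAPP$.
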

\begin{proof}
This result is trivial for objects whose law is invariant under reflection through the imaginary axis, but unfortunately PP-bridges do not satisfy this criterion. Instead, we introduce a further decomposition of PP-bridges. In this proof we will be considering walks which start on a negative mid-edge, but hope that the interpretations of $\mathbf x$ and $\mathbf y$ coordinates remain clear. (The starting point of a walk will usually be assumed to be the origin.)

We define a \emph{P-bridge} (resp.~\emph{N-bridge}) $\gamma$ to be a SAW of length $n$ which starts on a positive (resp.~negative) mid-edge and ends on a positive or negative mid-edge, and satisfies $\mathbf y(\gamma_0) < \mathbf y(\gamma_i) < \mathbf y(\gamma_n)$ for $0<i<n$. An \emph{x-renewal point} of a P-bridge (resp.~N-bridge) $\gamma$ is one of $\{\gamma_0,\gamma_n\}$, or a point $\gamma_i\in\gamma$ such that $\gamma_{[0,i]}$ is a P-bridge (resp.~N-bridge) and $\gamma_{[i,n]}$ is either is a P- or N-bridge. Clearly PP-bridges are a subset of P-bridges, and the renewal points of a PP-bridge are a subset of its x-renewal points. A P-bridge or N-bridge $\gamma$ is \emph{x-irreducible} if its only x-renewal points are $\gamma_0$ and $\gamma_n$. (Note that for PP-bridges, x-irreducibility is a stronger condition than reducibility.)

We then define a \emph{PN-bridge} to be a P-bridge which ends on a negative mid-edge. Likewise, an \emph{NP-bridge} is an N-bridge which ends on a positive mid-edge, and an NN-bridge is an N-bridge which ends on a negative mid-edge. Let xSAP denote the set of x-irreducible P-bridges, and similarly define xSAN, xSAPP, xSANN, xSAPN and xSANP. In Figure~\ref{fig:examples_PP_etc} we illustrate examples of some of these objects.

\begin{figure}
\centering
\begin{picture}(400,120)
\put(0,20){\includegraphics[width=400pt]{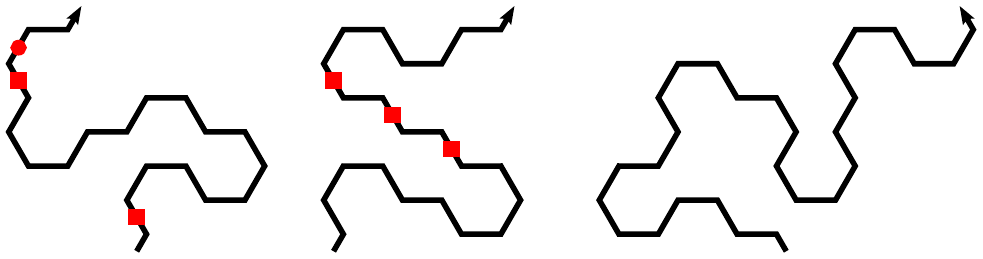}}
\put(50,0){(a)}
\put(170,0){(b)}
\put(310,0){(c)}
\end{picture}
\caption{(a) and (b): PP-bridges, and (c): an NN-bridge. For the PP-bridges, circles indicate renewal points, and squares indicate x-renewal points which are not also renewal points. Note that (b) is irreducible but not x-irreducible, while (c) is x-irreducible.}
\label{fig:examples_PP_etc}
\end{figure}

We denote by $\mathbb{P}^{\rm PN}_n$ the uniform probability measure on PN-bridges of length $n$. First, observe that 
\begin{equation}\label{eqn:Ex_NP_PN}
\mathbb{E}^{\rm PN}_n(\mathbf x(\gamma_{n})) =0.
\end{equation}
To see this, note that the set of PN-bridges of length $n$ is invariant (up to translation) under reflection through the real axis. But now if $\tilde\gamma$ denotes the result of reflecting a PN-bridge $\gamma$ through the real axis, then $\mathbf x(\gamma_n) - \mathbf x(\gamma_0) = -(\mathbf x(\tilde\gamma_n) - \mathbf x(\tilde\gamma_0))$.

Next, we define the concatenation of two x-irreducible P-bridges $\gamma$ and $\gamma'$ as follows: if $\gamma\in{\rm xSAPP}$ then we simply join them so that $\gamma_{|\gamma|} = \gamma'_0$; while if $\gamma\in{\rm xSAPN}$ then we join $\gamma$ to the walk $\bar{\gamma}'$ obtained by reflecting $\gamma'$ through the imaginary axis. Clearly, any P-bridge can then be decomposed uniquely into a sequence of x-irreducible P-bridges or reflections of P-bridges. We can thus adapt Kesten's relation for irreducible bridges~\cite{Kesten1963Number} to x-irreducible P-bridges, and we obtain
\[\sum_{\gamma\in{\rm xSAP}} \x^{|\gamma|} = \sum_{\gamma\in{\rm xSAPP}\,\cup\,{\rm xSAPN}}\x^{|\gamma|} = 1.\]
So we can define a probability measure $\mathbb{P}_{\rm xSAP}$ on x-irreducible P-bridges by $\mathbb{P}_{\rm xSAP}(\gamma) = \x^{|\gamma|}$. It then follows from~\eqref{eqn:Ex_NP_PN} that
\[\mathbb{E}_{\rm xSAP}(\mathbf x(\gamma_{|\gamma|})\,|\,\gamma\in{\rm xSAPN})= 0,\]
that is, the expected difference between the $\mathbf x$-coordinates of the start and end of an x-irreducible PN-bridge is 0. By symmetry, the same applies to x-irreducible NP-bridges.

Now by Proposition~\ref{prop:H_W_finite}, we have $\EiSAPP(\width(\gamma))<\infty$. Since xSAPP $\subset$ iSAPP, it follows that $\mathbb{E}_{\width}:=\mathbb{E}_{\rm xSAP}(\width(\gamma)\,|\,\gamma\in{\rm xSAPP})<\infty$.

Lastly, we define $\mathbb{P}_{{\rm R}}:= \mathbb{P}_{\rm xSAP}(\gamma\in{\rm xSAPP})$; that is, $\mathbb{P}_{{\rm R}}$ is the probability that a random $\gamma\in{\rm xSAP}$ will end on a positive mid-edge.

Now any irreducible PP-bridge is either x-irreducible, or can be decomposed uniquely into a sequence of concatenated x-irreducible P-bridges or N-bridges. We thus have a new way to generate a random irreducible PP-bridge $\gamma$, as follows:
\begin{itemize}
\item Take a random sample $\gamma^{(1)}$ of $\mathbb{P}_{\rm xSAP}$. If $\gamma^{(1)}\in{\rm xSAPP}$, then $\gamma=\gamma^{(1)}$.
\item If instead $\gamma^{(1)}\in{\rm xSAPN}$, take another sample $\gamma^{(2)}$ of $\mathbb{P}_{\rm xSAP}$, and concatenate $\gamma^{(1)}$ and the reflection $\bar{\gamma}^{(2)}$ of $\gamma^{(2)}$. (Note that $\bar{\gamma}^{(2)}$ is an N-bridge.) If the resulting walk $\gamma^{(1)}\circ\bar{\gamma}^{(2)}\in{\iSAPP}$ (i.e.~if it ends on a positive mid-edge), then $\gamma=\gamma^{(1)}\circ\bar{\gamma}^{(2)}$.
\item Otherwise, continue in this fashion by repeatedly sampling $\mathbb{P}_{\rm xSAP}$ and attaching the reflection to the current walk, and stop when the walk ends on a positive mid-edge.
\end{itemize}
Since all the samples $\gamma^{(1)},\gamma^{(2)},\ldots$ are independent and the sum of their lengths is the length of $\gamma$, the probability distribution of walks obtained in this way is
\begin{align*}
\mathbb{P}_{\rm xSAP}(\gamma^{(1)},\gamma^{(2)},\ldots) = \mathbb{P}_{\rm xSAP}(\gamma^{(1)}) \mathbb{P}_{\rm xSAP}(\gamma^{(2)})\cdots
= \x^{|\gamma^{(1)}|}\x^{|\gamma^{(2)}|}\cdots
= \x^{|\gamma|}
= \PiSAPP(\gamma).
\end{align*}
Now
\begin{multline}\label{eqn:expectedx_decompose}
\EiSAPP(\mathbf x(\gamma_{|\gamma|})) = \sum_{n=1}^\infty\PiSAPP(\gamma \text{ decomposes into } n \text{ x-irreducible P-bridges})\\\cdot\EiSAPP(\mathbf x(\gamma_{|\gamma|})\,|\,\gamma \text{ decomposes into } n \text{ x-irreducible P-bridges}).
\end{multline}
We have
\begin{multline*}
\PiSAPP(\gamma \text{ decomposes into } n \text{ x-irreducible P-bridges})\\
\begin{split}
&= \begin{cases}\mathbb{P}_{\rm xSAP}(\gamma^{(1)}\in{\rm xSAPP}) & \text{if } n=1\\ \mathbb{P}_{\rm xSAP}(\gamma^{(1)}\in{\rm xSAPN},\gamma^{(i)}\in{\rm xSAPP} \text{ for } 2\leq i\leq n-1, \gamma^{(n)}\in{\rm xSAPN}) & \text{if } n\geq 2\end{cases}\\
&= \begin{cases}\mathbb{P}_{\rm R} & \text{if } n=1\\ \mathbb{P}_{\rm R}^{n-2}(1-\mathbb{P}_{\rm R})^2 &\text{if }n\geq 2,\end{cases}
\end{split}
\end{multline*}
and
\begin{align*}
\EiSAPP(&\mathbf x(\gamma_{|\gamma|})\,|\,\gamma \text{ decomposes into } n \text{ x-irreducible P-bridges}) \\
&= \begin{cases}\mathbb{E}_{\rm xSAP}(\mathbf x(\gamma^{(1)}_{|\gamma^{(1)}|})\,|\,\gamma^{(1)}\in{\rm xSAPP}) & \text{if }n=1\\
\mathbb{E}_{\rm xSAP}(\mathbf x(\gamma^{(1)}_{|\gamma^{(1)}|})\,|\,\gamma^{(1)}\in{\rm xSAPN}) - \mathbb{E}_{\rm xSAP}(\mathbf x(\gamma^{(2)}_{|\gamma^{(2)}|})\,|\,\gamma^{(2)}\in{\rm xSAPN}) & \text{if } n=2\\
 \begin{gathered}\hspace{-2pt}\mathbb{E}_{\rm xSAP}(\mathbf x(\gamma^{(1)}_{|\gamma^{(1)}|})\,|\,\gamma^{(1)}\in{\rm xSAPN})\hspace{5cm} \\ - \sum_{i=2}^{n-1}\mathbb{E}_{\rm xSAP}(\mathbf x(\gamma^{(i)}_{|\gamma^{(i)}|})\,|\,\gamma^{(i)}\in{\rm xSAPP}) \\\shoveright{- \mathbb{E}_{\rm xSAP}(\mathbf x(\gamma^{(n)}_{|\gamma^{(n)}|})\,|\,\gamma^{(n)}\in{\rm xSAPN})}\end{gathered} & \text{if }n\geq 3
\end{cases}\\
\intertext{(The second term in the $n=2$ case and the second and third terms in the $n\geq3$ case are negative because every x-irreducible P-bridge in the decomposition, except for the first piece, gets reflected.)}
&= \begin{cases}\mathbb{E}_{\width} &\text{if }n=1\\ 0 &\text{if } n=2 \\ -\displaystyle\sum_{i=2}^{n-1}\mathbb{E}_{\width} &\text{if }n\geq3\end{cases}\\
&= -(n-2)\mathbb{E}_{\width}.
\end{align*}
So returning to~\eqref{eqn:expectedx_decompose}, we have
\begin{align*}
\EiSAPP(\mathbf x(\gamma_{|\gamma|})) &= \mathbb{P}_{\rm R}\mathbb{E}_{\width} - \sum_{n=2}^\infty (n-2)\mathbb{P}_{\rm R}^{n-2}(1-\mathbb{P}_{\rm R})^2\mathbb{E}_{\width}\\
&= \mathbb{P}_{\rm R}\mathbb{E}_{\width} - (1-\mathbb{P}_{\rm R})^2\mathbb{E}_{\width}(\mathbb{P}_{\rm R} + 2 \mathbb{P}_{\rm R}^2 + 3 \mathbb{P}_{\rm R}^3+\cdots)\\
&= \mathbb{P}_{\rm R}\mathbb{E}_{\width} - (1-\mathbb{P}_{\rm R})^2\mathbb{E}_{\width}\cdot\frac{\mathbb{P}_{\rm R}}{(1-\mathbb{P}_{\rm R})^2}\\
&=0.
\end{align*}
\end{proof}

\begin{proof}[Proof of Proposition~\ref{prop:diamond_positive_density}]
We begin by proving $\PZiSAPP(\gamma_0\in {\mathbf D}_\gamma)>0$. By Lemma~\ref{lem:expected_x_0}, we have $\EiSAPP(\mathbf x(\gamma_{|\gamma|}))=0$. The law of large numbers thus implies that, $\PNiSAPP$-almost surely, $\mathbf x(\gamma_{\bfr_n})/n\to0$. Since the expected width of irreducible PP-bridges is finite, a classical use of the Borel-Cantelli Lemma shows that $\width(\gamma_{[\bfr_n,\bfr_{n+1}]})/n\to0$ almost surely. Thus
\[\frac{1}{n}(|\mathbf x(\gamma_{\bfr_n})| + \width(\gamma_{[\bfr_n,\bfr_{n+1}]}))\to0 \qquad\text{a.s.}\]
Since
\[\width(\gamma_{[0,\bfr_n]}) \leq 2\max\big\{\textstyle{\frac{2}{3}}|\mathbf x(\gamma_{\bfr_k})| + \width(\gamma_{[\bfr_k,\bfr_{k+1}]}),0\leq k\leq n-1\big\},\]
we find that, $\PNiSAPP$-almost surely, $\width(\gamma_{[0,\bfr_n]})/n\to0$.

On the other hand, applying the law of large numbers to $\mathbf y(\gamma_{\bfr_n})$, we obtain that $\PNiSAPP$-almost surely, $\mathbf y(\gamma_{\bfr_n})/n\to\frac{\sqrt{3}}{2}\EiSAPP(\height(\gamma))>0$.

Now define
\[I(\gamma) := \inf_k\left(\mathbf y(\gamma_k)+\textstyle{\frac{\sqrt{3}}{4}}-\sqrt{3}\left|\mathbf x(\gamma_k)+\frac{1}{4}\right|\right),\]
and note that for an infinite bridge $\gamma=(\gamma_0,\gamma_1,\ldots)$, the origin $\gamma_0$ is a diamond point if and only if $I(\gamma)\geq0$.
Furthermore, by the discussion at the start of this proof, $I(\gamma)$ is finite (that is, $I(\gamma)>-\infty$) $\PNiSAPP$-almost surely. Let $K\in\mathbb{N}$ be such that $p_K:=\PNiSAPP(I(\gamma)\geq-K)>0$. We are going to show that
\begin{equation}\label{eqn:show_p0>0}
p_0 \geq \x^{2K}p_K >0.
\end{equation}
To prove~\eqref{eqn:show_p0>0}, consider an experiment under which the law $\PNiSAPP$ is constructed by first concatenating $K$ independent samples of $\PiSAPP$ (starting from $a_0$) and then an independent sample $\gamma'$ of $\PNiSAPP$. If each of the $K$ samples happens to be a walk of length $2$ going from $a_0$ to $a_0+\sqrt{3}\ii$ and $I(\gamma')\geq-K$, then $\gamma$ satisfies $I(\gamma)\geq0$. The probability that the $i^{\rm th}$ sample of $\PiSAPP$ is a walk of length $2$ going from $a_0$ to $a_0+\sqrt{3}\ii$ is $\x^2$. Thus, the experiment behaves as described with probability $\x^{2K}p_K$, and we obtain~\eqref{eqn:show_p0>0}, and hence that $\PNiSAPP(\gamma_0\in{\mathbf D}_\gamma)>0$.

Using Property $({\rm P}_3)$ of Proposition~\ref{prop:ergodic_etc}, we deduce that
\[\delta:=\PZiSAPP(\gamma_0\in{\mathbf D}_\gamma) = \left(\PNiSAPP(\gamma_0\in{\mathbf D}_\gamma)\right)^2 >0.\]
The shift $\tau$ being ergodic (cf.~Property $({\rm P}_2)$ of Proposition~\ref{prop:ergodic_etc}), the ergodic theorem, applied to $\mathbbm{1}_{\gamma_0\in{\mathbf D}_\gamma}$, gives
\[\PZiSAPP\left(\lim_{n\to\infty}\frac{|{\mathbf D}_\gamma\cap\{0,\ldots,\bfr_n(\gamma)\}|}{n}=\delta\right)=1.\]
Let $\gamma$ be a bi-infinite bridge and denote $\gamma^+ = \gamma_{[0,\infty)}$. Then for $n\geq0$, $\bfr_n(\gamma) = \bfr_n(\gamma^+)$, and
\[{\mathbf D}_\gamma\cap\{0,\ldots,\bfr_n(\gamma)\} = {\mathbf D}_\gamma\cap\{0,\ldots,\bfr_n(\gamma^+)\} \subset {\mathbf D}_{\gamma^+}\cap\{0,\ldots,\bfr_n(\gamma^+)\}\]
since all diamond points of $\gamma$ are diamond points of $\gamma^+$. This implies that
\begin{align*}
\PNiSAPP\left(\liminf_{n\to\infty}\frac{|{\mathbf D}_\gamma\cap\{0,\ldots,\bfr_n(\gamma)\}|}{n}\geq\delta\right)
&=\PZiSAPP\left(\liminf_{n\to\infty}\frac{|{\mathbf D}_{\gamma^+}\cap\{0,\ldots,\bfr_n(\gamma)\}|}{n}\geq\delta\right)\\
&\geq \PZiSAPP\left(\liminf_{n\to\infty}\frac{|{\mathbf D}_\gamma\cap\{0,\ldots,\bfr_n(\gamma)\}|}{n}\geq\delta\right)\\
&=1.
\end{align*}
\end{proof}

We are now almost ready to complete the proof of the main result of this appendix. The final lemma here serves the same role as Lemma 17 in~\cite{Beaton2013Critical}. We henceforth assume (for a contradiction) that $\EiSAPP(\height(\gamma))<\infty$, and thus let $\nu>\EiSAPP(\height(\gamma))$. Also let $0<\epsilon<\delta/20$, where $\delta$ satisfies Proposition~\ref{prop:diamond_positive_density}.

Let $\Omega^+$ denote the set of semi-infinite walks in the upper half-plane. That is, $\phi=(\phi_0, \phi_1, \ldots) \in \Omega^+$ if and only if  $y(\phi_i) > 0$ for $i > 0$. For $\phi \in \Omega^+$ and $\gamma$ a  finite PP-bridge, we write $\gamma\triangleleft \phi$ if  $\phi_{[0,|\gamma|]}=\gamma$ and $\phi_{| \gamma |}$ is a renewal
 point of $\phi$. Note that 
\begin{equation}\label{eqn:eqtrileft} 
\x ^{|\gamma|}=\PNiSAPP(\phi\in \Omega^+: \gamma\triangleleft \phi) \, .
\end{equation}

Let  $\overline{\text{SAPP}}_n$ denote the set of finite PP-bridges $\gamma$ with exactly $n+1$ renewal points (meaning that $\mathbf r_n(\gamma)=|\gamma|$) such that 
\begin{itemize}
\item[$({\rm C}_1)$] $\height(\gamma) \le  \nu  n$,
\item[$({\rm C}_2)$] $|\mathbf{D}_\gamma|\ge \delta n/2$.
\end{itemize}
Let us define $\overline{\text{SAPP}}_n^+=\{\phi\in \Omega^+ :\exists \gamma\in \overline{\text{SAPP}}_n\text{ such that }\gamma\triangleleft \phi\}$. That is, the prefix of $\phi$ consisting of its $n$ first irreducible bridges satisfies $({\rm C}_1)$ and $({\rm C}_2)$. It follows from~\eqref{eqn:eqtrileft} that
\begin{equation}\label{eqn:sarequiv}
\PNiSAPP\big( \overline{\text{SAPP}}_n^+\big)=\sum_{\gamma\in\overline{\text{SAPP}}_n}\x ^{|\gamma|} \, .
\end{equation}
The proof of the following is identical to that of Lemma~17 in~\cite{Beaton2013Critical}.

\begin{lem}\label{lem:PrN_barSAPPplus_to1}
Under the above assumptions we have, as $n\to\infty$,
\[\PNiSAPP(\overline{\rm{SAPP}}_n^+)\to1.\]
\end{lem}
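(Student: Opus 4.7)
The plan is to show that each of the defining conditions $({\rm C}_1)$ and $({\rm C}_2)$ holds with $\PNiSAPP$-probability tending to $1$, and then invoke a union bound. Note that membership $\phi\in \overline{\rm SAPP}_n^+$ is precisely the statement that the prefix $\phi_{[0,\bfr_n]}$ satisfies both $({\rm C}_1)$ and $({\rm C}_2)$, so it suffices to control these two events separately.

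For $({\rm C}_1)$, I would use that the height is additive over irreducible PP-bridges: writing $\gamma^{[1]},\ldots,\gamma^{[n]}$ for the first $n$ i.i.d.~$\PiSAPP$-samples that make up $\phi_{[0,\bfr_n]}$, one has
\[
\height(\phi_{[0,\bfr_n]})=\sum_{i=1}^n \height(\gamma^{[i]}).
\]
Since $\EiSAPP(\height(\gamma))<\nu$ by our choice of $\nu$, the strong law of large numbers implies $\height(\phi_{[0,\bfr_n]})/n \to \EiSAPP(\height(\gamma))<\nu$ almost surely, and in particular $\PNiSAPP(\height(\phi_{[0,\bfr_n]})\le \nu n)\to 1$.

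For $({\rm C}_2)$, I would appeal directly to Proposition~\ref{prop:diamond_positive_density}, which gives that $\PNiSAPP$-almost surely
\[
\liminf_{n\to\infty}\frac{|\mathbf{D}_\gamma\cap\{0,\ldots,\bfr_n\}|}{n}\ge \delta > \delta/2,
\]
so the event $\{|\mathbf{D}_\gamma\cap\{0,\ldots,\bfr_n\}|\ge \delta n/2\}$ holds eventually almost surely, and hence has $\PNiSAPP$-probability tending to $1$. Combining the two via $\PNiSAPP(A_n\cap B_n)\ge \PNiSAPP(A_n)+\PNiSAPP(B_n)-1$ yields $\PNiSAPP(\overline{\rm SAPP}_n^+)\to 1$.

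There is essentially no obstacle here: the real work was done in Proposition~\ref{prop:diamond_positive_density} (which required the control of the horizontal drift via Lemma~\ref{lem:expected_x_0} and the ergodic theorem). The present lemma is simply the packaging step that combines the quantitative diamond-point density with the law of large numbers for heights, producing a typical prefix with a linear number of diamond points and sub-$\nu$ linear height.
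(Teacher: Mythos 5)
Your proof is correct and follows essentially the same route as the paper's (which simply defers to Lemma~17 of~\cite{Beaton2013Critical}): the law of large numbers for the additive heights handles $({\rm C}_1)$, Proposition~\ref{prop:diamond_positive_density} handles $({\rm C}_2)$, and the two events are combined by a union bound. No gaps.
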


\begin{proof}[Proof of Theorem~\ref{thm:appendix_thm}]
This proof is essentially identical to that of Theorem 10 in~\cite{Beaton2013Critical}, and we direct interested readers to that article for further details. We we will mention just one important detail, which is the definition of the $\mathsf{StickBreak}$ operation. (See Figure~\ref{fig:stickbreak} for an illustration.) For a finite PP-bridge $\gamma$ with distinct diamond points at indices $0<{\bf d}_i<{\bf d}_j<|\gamma|$, we define
\[\mathsf{StickBreak}_{i,j}(\gamma) = \gamma_{[0,{\bf d}_i]}\circ \mathbbm{r}\circ \rho(\gamma_{[{\bf d}_i,{\bf d}_j]})\circ\mathbbm{l}\circ\gamma_{[{\bf d}_j,|\gamma|]},\]
where $\circ$ stands for concatenation, $\rho$ is the clockwise rotation through angle $\pi/3$, $\mathbbm{r}$ is a single right turn and $\mathbbm{l}$ is a single left turn. The definition of diamond points implies that $\mathsf{StickBreak}_{i,j}(\gamma)$ is not only self-avoiding, but also a PP-bridge.

The essential idea of the proof is that by Proposition~\ref{prop:diamond_positive_density}, there will always be available diamond points in a walk on which to perform the $\mathsf{StickBreak}$ operation. But performing the $\mathsf{StickBreak}$ operation increases the width of a walk, and this ultimately contradicts Proposition~\ref{prop:H_W_finite}. Thus we are forced to conclude $\EiSAPP(\height(\gamma))=\infty$, and then by Lemma~\ref{lem:RT_expectedheight} we have $\lim_{T\to\infty}\PPT(\x)=0$.
\end{proof}

\bibliographystyle{amsplain}
\bibliography{rotated}

\end{document}